\documentclass[11pt]{article}

\usepackage[T1]{fontenc}
\usepackage[utf8]{inputenc}

\usepackage{amsmath,amssymb,amsfonts}
\usepackage{amsthm}
\usepackage{graphicx}
\usepackage{hyperref}

\theoremstyle{plain}
\newtheorem{theorem}{Theorem}
\newtheorem{lemma}[theorem]{Lemma}
\newtheorem{proposition}[theorem]{Proposition}
\newtheorem{corollary}[theorem]{Corollary}

\theoremstyle{definition}
\newtheorem{definition}[theorem]{Definition}

\theoremstyle{remark}
\newtheorem{remark}[theorem]{Remark}

\title{Entropy of pebble and space complexity}

\author{
J. Andres Montoya\\
Universidad Nacional de Colombia\\
\texttt{jamontoyaa@unal.edu.co}
}

\date{}

\begin{document}

\maketitle

\begin{abstract}
Let $\mathcal{L}$ denote the class Logspace and $\mathcal{NL}$ the class
NLogspace. We use $\log \mathcal{CFL}$ to denote the closure under logspace
reductions of the set of context-free languages. We prove that
$\mathcal{NL} \neq \log \mathcal{CFL}$. This result implies
$\mathcal{L} \neq \mathrm{P}$ and the stronger separation
$\mathcal{NL} \neq \mathrm{P}$.
\end{abstract}

\noindent
\textbf{Keywords:}
Computational complexity, Logspace, Context-free languages,
Entropy, Pebble automata.

This work was created entirely by the sole author; any remaining errors are
the sole responsibility of the author. This work received no funding.
\section{Introduction}

\begin{definition}
We use $\mathcal{CFL}$ to denote the class of context-free languages. We use 
$\log \mathcal{CFL}$ to denote the closure of $\mathcal{CFL}$ under logspace
reductions.
\end{definition}

Let $\mathcal{P}$ denote the class Ptime. We have:

\begin{theorem}
$\mathcal{NL}\subseteq \log \mathcal{CFL\subseteq P}.$
\end{theorem}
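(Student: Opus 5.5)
We prove the two inclusions $\mathcal{NL}\subseteq \log\mathcal{CFL}$ and $\log\mathcal{CFL}\subseteq\mathcal{P}$ separately. Both are classical. The first follows from a single complete problem, and the second from polynomial-time parsing combined with composition of reductions.

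\textbf{The inclusion $\mathcal{NL}\subseteq\log\mathcal{CFL}$.} It suffices to exhibit one $\mathcal{NL}$-complete problem, under logspace reductions, that is logspace reducible to a context-free language. Logspace reductions compose (this is the standard recomputation trick of Hartmanis, Lewis, Stearns), so every language in $\mathcal{NL}$ then reduces to that context-free language.

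We take directed $s$--$t$ reachability, $\mathrm{STCON}$. Given a digraph $G$ on vertices $1,\dots,n$ with $s=1$ and $t=n$, a logspace transducer outputs
\[
w(G)=\Big(\prod_{(u,v)\in E} a^{u}\,\#\,b^{v}\,\$\Big)\,\natural\, a^{1}\,\|\, b^{n},
\]
with unary vertex codes. The intended language $C$ consists of strings in which a chain of listed edges connects the marked source to the marked target.

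Recognising such chains directly requires matching endpoints across arbitrary positions, which one pushdown does not do in general. We therefore choose a simpler route and reduce instead to Greibach's hardest context-free language, or more simply we guess the path in the output. The transducer cannot guess, so we use the standard alternative: reduce $\mathrm{STCON}$ to the language of one-counter or one-stack simulations of an $\mathcal{NL}$ machine. Concretely, take a fixed $\mathcal{NL}$ machine $M$ on input $x$. Its configurations have length $O(\log|x|)$, so there are polynomially many, and the transducer can write the whole configuration graph as a list of transition blocks $c\,\#\,(c')^{R}$.

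A nondeterministic pushdown automaton then reads the list, guesses a sequence of blocks, and checks with its stack that the successor configuration of each chosen block equals the start configuration of the next chosen block. The reversal $(c')^{R}$ makes a push-then-pop comparison possible when adjacent chosen blocks are arranged suitably. The transducer outputs the list repeated polynomially many times, so that for any path the needed blocks appear in order.

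The main obstacle is exactly this matching of consecutive configurations with a single stack. The plan is to follow Sudborough's construction, which characterises $\log\mathcal{CFL}$ as auxiliary pushdown automata running in polynomial time and shows that these subsume $\mathcal{NL}$ machines: an $\mathcal{NL}$ machine is such an automaton that never uses its stack. With that characterisation the inclusion is immediate, so I would invoke it rather than hand-build the grammar.

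\textbf{The inclusion $\log\mathcal{CFL}\subseteq\mathcal{P}$.} Let $A\le_{\log}L$ with $L$ context-free. By the CYK algorithm, membership in $L$ is decidable in time $O(m^3)$. A logspace reduction $f$ runs in polynomial time, since a machine using $O(\log n)$ space has only polynomially many configurations. Hence $|f(x)|\le p(|x|)$, and $x\mapsto \mathrm{CYK}(f(x))$ runs in time $O(p(|x|)^3)$. Since $\mathcal{P}$ is closed under polynomial-time reductions, $A\in\mathcal{P}$.
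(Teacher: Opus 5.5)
Your argument is correct. The second inclusion follows the paper exactly: context-free membership in time $O(m^3)$ (Younger/CYK) composed with a logspace reduction, which you rightly note runs in polynomial time and so has polynomial-length output.

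For $\mathcal{NL}\subseteq\log\mathcal{CFL}$ you take a different route. The paper cites Sudborough for the existence of a context-free language that is $\mathcal{NL}$-complete, so every $\mathcal{NL}$ language reduces to one fixed CFL. You instead invoke Sudborough's characterization of $\log\mathcal{CFL}$ as the languages accepted by nondeterministic logspace auxiliary pushdown automata running in polynomial time. You then view an $\mathcal{NL}$ machine as such an automaton that never touches its stack. You should add that the $\mathcal{NL}$ machine can be clocked to polynomial time with an $O(\log n)$-bit step counter, since a shortest accepting path has length at most the number of configurations. Your route makes the inclusion a one-liner, but it rests on a much heavier theorem (the hard direction of that characterization), whereas the paper needs only a single hard CFL.

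In fact, the construction you abandoned already proves that lighter fact from scratch. The stack never holds more than one configuration: pop while reading the $c$ of the next chosen block, push while reading $(c')^R$, and read unchosen blocks without touching the stack. So matching consecutive configurations with one stack is not an obstacle. Add a leading block carrying $(c_0)^R$, mark accepting targets, and repeat the edge list $N$ times, where $N$ is the number of configurations; then that fixed pushdown language is $\mathcal{NL}$-hard. Committing to this construction would give a self-contained proof and remove the detour through $\mathrm{STCON}$ and Greibach's language.
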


\begin{proof}
There exist context-free languages that are $\mathcal{NL}$-complete; see 
\cite{Sud}. Then, $\mathcal{NL}\subseteq \log \mathcal{CFL}$.

The class $\mathcal{CFL}$ is contained in Dtime$\left( n^{3}\right) $; see 
\cite{Younger}. Then, $\log \mathcal{CFL\subseteq P}.$
\end{proof}

The class $\mathcal{NL}$ can be characterised as the union of the levels of
the nondeterministic pebble hierarchy; see Theorem \ref{Nondeterministic}.
We exploit this fact to attack the problem $\mathcal{NL}$ $vs.$ $\log 
\mathcal{CFL}.$ We show that $\mathcal{NL}$ is strictly contained in $\log 
\mathcal{CFL}$. To this end, we construct a sequence of languages contained
in $\mathcal{NL}$, denoted by $\left\{ \text{RA}_{k}\right\} _{k\geq 0}$,
and satisfying the following properties:

\begin{enumerate}
\item There exists a language RA$_{\infty }$ in $\log \mathcal{CFL}$ such
that, for all $m\geq 2$, if RA$_{\infty }$ belongs to the $m$-th level of
the nondeterministic pebble hierarchy, denoted $\mathcal{NREG}_{m},$ then
the entire sequence $\left\{ \text{RA}_{k}\right\} _{k\geq 0}$ is contained
in $\mathcal{NREG}_{m}$. We say that RA$_{\infty }$ is an \textit{upper bound%
} for $\left\{ \text{RA}_{k}\right\} _{k\geq 0}$ with respect to the
nondeterministic pebble hierarchy.

\item The sequence $\left\{ \text{RA}_{k}\right\} _{k\geq 0}$ is \textit{high%
} in the nondeterministic pebble hierarchy, meaning that, for every $m\geq 1$%
, there exists $k$ such that RA$_{k}$ does not belong to $\mathcal{NREG}_{m}$%
.
\end{enumerate}

It is easy to check that RA$_{\infty }$ cannot belong to $\mathcal{NL}$.

\begin{lemma}
Suppose that $\{\mathrm{RA}_k\}_{k \geq 0}$ is high in the
nondeterministic pebble hierarchy and that
$\mathrm{RA}_{\infty}$ is an upper bound for this sequence.
Then the language $\mathrm{RA}_{\infty}$ does not belong to
$\mathcal{NL}$.
\end{lemma}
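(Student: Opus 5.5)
The argument is by contradiction and rests on the characterisation of $\mathcal{NL}$ as the union of the levels of the nondeterministic pebble hierarchy (Theorem \ref{Nondeterministic}), i.e.
\[
\mathcal{NL}=\bigcup_{m\geq 1}\mathcal{NREG}_{m}.
\]
So suppose that $\mathrm{RA}_{\infty}\in\mathcal{NL}$. By the characterisation there is some $m_{0}\geq 1$ with $\mathrm{RA}_{\infty}\in\mathcal{NREG}_{m_{0}}$.

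The first step aligns this index with the range on which the upper-bound property is assumed. That property is stated only for $m\geq 2$. I will use that the levels are nested, $\mathcal{NREG}_{m}\subseteq\mathcal{NREG}_{m+1}$: an automaton with $m$ pebbles is simulated by one with $m+1$ pebbles that never uses the extra pebble. Put $m=\max\{m_{0},2\}$. Then $\mathrm{RA}_{\infty}\in\mathcal{NREG}_{m}$ with $m\geq 2$.

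The second step applies the hypothesis that $\mathrm{RA}_{\infty}$ is an upper bound for $\{\mathrm{RA}_k\}_{k\geq 0}$. Since $\mathrm{RA}_{\infty}\in\mathcal{NREG}_{m}$, every $\mathrm{RA}_k$ lies in $\mathcal{NREG}_{m}$.

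The third step applies highness at this same $m$. It gives some $k$ with $\mathrm{RA}_k\notin\mathcal{NREG}_{m}$, contradicting the second step. Hence $\mathrm{RA}_{\infty}\notin\mathcal{NL}$.

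There is no substantial obstacle: the lemma is a formal consequence of the two definitions and the pebble characterisation of $\mathcal{NL}$. The only point needing care is the mismatch between the index ranges ($m\geq 2$ in the upper-bound property, $m\geq 1$ in highness and in the union). This is handled by the monotonicity of the hierarchy, which I take as immediate from the definition of pebble automata; if the paper's definition made this awkward, I would instead note that highness at $m\geq 2$ suffices and choose $m\geq 2$ directly.
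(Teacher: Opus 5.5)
Your proof is correct and follows the same contradiction argument as the paper: if $\mathrm{RA}_{\infty}\in\mathcal{NL}$, it lies in some level $\mathcal{NREG}_{m}$; the upper-bound property then puts every $\mathrm{RA}_k$ in $\mathcal{NREG}_{m}$, which contradicts highness. Your use of the nested chain $\mathcal{NREG}_{m}\subseteq\mathcal{NREG}_{m+1}$ to guarantee $m\geq 2$ fills a small index gap that the paper's proof passes over silently.
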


\begin{proof}
Suppose that$\left\{ \text{RA}_{k}\right\} _{k\geq 0}$ is \textit{high} in
the nondeterministic pebble hierarchy and that RA$_{\infty }\in \mathcal{NL}$%
. Then, there exists $m$ such that RA$_{\infty }\in \mathcal{NREG}_{m}$.
This implies that the entire sequence $\left\{ \text{RA}_{k}\right\}
_{k\geq 0}$ is\ included in $\mathcal{NREG}_{m}$. We get a contradiction.
The lemma follows.
\end{proof}

We construct $L_{G_{0}}\in \mathcal{CFL}$ and show that $L_{G_{0}}$ is an
upper bound for every sequence of context-free languages; (see Section \ref%
{Section3}). This reduces our problem to that of constructing a sequence of
context-free languages that is high in the nondeterministic pebble
hierarchy. We construct such a sequence, denoted $\left\{ \text{RA}%
_{k}\right\} _{k\geq 0}$, and we show that it is high; (see Sections \ref%
{Section4} to \ref{Section9}). We employ entropy arguments to prove this.
The connection with entropy\ is based on the following naive idea: languages
that force automata accepting them to store high-entropy random variables
are languages that require large work tapes. Let us briefly discuss this
issue.

Let $\mathcal{A}$ be a nondeterministic pebble automaton accepting $%
L\subseteq \Sigma ^{\ast }$ and $k_{\mathcal{A}}$ the number of pebbles used
by $\mathcal{A}.$ Let $\mathcal{S}$ be an infinite subset of $\Sigma ^{\ast
} $. Let $n\geq 1$. We associate with the triple $\left( \mathcal{A},%
\mathcal{S},n\right) $ a random variable $X_{\mathcal{A}}\left( \mathcal{S}%
,n\right) $ that records the locations of $\mathcal{A}$'s pebbles while
processing inputs from the set $\mathcal{S}\cap \Sigma ^{n}$. For all $s\geq
0$, if the entropy of $X_{\mathcal{A}}\left( \mathcal{S},n\right) $, denoted 
$H\left( X_{\mathcal{A}}\left( \mathcal{S},n\right) \right) $, exceeds $%
s\cdot \log \left( n\right) $ for infinitely many values of $n$, it follows
that $k_{\mathcal{A}}\geq s$.

Let $\Sigma =\left\{ 0,1,\$\right\} .$ Let $k\geq 1$, and let $\mathcal{A}%
_{k}$ be a nondeterministic pebble automaton accepting the language RA$%
_{k}\subseteq \Sigma _{k}^{\ast }$. We show that there exists an infinite set 
$\mathcal{S}_{k}\subseteq \Sigma _{k}^{\ast }$ such that, for all $d>\frac{1%
}{8}$ and all $\gamma >0$, the following inequality holds asymptotically:%
\begin{equation*}
H\left( X_{\mathcal{A}_{k}}\left( \mathcal{H}_{k}^{\Sigma },n\right) \right)
\geq d\cdot \left( 1-\gamma \right) \cdot \log \left( n\right) .
\end{equation*}%
This result entails the highness of $\left\{ \text{RA}_{k}\right\} _{k\geq
1} $ and the separation $\mathcal{NL}\neq \log \mathcal{CFL}$.

\subsection{Notation and Terminology}

Let $\Sigma $ be a finite alphabet. We use $\Sigma ^{\ast }$ to denote the
set of all finite words over $\Sigma $. We use $\Sigma ^{+}$ to denote the
set $\Sigma ^{\ast }-\left\{ \xi \right\} ,$ where $\xi $ denotes the empty
word.

Let $n\geq 2,$ we use $\Sigma ^{n}$ to denote the subset of $\Sigma ^{\ast }$
constituted by the words of length $n.$

Let $w\in \Sigma ^{\ast }$. We use $\left\vert w\right\vert $ to denote the
length of $w.$ Given $i<j\leq \left\vert w\right\vert $, we use $w\left[ i%
\right] $ to denote the $i$-th letter of $w,$ and we use $w\left[ i,\ldots ,j%
\right] $ to denote the factor (infix) $w\left[ i\right] w\left[ i+1\right]
\cdots w\left[ j\right] $.

Let $L\subseteq \Sigma ^{\ast },$ we use $co$-$L$ to denote the language 
\begin{equation*}
\Sigma ^{\ast }-L=\left\{ w\in \Sigma ^{\ast }:w\notin L\right\} .
\end{equation*}

Let $n\geq 1$, we use $\log \left( n\right) $ to denote the logarithm of $n$
with respect to the basis $2$. We use $\left\lfloor \log \left( n\right)
\right\rfloor $ to denote the lower integer part of $\log \left( n\right) $.
Notice that, for all $n\geq 1$, the following inequalities hold: 
\begin{equation*}
\frac{n}{2}\leq 2^{\left\lfloor \log \left( n\right) \right\rfloor }\leq n.
\end{equation*}

Let $G$ be a digraph, we use $V\left( G\right) $ and $E\left( G\right) $ to
denote the set of nodes and the set of edges of $G$.

\section{$\mathcal{L}$, $\mathcal{NL}$, and the pebble hierarchy\label%
{Section2}}

Let $k\geq 0$. Deterministic $k$-pebble automata are deterministic two-way
finite state automata provided with $k$ labeled pebbles, which the finite
control can use as markers on the input tape. The study of this restricted
type of Turing machine goes back to Kreider and Ritchie \cite{Kreider}.

\begin{definition}
\label{PebbleAutomata}Let $k\geq 0,$ a deterministic $k$-pebble automaton is
a tuple%
\begin{equation*}
\mathcal{M}=\left( Q,q_{0},\Sigma ,A,\delta \right) ,
\end{equation*}%
where:

\begin{enumerate}
\item $Q$ is a finite set of states.

\item $q_{0}\in Q$ is the initial state.

\item $A\subseteq Q$ is the set of accepting states.

\item $\Sigma $ is the input alphabet.

\item The transition function 
\begin{equation*}
\delta :Q\times \Sigma \times \left( \mathcal{P}\left( \left\{
1,...,k\right\} \right) \right) ^{2}\rightarrow Q\times \left( \mathcal{P}%
\left( \left\{ 1,...,k\right\} \right) \right) ^{2}\times \left\{
-1,0,1\right\}
\end{equation*}%
is deterministic.
\end{enumerate}

Let $\mathcal{M}$ be a $k$-pebble automaton. Automaton $\mathcal{M}$ is a
two-way deterministic finite state automaton provided with $k$ labeled
pebbles. This automaton has the following capabilities:

\begin{itemize}
\item It can place its pebbles on the tape.

\item It can sense the pebbles that lie on the current cell.

\item It can pick specific pebbles from the set of pebbles that lie on the
current cell.
\end{itemize}

Let $\left( q,a,B,C\right) $ be a tuple such that $B$ and $C$ are disjoint
subsets of $\left\{ 1,...,k\right\} .$ Suppose that $\mathcal{M}$ is
processing the input $w.$ Suppose that:

\begin{itemize}
\item $q$ is the current inner state.

\item $a$ is the character being scanned.

\item $B$ is the set of pebbles that are placed on the current cell.

\item $C$ is the set of available pebbles.
\end{itemize}

Suppose $\delta \left( q,a,B,C\right) =\left( p,D,K,\varepsilon \right) .$
Then:

\begin{itemize}
\item $B\cup C=D\cup K,$

\item $\mathcal{M}$ change its inner state from $q$ to $p,$

\item $\mathcal{M}$ places the pebbles belonging to $D$ (and only these
pebbles) on the current cell, and

\item $\mathcal{M}$ moves its head in the direction indicated by $%
\varepsilon \in \left\{ -1,0,1\right\} $.
\end{itemize}
\end{definition}

Let $\mathcal{M}$ be a deterministic pebble automaton and let $w$ be an
input of $\mathcal{M}$. We say that $\mathcal{M}$ accepts $w$ if this input
leads the automaton to an accepting state.

\begin{definition}
We use $\mathcal{REG}_{m}$ to denote the class of languages accepted by
deterministic $m$-pebble automata.
\end{definition}

The \textit{deterministic pebble hierarchy} is the chain of inclusions:

\begin{equation*}
\mathcal{REG}_{0}\subseteq \mathcal{REG}_{1}\subseteq \mathcal{REG}%
_{2}\subseteq \cdots
\end{equation*}

We have the following result:

\begin{theorem}
\label{PebblesTheorem}
The equality
\[
\mathcal{L} = \bigcup_{0 \leq m} \mathcal{REG}_m
\]
holds.
\end{theorem}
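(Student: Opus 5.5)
We prove the two inclusions separately. Both are standard simulations; the content lies in making them precise for the pebble model of Definition \ref{PebbleAutomata}.

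\emph{The inclusion $\bigcup_{m}\mathcal{REG}_m\subseteq\mathcal{L}$.} Let $\mathcal{M}$ be a deterministic $m$-pebble automaton. We build a deterministic Turing machine with a read-only input tape that keeps on its work tape:
\begin{itemize}
\item the current state of $\mathcal{M}$;
\item the head position of $\mathcal{M}$, in binary;
\item for each pebble $i\le m$, either a flag ``available'' or its cell position, in binary.
\end{itemize}
This takes $O\left((m+1)\log(n)\right)$ bits. To simulate one step, the machine reads the scanned symbol by moving its input head to the stored position, using a binary counter. It computes the set $B$ of pebbles on the current cell by comparing stored positions, and the set $C$ of available pebbles from the flags. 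It then looks up $\delta(q,a,B,C)=(p,D,K,\varepsilon)$ in a finite table and updates the stored positions and flags accordingly. Acceptance is preserved. The simulation also halts whenever $\mathcal{M}$ halts, and deterministic acceptance is defined by reaching an accepting state, so no extra care about looping is needed. Thus $L(\mathcal{M})\in\mathcal{L}$.

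\emph{The inclusion $\mathcal{L}\subseteq\bigcup_{m}\mathcal{REG}_m$.} Let $T$ be a deterministic Turing machine deciding $L$ in space $c\log(n)$; we may take a single binary work tape. We encode $T$'s configuration with pebbles as follows.
\begin{itemize}
\item The input head position is one pebble.
\item The work tape contents are split into $c'$ blocks of $\lfloor\log(n)\rfloor$ bits. Each block is read as a number in $[0,n)$ and stored as the position of one pebble.
\item The work head position is one more pebble, encoding a number at most $c\log(n)\le n$ once $n$ is large.
\item The state of $T$ is kept in the finite control.
\end{itemize}
Finitely many short inputs are handled by table lookup.

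\emph{Main obstacle.} The hard step is implementing the arithmetic on pebble-encoded numbers with $O(1)$ auxiliary pebbles and a finite control. The needed primitives are:
\begin{itemize}
\item testing bit $j$ of the number encoded by a pebble, where $j$ is itself encoded by a pebble;
\item flipping that bit, i.e.\ adding or subtracting $2^{j}$;
\item incrementing and decrementing a pebble, which is just moving it one cell.
\end{itemize}
I would first implement halving and doubling of a position (\emph{move pebble $x$ to cell $\lfloor x/2\rfloor$}) with two auxiliary pebbles. One pebble walks from the left end while another walks twice as fast, realised by alternating steps between the two markers. Parity is obtained the same way. Iterating halving $j$ times extracts bit $j$; a counter pebble tracks $j$, and copies of the original block are restored afterwards by reversing the doublings with parity bits. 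These parity bits cannot be kept in the finite control because there are $\log(n)$ of them. This is the point that needs care, and I would handle it by recomputing instead of storing: to add $2^{j}$ to $x$, compute $2^{j}$ into an auxiliary pebble by $j$ doublings starting from $1$, then move $x$ and that pebble in lockstep, decrementing the auxiliary one until it reaches $0$. Bit testing uses the same lockstep comparisons: $\lfloor x/2^{j}\rfloor$ is computed by repeated halving onto a scratch copy, so the original $x$ is never destroyed.

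With these primitives, each step of $T$ is simulated by a fixed finite program using $c'+2+O(1)$ pebbles, and the resulting deterministic pebble automaton accepts exactly $L$. Together with the first inclusion, this gives $\mathcal{L}=\bigcup_{0\le m}\mathcal{REG}_m$.
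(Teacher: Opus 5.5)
Your proof is correct. The first inclusion matches the paper's, but your second inclusion uses a different encoding.

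\textbf{The paper's encoding.} The paper normalizes the machine to $m$ binary work tapes of length $\lfloor\log(n)\rfloor-2$ and stores each tape as two stacks. The configuration $(1n_L,n_R1)$ becomes two pebbles placed at the binary values of $1n_L$ and $(n_R1)^R$. The scanned bit is then a parity, and every step of the machine reduces to parity tests, one halving, one doubling and the addition of a bit. All of this happens at the least significant ends, so one auxiliary pebble per tape suffices.

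\textbf{Your encoding.} You store blocks of the tape together with the work-head position. This forces random access to bit $j$: building $2^{j}$ by doublings, halving a scratch copy $j$ times, and adding $2^{j}$ in lockstep. It is more machinery per step, but it works directly for a single long tape. The stack encoding buys a much simpler step program.

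\textbf{A detail you leave implicit.} You need to convert the work-head pebble $p\le c\log(n)$ into a block and an offset $j<\lfloor\log(n)\rfloor$. The easiest fix is to keep the block index in the finite control and only $j$ in a pebble. Overflow is detected by testing whether $2^{j+1}$, built by doublings, runs past the end of the input.

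\textbf{The input-head pebble.} Your pebble marking the simulated input head is genuinely needed, because the single head must leave that cell to move the work-tape pebbles. The paper's count of $3m$ pebbles does not mention such a marker; one extra pebble is irrelevant for the union.

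\textbf{Looping.} If $\mathcal{L}$ is defined via halting machines, the first inclusion needs an $O(\log(n))$-bit step counter, since $\mathcal{M}$ may loop. Your remark that no care about looping is needed is therefore slightly too quick, though the fix is routine.
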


\begin{proof}
Let $\mathcal{M}$ be an $m$-pebble automaton. Observe that a single pebble
of $\mathcal{M}$ can be simulated using a binary work tape of logarithmic
size. Hence, it is straightforward to construct a deterministic Turing
machine $\mathcal{N}$ equipped with $m+1$ work tapes of logarithmic size
that simulates $\mathcal{M}$. Therefore,
\[
\bigcup_{0 \leq m}\mathcal{REG}_m \subseteq \mathcal{L}.
\]

We now prove the converse inclusion. Let $L \in \mathcal{L}$. Then there
exist an integer $m \geq 0$ and a deterministic Turing machine
$\mathcal{M}$ such that:

\begin{itemize}
\item $\mathcal{M}$ accepts $L.$

\item $\mathcal{M}$ is provided with a read-only input tape and $m$ binary
tapes of size $\left\lfloor \log \left( n\right) \right\rfloor -2$.
\end{itemize}

Let us prove that each one of those binary tapes can be simulated using
three pebbles. Let $w\in \Sigma ^{n}$ be the input of $\mathcal{M}$. Let us
focus on the $i$-th tape and let us consider the configuration reached by
this tape at instant $t.$ We represent this configuration as a pair $\left(
1n_{L},n_{R}1\right) \in \left( \left\{ 0,1\right\} ^{\ast }\right) ^{2}.$
This pair tells us that $n_{L}n_{R}$ is the work tape content, and it also
tells us that the head assigned to this tape is located on cell $\left\vert
n_{L}\right\vert .$ Let $\left( n_{R}1\right) ^{R}$ be the reverse of $%
n_{R}1.$ Note that $1n_{L}$ and $\left( n_{R}1\right) ^{R}$ are binary
strings whose lengths are bounded above by $\left\lfloor \log \left(
n\right) \right\rfloor -1.$ These binary strings encode two positive
integers $m_{L}$ and $m_{R}$ that belong to the interval $\left\{
0,...,n-1\right\} $. We use this to represent the configuration $\left(
1n_{L},n_{R}1\right) $ using two pebbles that we call $1$ and $2$. We assign
to pebble $1$ the position $m_{L}$ and to pebble $2$ the position $m_{R}$.
We use these pebbles to simulate the changes occurring on tape $i.$ Let us
suppose, for instance, that the transition function of $\mathcal{M}$ forces
the head (of tape $i$) to move one cell to the left, after replacing with $0$
the character $1$ that was written on the current cell (which is cell $%
\left\vert n_{L}\right\vert $). Let $\left( 1n_{L}^{\ast },n_{R}^{\ast
}1\right) $ be the configuration reached at time $t+1$. Let $m_{L}^{\ast }$
and $m_{R}^{\ast }$ be the corresponding integers. Notice that%
\begin{equation*}
m_{L}^{\ast }=\frac{m_{L}-1}{2}\text{ and }m_{R}^{\ast }=2m_{R}.
\end{equation*}%
This means that we have to move pebbles $1$ and $2,$ from the positions $%
m_{L}$ and $m_{R}$ to the positions $\frac{m_{L}-1}{2}$ and $2m_{R}.$ This
can be done with the help of a third pebble. This means that we can simulate
this transition using three pebbles. Notice that we can simulate any other
transition of the $i$-th tape using the same three pebbles. This means that
we can replace tape $i$ with three pebbles. This also means that we can
replace all the $m$ work tapes of $\mathcal{M}$ with $3m$ pebbles. If we do
the latter, we obtain a deterministic $3m$-pebble automaton that accepts $L.$
We get that $L\in \mathcal{REG}_{3m}$. We conclude that $\mathcal{L}$ is
included in
\[
\bigcup_{0 \leq m}\mathcal{REG}_m .
\]
The theorem is
proved.
\end{proof}

Let $m\geq 0$. Nondeterministic $m$-pebble automata are the nondeterministic
counterparts of $m$-pebble automata.

\begin{definition}
Let $m\geq 0$, we use $\mathcal{NREG}_{m}$ to denote the class of languages
accepted by nondeterministic $m$-pebble automata.
\end{definition}

The \textit{nondeterministic pebble hierarchy} is the chain of inclusions:

\begin{equation*}
\mathcal{NREG}_{0}\subseteq \mathcal{NREG}_{1}\subseteq \mathcal{NREG}%
_{2}\subseteq \cdots
\end{equation*}%
The proof of the following theorem is completely analogous to the proof of
the previous one.
\begin{theorem}
\label{Nondeterministic}
The equality
\[
\mathcal{NL} = \bigcup_{0 \leq m}\mathcal{NREG}_m
\]
holds.
\end{theorem}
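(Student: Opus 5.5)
The plan is to follow the proof of Theorem~\ref{PebblesTheorem} step by step. Nondeterminism only affects the finite control: the choice of which transition to take. None of the simulations in that proof (pebble positions stored in counters, counters stored as pebble positions) cares whether the control is deterministic. So I would prove the two inclusions separately and, in each, keep the earlier construction and only allow the simulating device to branch exactly when the simulated device branches.

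For $\bigcup_{m}\mathcal{NREG}_m \subseteq \mathcal{NL}$, let $\mathcal{A}$ be a nondeterministic $m$-pebble automaton. I build a nondeterministic Turing machine $\mathcal{N}$ with a read-only input tape and $m+1$ binary work tapes of size $O(\log n)$.
\begin{itemize}
\item Tape $i\le m$ holds the position of pebble $i$ in binary, or a flag meaning ``not on the tape''.
\item Tape $m+1$ holds the current position of $\mathcal{A}$'s head, which $\mathcal{N}$ keeps aligned with its own input head.
\end{itemize}
At each step, $\mathcal{N}$ first computes the set $B$ of pebbles on the current cell by comparing each counter with the head counter; this is a deterministic logspace comparison. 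It then guesses one of the finitely many transitions of $\mathcal{A}$ applicable to $(q,a,B,C)$. Finally it updates the counters (setting counter $i$ to the head position for pebbles placed, and to the flag for pebbles picked up) and moves its head, incrementing or decrementing the head counter. $\mathcal{N}$ accepts exactly when some guessed run of $\mathcal{A}$ reaches an accepting state, and it uses $O(\log n)$ space, so $L(\mathcal{A})\in\mathcal{NL}$.

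For $\mathcal{NL}\subseteq\bigcup_m\mathcal{NREG}_m$, take $L\in\mathcal{NL}$ accepted by a nondeterministic machine $\mathcal{M}$. By the usual tape compression and splitting of tapes, I may assume $\mathcal{M}$ has $m$ binary work tapes of size $\lfloor\log(n)\rfloor-2$, exactly as in the proof of Theorem~\ref{PebblesTheorem}. Each tape configuration $(1n_L,n_R1)$ is encoded by two pebbles at positions $m_L,m_R<n$, plus one auxiliary pebble per tape, for $3m$ pebbles in total. The simulating nondeterministic pebble automaton behaves as follows at each step of $\mathcal{M}$:
\begin{enumerate}
\item It determines the symbols currently scanned on the work tapes. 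These are the parities of $m_L$ (the bit under the head), and they are read off by pebble arithmetic.
\item It nondeterministically picks a transition of $\mathcal{M}$ consistent with the scanned symbols and the current state.
\item It performs, deterministically, the pebble moves realizing the updates. For a left move these are $m_L\mapsto (m_L-1)/2$ and $m_R\mapsto 2m_R$, with rewrites adjusting the low bit; right moves are symmetric.
\end{enumerate}
Accepting runs of the two devices correspond bijectively, so $L\in\mathcal{NREG}_{3m}$.

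The main obstacle, exactly as in the deterministic case, is step 3 of the second inclusion (and also reading the parity in step 1). We must check that a two-way finite automaton with one helper pebble can move a pebble from $p$ to $2p$, to $\lfloor p/2\rfloor$, and can test $p \bmod 2$, all without losing the information stored in the other pebbles. My first attempt would be a zig-zag procedure. To double $p$, place the helper at the left end and the target pebble stays at $p$; then repeatedly advance the helper one cell and the target pebble two cells, shuttling between them, until the helper reaches the old position $p$. Shuttling between them forces the old position $p$ to be marked, so I expect to need to reuse or temporarily borrow one of the other pebbles of the same tape, adding one pebble per tape if necessary. This still gives a bounded number of pebbles and suffices for the union. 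Halving and parity work the same way, by counting in steps of two. These subroutines are deterministic, so they transfer verbatim to the nondeterministic setting, and the theorem follows.
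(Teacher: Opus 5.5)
Your proposal follows the paper's proof. The paper handles this theorem by saying it is completely analogous to Theorem~\ref{PebblesTheorem}, which is exactly your pair of simulations: pebble positions kept in $m+1$ logarithmic counters in one direction, and in the other each binary work tape of length $\lfloor \log(n)\rfloor-2$ stored as the two pebble positions $m_L,m_R$ plus one helper. Nondeterminism is used only to choose the transition.

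One detail of your pebble-arithmetic sketch is wrong as written. Suppose the helper walks from the left end to $p$ one cell per round while the target, starting at $p$, advances two cells per round. Then the target ends at $3p$, not $2p$.

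Correct versions need no extra marker, so three pebbles per tape suffice, as the paper claims:
\begin{itemize}
\item \emph{Doubling.} Put the helper on top of the target at $p$. Alternately move the helper one cell left and the target one cell right until the helper reaches the left end.
\item \emph{Halving.} Let the helper (starting at the left end) and the target (starting at $p$) walk toward each other one cell per round until they meet or are adjacent. Then move the target onto the helper, which sits at $\lfloor p/2\rfloor$.
\item \emph{Parity.} Reading the parity of $m_L$ needs no helper: the head walks from the left end to the pebble, counting modulo $2$ in the finite control.
\end{itemize}

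You also do not need to borrow the other pebble of the same tape, which would have destroyed its stored value anyway.
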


\section{Word morphism and Greibach hardest context-free language\label%
{Section3}}

We want to construct a sequence $\left\{ \text{RA}_{k}\right\} _{k\geq
0}\subseteq \mathcal{CFL}$ and a language RA$_{\infty }\in \mathcal{CFL}$
such that:

\begin{enumerate}
\item $\left\{ \text{RA}_{k}\right\} _{k\geq 0}$ is high in the
nondeterministic pebble hierarchy.

\item RA$_{\infty }$ is an upper bound for $\left\{ \text{RA}_{k}\right\}
_{k\geq 0}$.
\end{enumerate}

We show that there exists $L_{G_{0}}\in \mathcal{CFL}$ that is an upper
bound for every sequence $\left\{ L_{k}\right\} _{k\geq 0}\subseteq \mathcal{%
CFL}$. This reduces our problem to that of constructing a \textit{high
sequence} of context-free languages.

\begin{definition}
Let $\Sigma $ and $\Omega $ be two finite alphabets. Let $f$ be a function
from $\Sigma $ to $\Omega ^{\ast }$. Let $\widehat{f}:\Sigma ^{\ast }$ $%
\rightarrow $ $\Omega ^{\ast }$ be the function that is defined by the
equation%
\begin{equation*}
\widehat{f}\left( w\right) =\left\{ 
\begin{array}{c}
\varepsilon \text{, if }w=\varepsilon \\ 
f\left( w\left[ 1\right] \right) \cdots f\left( w\left[ \left\vert
w\right\vert \right] \right) \text{, if }\left\vert w\right\vert >0%
\end{array}%
\right.
\end{equation*}%
We say that $\widehat{f}$ is a word morphism from $\Sigma ^{\ast }$\ to $%
\Omega ^{\ast }$, and any word morphism between these two free monoids is
defined this way from a function $f:\Sigma $ $\rightarrow $ $\Omega ^{\ast }$%
.
\end{definition}

\begin{definition}
Let $L\subset \Sigma ^{\ast }$ and $T\subset \Gamma ^{\ast }$be two
languages. We say that $L$ is an inverse image of $T$ if and only if there
exists a word morphism $\widehat{f}:\Sigma ^{\ast }\rightarrow $ $\Omega
^{\ast }$ such that $L=\widehat{f}^{-1}\left( T\right) $.
\end{definition}

The class $\mathcal{CFL}$ contains languages that are complete for $\mathcal{%
CFL}$ under inverse images of \textit{word morphisms}; see \cite{Greibach}.
This means that there exists $L_{G_{0}}\in \mathcal{CFL}$ such that, for all 
$T\in \mathcal{CFL}$, the language $T$ is an inverse image of $L_{G_{0}}$
under a word morphism. We use $L_{G_{0}}$ to denote \textit{Greibach hardest
context-free language}; see \cite{Greibach}.

We have:

\begin{theorem}
Let $k\geq 0,$ the set $\mathcal{NREG}_{k}$ is closed under inverse images
of word morphism.
\end{theorem}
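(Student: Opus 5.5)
The plan is a direct simulation. Let $\widehat{f}:\Sigma^{\ast}\rightarrow\Omega^{\ast}$ be a word morphism and let $\mathcal{B}$ be a nondeterministic $k$-pebble automaton accepting $T\subseteq\Omega^{\ast}$. I will build a nondeterministic $k$-pebble automaton $\mathcal{A}$ over $\Sigma$ that, on input $w$, simulates $\mathcal{B}$ on the virtual input $\widehat{f}(w)$ without ever writing it down. Then $\mathcal{A}$ accepts exactly $\widehat{f}^{-1}(T)$. Let $c=\max_{a\in\Sigma}\left\vert f(a)\right\vert$, which is a constant. Every position of $\widehat{f}(w)$ is determined by a pair $(i,j)$, where $i$ is a cell of $w$ with $f(w[i])\neq\varepsilon$ and $1\leq j\leq\left\vert f(w[i])\right\vert$. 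The symbol at that position is $f(w[i])[j]$, which $\mathcal{A}$ computes from the scanned letter $w[i]$ and $j$.

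First I fix the state space of $\mathcal{A}$. A state is a tuple $(q,j,o_{1},\ldots,o_{k})$. Here $q$ is a state of $\mathcal{B}$, $j\in\{1,\ldots,c\}$ is the offset of $\mathcal{B}$'s virtual head inside the current block, and $o_{t}\in\{1,\ldots,c\}\cup\{\bot\}$ is the offset of pebble $t$ inside its block ($\bot$ if pebble $t$ is not placed). This set is finite, so it can be held in the finite control. The invariant is: the real head of $\mathcal{A}$ sits on cell $i$ exactly when $\mathcal{B}$'s head is at $(i,j)$, and real pebble $t$ sits on cell $i_{t}$ exactly when $\mathcal{B}$'s pebble $t$ is at $(i_{t},o_{t})$. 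With this invariant, $\mathcal{A}$ can reconstruct $\mathcal{B}$'s local view. The scanned symbol is $f(w[i])[j]$. The set of pebbles on $\mathcal{B}$'s current cell is the set of real pebbles $t$ on cell $i$ with $o_{t}=j$. The available pebbles are those with $o_{t}=\bot$.

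Next I simulate one transition. $\mathcal{A}$ guesses a transition of $\mathcal{B}$ allowed at this local view, which preserves nondeterminism. Placing pebble $t$ becomes: put real pebble $t$ on cell $i$ and set $o_{t}:=j$. Picking up pebble $t$ becomes: lift real pebble $t$ from cell $i$ and set $o_{t}:=\bot$. Several virtual pebbles may share a real cell, but with different offsets; the definition allows several pebbles per cell, so this is fine. For a move $\varepsilon=\pm1$, if $j+\varepsilon$ stays inside the block, only $j$ changes. Otherwise $\mathcal{A}$ moves its real head in direction $\varepsilon$ and keeps going over cells $a$ with $f(a)=\varepsilon$, which are invisible in $\widehat{f}(w)$. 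It stops at the first cell with a nonempty image and sets $j$ to $1$ or to $\left\vert f(w[i])\right\vert$, depending on the direction. This skipping needs only finitely many extra states and uses no pebbles. No pebble is ever on an erased cell, so the invariant survives. $\mathcal{A}$ accepts when the simulated $q$ is accepting.

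The step I expect to be the main obstacle is the boundary bookkeeping, not the pebble bookkeeping. It has three parts: the initial configuration (moving to the first cell with nonempty image), falling off either end of $\widehat{f}(w)$ while skipping erased letters, and the degenerate case $\widehat{f}(w)=\varepsilon$. There $\mathcal{A}$ must sweep $w$ once, see that every letter is erased, and then answer according to whether $\varepsilon\in T$; this fact is hardwired. I would handle this by treating the ends of the input as virtual endmarkers and checking that $\mathcal{B}$'s behaviour at its endmarkers is reproduced at the first and last nonerased cells. After that, the invariant gives a step-by-step correspondence between runs of $\mathcal{B}$ on $\widehat{f}(w)$ and runs of $\mathcal{A}$ on $w$. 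Hence $L(\mathcal{A})=\widehat{f}^{-1}(T)$ and $\widehat{f}^{-1}(T)\in\mathcal{NREG}_{k}$.
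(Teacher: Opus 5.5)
Your proposal is correct and is essentially the paper's construction: the simulating automaton stores the state of the automaton accepting $T$ together with the offsets of the virtual head and of each virtual pebble inside their blocks $f(w[i])$ (the paper's state set $Q\times\{0,\ldots,N_0\}^{k+1}$), while its real head and real pebbles sit on the corresponding cells of $w$. Your explicit handling of erased letters ($f(a)=\varepsilon$), of the ends of the input, and of the case $\widehat{f}(w)=\varepsilon$ supplies details that the paper leaves to ``the transition function is defined accordingly.''
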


\begin{proof}
Suppose $L\subset \Sigma ^{\ast }$ is an inverse image of $T\subset \Gamma
^{\ast }$ and suppose $T\in \mathcal{NREG}_{k}$. We prove that $L\in 
\mathcal{NREG}_{k}$.

Let $\mathcal{M}$ be a nondeterministic $k$-pebble automaton accepting $%
T\subset \Gamma ^{\ast }.$ Let $f:\Sigma \rightarrow \Gamma ^{\ast }$ be a
function such that $L=\widehat{f}^{-1}\left( T\right) .$ We construct a
nondeterministic $k$-pebble automaton $\mathcal{S}$ that receives as input $%
w\in \Sigma ^{n}$ and simulates the computation of $\mathcal{M}$, on input $%
\widehat{f}\left( w\right) .$

Let 
\begin{equation*}
N_{0}=\max \left\{ \left\vert f\left( a\right) \right\vert :\text{ }a\in
\Sigma \right\}
\end{equation*}%
Let $Q$ be the set of states of $\mathcal{M}.$ The set of states of $%
\mathcal{S}$ is equal to 
\begin{equation*}
Q\times \left\{ 0,\ldots ,N_{0}\right\} ^{k+1}.
\end{equation*}%
We use this set of states to simulate $\mathcal{M}$. Recall that 
\begin{equation*}
\widehat{f}\left( w\right) =f\left( w\left[ 1\right] \right) \cdots f\left( w%
\left[ n\right] \right)
\end{equation*}%
Thus, suppose:

\begin{itemize}
\item $\mathcal{M}$ is in state $q$,

\item the input head is placed on the $r$-th character of the $i$-th factor
(i.e., factor $f\left( w\left[ i\right] \right) $),

\item the set of available pebbles is $B\subseteq \left\{ 1,\ldots
,k\right\} ,$ and

\item given $j\notin B,$ the pebble $j$ is placed on the $r_{j}$ character
of the $l_{j}$ factor.
\end{itemize}

Then:

\begin{itemize}
\item $\mathcal{S}$ is in state $\left( q,r,s_{1},\ldots ,s_{k}\right) $,
with: 
\begin{equation*}
s_{i}=\left\{ 
\begin{array}{c}
r_{i}\text{, }i\notin B \\ 
0\text{, }i\in B%
\end{array}%
\right. ,
\end{equation*}

\item the input head is placed on the $i$-th cell,

\item the set of available pebbles is $B$, and

\item for all $j\notin B$, the pebble $j$ is placed on the $l_{j}$-th cell.
\end{itemize}

The set of accepting states is the set: 
\begin{equation*}
\left\{ \left( q,r,s_{1},\ldots ,s_{k}\right) :q\in A\right\} .
\end{equation*}%
The transition function is defined accordingly.

It is easy to check that the above construction works. We conclude that $%
\mathcal{NREG}_{m}$ is closed under inverse images of word morphism. The
theorem follows.
\end{proof}

We obtain the following corollary:

\begin{corollary}
Let $m\geq 0$. If $L_{G_{0}}\in \mathcal{NREG}_{m}$ the entire class $%
\mathcal{CFL}$ is included in $\mathcal{NREG}_{m}$.
\end{corollary}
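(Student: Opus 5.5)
The corollary follows by combining Greibach's hardest-language property of $L_{G_{0}}$ with the closure theorem just proved. Fix $m\geq 0$ and assume $L_{G_{0}}\in \mathcal{NREG}_{m}$. Let $T\in \mathcal{CFL}$ be arbitrary, say $T\subseteq \Sigma^{\ast}$. By the completeness of $L_{G_{0}}$ under inverse word morphisms, stated just before the previous theorem and taken from \cite{Greibach}, there is a function $f:\Sigma\rightarrow \Gamma^{\ast}$, where $\Gamma$ is the alphabet of $L_{G_{0}}$, such that $T=\widehat{f}^{-1}\left( L_{G_{0}}\right)$. So $T$ is an inverse image of $L_{G_{0}}$ in the sense of the definition above.

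Apply the previous theorem with $k=m$: $\mathcal{NREG}_{m}$ is closed under inverse images of word morphisms, and $L_{G_{0}}\in \mathcal{NREG}_{m}$, so $T\in \mathcal{NREG}_{m}$. Since $T$ was arbitrary, $\mathcal{CFL}\subseteq \mathcal{NREG}_{m}$.

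There is no computational content beyond chaining these two facts. The one point to check is that the form of Greibach's theorem being invoked matches the closure theorem. In the literature the hardest language is sometimes stated up to a slight modification, for example $T=\widehat{f}^{-1}(L_{G_{0}})$ only for $T$ not containing the empty word, or $\widehat{f}^{-1}(L_{G_{0}}\,\$)$ for an endmarker. The paper asserts the clean version ``for all $T\in \mathcal{CFL}$, $T$ is an inverse image of $L_{G_{0}}$'', and I would use it as stated.

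If the more careful version were needed, I would handle the discrepancy by noting two facts. First, $\mathcal{NREG}_{m}$ is closed under adding or removing the empty word, since a finite control can detect an empty input. Second, it is closed under appending or deleting a fixed endmarker, by an easy modification of the automaton. Neither fact changes the number of pebbles, so the conclusion $\mathcal{CFL}\subseteq \mathcal{NREG}_{m}$ is unaffected.
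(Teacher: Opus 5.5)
Your proof is correct and is the paper's argument: Greibach's completeness of $L_{G_{0}}$ under inverse word morphisms, combined with the closure of $\mathcal{NREG}_{m}$ under inverse morphisms. Your added remark on the empty-word (or endmarker) caveat in Greibach's theorem is a sound refinement the paper glosses over. Closure of $\mathcal{NREG}_{m}$ under adding or removing $\varepsilon$ handles it without changing the number of pebbles.
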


\begin{proof}
Suppose $L_{G_{0}}\in \mathcal{NREG}_{m}$. Let $L\in \mathcal{CFL}$. There
exists a word morphism $f$ such that $L=f^{-1}\left( L_{G_{0}}\right) $. The
set $\mathcal{NREG}_{m}$ is closed under inverse images of word morphism.
Then, the language $L$ belongs to $\mathcal{NREG}_{m}$ and the entire class $%
\mathcal{CFL}$ is included in $\mathcal{NREG}_{m}$.
\end{proof}

Moreover, we have:

\begin{corollary}
\label{high}There exists a sequence of context-free languages that is high
in the nondeterministic pebble hierarchy if and only if the separation $%
\mathcal{NL}\subset \log \mathcal{CFL}$ holds.
\end{corollary}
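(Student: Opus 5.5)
The plan is to prove both directions directly from the preceding corollary (if $L_{G_{0}}\in \mathcal{NREG}_{m}$ then $\mathcal{CFL}\subseteq \mathcal{NREG}_{m}$), Theorem \ref{Nondeterministic}, and the inclusion $\mathcal{NL}\subseteq \log \mathcal{CFL}$ from the introduction. Here the symbol $\subset$ is read as strict inclusion, so the separation means exactly $\mathcal{NL}\neq \log \mathcal{CFL}$.

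\emph{Forward direction.} Assume $\{L_{k}\}_{k\geq 0}\subseteq \mathcal{CFL}$ is high, and suppose for contradiction that $\mathcal{NL}=\log \mathcal{CFL}$. Since $L_{G_{0}}\in \mathcal{CFL}\subseteq \log \mathcal{CFL}=\mathcal{NL}$, Theorem \ref{Nondeterministic} gives some $m$ with $L_{G_{0}}\in \mathcal{NREG}_{m}$. By the previous corollary, $\mathcal{CFL}\subseteq \mathcal{NREG}_{m}$, so every $L_{k}$ lies in $\mathcal{NREG}_{m}$. This contradicts highness. Hence $L_{G_{0}}\notin \mathcal{NL}$, and since $L_{G_{0}}\in \log \mathcal{CFL}$, the inclusion $\mathcal{NL}\subseteq \log \mathcal{CFL}$ is strict.

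\emph{Converse direction.} Assume $\mathcal{NL}\neq \log \mathcal{CFL}$. I first show $L_{G_{0}}\notin \mathcal{NL}$. Suppose instead that $L_{G_{0}}\in \mathcal{NL}$. Then, as above, $L_{G_{0}}\in \mathcal{NREG}_{m}$ for some $m$, hence $\mathcal{CFL}\subseteq \mathcal{NREG}_{m}\subseteq \mathcal{NL}$. Now I use the standard fact that $\mathcal{NL}$ is closed under logspace reductions: compose the reduction with the $\mathcal{NL}$ machine, recomputing output bits of the reduction on demand. This gives $\log \mathcal{CFL}\subseteq \mathcal{NL}$, so equality holds, contradicting the assumption. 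Thus $L_{G_{0}}\notin \mathcal{NL}$, and by Theorem \ref{Nondeterministic}, $L_{G_{0}}\notin \mathcal{NREG}_{m}$ for every $m$. The constant sequence $L_{k}=L_{G_{0}}$ for all $k\geq 0$ is then a sequence of context-free languages. For every $m$ it has a member (indeed every member) outside $\mathcal{NREG}_{m}$, so it is high.

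There is no real obstacle. The only ingredient not stated earlier in the paper is the closure of $\mathcal{NL}$ under logspace reductions. I will cite this as standard, with a one-line justification via recomputation of output symbols.
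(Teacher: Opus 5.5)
Your proof is correct and follows essentially the same route as the paper. The forward direction is exactly the paper's argument, with $L_{G_{0}}$ as a universal upper bound. In the converse, you use the constant sequence $L_{k}=L_{G_{0}}$ where the paper uses an enumeration of $\mathcal{CFL}$; this is only a cosmetic difference, and you helpfully make explicit the closure of $\mathcal{NL}$ under logspace reductions, which the paper's converse relies on without stating it.
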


\begin{proof}
Suppose $\mathcal{NL}\subset \log \mathcal{CFL}$. Let $\left\{ L_{i}\right\}
_{i\geq 0}$ be an enumeration of $\mathcal{CFL}$. This sequence is high in
the nondeterministic pebble hierarchy.

The language $L_{G_{0}}$ is an upper bound for every sequence of languages
contained in $\mathcal{CFL}$. Then, if there exists a sequence $\left\{
L_{i}\right\} _{i\geq 0}\subseteq \mathcal{CFL}$ that is high in the
nondeterministic pebble hierarchy the language $L_{G_{0}}$ belongs to $\log 
\mathcal{CFL-NL}$.
\end{proof}

\section{High sequences of context-free languages\label{Section4}}

We showed that $\mathcal{NL}\neq \log \mathcal{CFL}$ is equivalent to the
existence of \textit{high sequences} in $\mathcal{CFL}$. We try an inductive
construction of those sequences.

\begin{definition}
Let $\Sigma ,\Gamma $ be two finite alphabets and let $\#\notin \Sigma \cup
\Gamma $. A bi-string over\ $\left( \Sigma ,\Gamma ,\#\right) $ is a string 
\begin{equation*}
w_{1}\#r_{1}\#\cdots \#w_{n}\#r_{n}
\end{equation*}%
such that $w_{1},\ldots ,w_{n}\in \Sigma ^{\ast }$; and $r_{1},\ldots
,r_{n}\in \Gamma ^{\ast }$. We use $s\left( w,r\right) $ as a shorthand for
the bi-string%
\begin{equation*}
w_{1}\#r_{1}\#\cdots \#w_{n}\#r_{n}.
\end{equation*}%
We say that $w_{1},\ldots ,w_{n}$ are the \textit{local factors} of $s\left(
w,r\right) $ and we say that $r_{1},\ldots ,r_{n}$ are the local co-factors
of this bi-string.
\end{definition}

\begin{definition}
Let $T\subset \Sigma ^{\ast }$ and let $w\in \Sigma ^{\ast }$. Define: 
\begin{equation*}
\varepsilon ^{T}\left( w\right) =\left\{ 
\begin{array}{c}
1\text{, if }w\in T \\ 
0\text{, otherwise}%
\end{array}%
\right.
\end{equation*}%
We say that $\varepsilon ^{T}\left( w\right) $ is the characteristic bit
(with respect to $T$) of the string $w$.
\end{definition}

\begin{definition}
Let $r=r_{1}\cdots r_{n}$. Let $I=\left\{ i_{1},\ldots ,i_{k}\right\} $ be a
subset of $\left\{ 1,\ldots ,n\right\} $. Suppose $i_{1}<\cdots <i_{k}$. We
use $\bigodot_{i\in I} r_i$ to denote the substring
$r_{i_1}\cdots r_{i_k}$.
\end{definition}

\begin{definition}
Let $L \subseteq \Sigma^\ast$, $T \subseteq \Gamma^\ast$, and
$\# \notin \Sigma \cup \Gamma$. We define the language
$T \otimes_{\pi} L$ as the set of all bi-strings
\[
w_1 \# r_1 \# \cdots \# w_n \# r_n
\]
that satisfy the condition
\[
\bigodot_{i \in T(w)} r_i \in L,
\]
where:
\begin{equation*}
T\left( w\right) =\left\{ i\leq n:\varepsilon ^{T}\left( w_{i}\right)
=1\right\} .
\end{equation*}%
We say that $T\otimes _{\pi }L$ is the masked concatenation of $L$ over $T$.
\end{definition}

\begin{theorem}
\label{Context-free}Let $L,T\in \mathcal{CFL}$. The language $T\otimes _{\pi
}L$ belongs to $\mathcal{CFL}$.
\end{theorem}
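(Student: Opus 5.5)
The plan is to show that $T\otimes_{\pi}L$ is obtained from $L$ by closure operations under which $\mathcal{CFL}$ is closed: inverse morphisms, intersection with regular languages, and substitution by context-free languages. The idea is that a bi-string $w_1\#r_1\#\cdots\#w_n\#r_n$ lies in $T\otimes_{\pi}L$ exactly when two things hold. First, the co-factors attached to local factors in $T$, read left to right, spell a word of $L$. Second, every other block carries an arbitrary co-factor and a local factor outside $T$. So the approach is to first insert block boundaries into words of $L$, and then blow each boundary up into the required blocks.

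\emph{Step 1 (marking a factorisation of $L$).} Let $|$ be a fresh symbol and let $g$ be the morphism that erases $|$ and fixes $\Gamma$. The language $L_{|}=g^{-1}(L)\cap |\,(\Gamma^{\ast}|)^{\ast}$ consists of all words $|u_1|u_2|\cdots|u_k|$ with $u_1\cdots u_k\in L$. It is context-free, since $\mathcal{CFL}$ is closed under inverse morphisms and under intersection with regular sets.

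\emph{Step 2 (substitution).} Define a substitution $\sigma$ that fixes each letter of $\Gamma$. For the two markers we use the following languages:
\begin{equation*}
\sigma(|)=D\,\#\,T\,\#\quad\text{for inner markers},\qquad D=\bigl((\Sigma^{\ast}-T)\,\#\,\Gamma^{\ast}\,\#\bigr)^{\ast}.
\end{equation*}
The first and last markers need the obvious boundary variants, so that no stray $\#$ appears. To handle this cleanly, I would use two distinct marker letters, one for the initial marker and one for the others, and a trailing copy of $D$ with its final $\#$ removed. Then $\sigma(L_{|})$ is exactly the set of bi-strings in which the blocks whose local factor lies in $T$ carry, in order, co-factors $u_1,\ldots,u_k$ with $u_1\cdots u_k\in L$. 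This is $T\otimes_{\pi}L$. Verifying the equality in both directions is routine bookkeeping on the positions of the $\#$ symbols. Since $\mathcal{CFL}$ is closed under substitution by context-free languages, the theorem follows once every $\sigma(a)$ is context-free.

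\emph{Main obstacle.} The images of $\Gamma$ and the factor $T$ are context-free by hypothesis. The delicate point is $D$, which requires the complement $\Sigma^{\ast}-T$ to be context-free. That is automatic when $T$ is regular or deterministic context-free, but not for an arbitrary $T\in\mathcal{CFL}$. At this step I would first try to show that the language $T\otimes_{\pi}L$ only needs a context-free description of the skipped blocks. This fails in general: if $L=\{\xi\}$ and $\Gamma$ is a single letter, then intersecting $T\otimes_{\pi}L$ with the regular set $\Sigma^{\ast}\#a$ and projecting yields $\Sigma^{\ast}-T$. So the argument can only go through when $\Sigma^{\ast}-T$ is context-free. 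I would therefore state explicitly that this is the hypothesis the construction uses, and check that the languages $T$ to which the theorem is later applied (in building $\{\mathrm{RA}_k\}$) satisfy it.
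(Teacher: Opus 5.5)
Your analysis is right, and the gap is in the paper's proof, not in your proposal: the statement is false for arbitrary $T\in\mathcal{CFL}$. Your counterexample is valid. With $L=\{\xi\}$ and a co-factor letter $z$, one gets $(T\otimes_{\pi}L)\cap\Sigma^{\ast}\#z=(\Sigma^{\ast}-T)\#z$. So $T=\{a,b,c\}^{\ast}-\{a^{n}b^{n}c^{n}:n\geq 0\}$ already yields a non-context-free $T\otimes_{\pi}L$.

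The paper's pushdown construction breaks at the sentence ``when the simulation of $\mathcal{B}$ on $w_{i}$ ends, the automaton $\mathcal{N}$ knows $\varepsilon^{T}(w_{i})$''. An accepting simulated run of the nondeterministic $\mathcal{B}$ certifies $w_{i}\in T$, but a non-accepting run certifies nothing. So whenever $\mathcal{B}$ has some non-accepting run on a $w_{i}\in T$, $\mathcal{N}$ may rest on $r_{i}$ and accept words outside $T\otimes_{\pi}L$; in your example, $w_{1}\#z$ with $w_{1}\in T$.

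The construction is repaired exactly under your extra hypothesis: guess the bit, then verify it by running, above $X_{i}$, a pushdown automaton for $T$ or one for $\Sigma^{\ast}-T$. If $T$ is deterministic, a DPDA that reads all of $w_{i}$ decides the bit outright. For this corrected statement, your route through inverse morphism, intersection with a regular set and substitution is a legitimate alternative. It replaces the stack-sharing bookkeeping of the simulation with standard closure properties.

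Two remarks on details.

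\emph{The substitution.} As written, $\sigma(|)=D\#T\#$ is wrong when $D\neq\xi$: it glues $u_{j}$ onto the next skipped local factor and produces a doubled $\#$. A clean fix is to mark the last boundary instead of the first. Take $L'=g^{-1}(L)\cap(|\,\Gamma^{\ast})^{\ast}\dashv$ with $g$ erasing both markers, let $S=\#(\Sigma^{\ast}-T)\#\Gamma^{\ast}$, and set $\sigma(|)=S^{\ast}\#T\#$ and $\sigma(\dashv)=S^{\ast}$. Then $T\otimes_{\pi}L$ is the left quotient of $\sigma(L')$ by $\#$, and the case with no $T$-block is included.

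\emph{Your planned follow-up check fails at the first nontrivial level.}
\begin{itemize}
\item The complement of RA is not context-free. Its intersection with $101^{+}0101^{+}\$1^{+}01\$$ (layer-one edges $(1,b_{1})$, $(1,b_{2})$, layer-two edge $(c,1)$) consists of the words with $c\notin\{b_{1},b_{2}\}$. Ogden's lemma excludes this set: mark the block $1^{c}$ in the word with $b_{1}=p+p!$, $b_{2}=p+2p!$, $c=p$, and pumping up makes $c$ equal to $b_{1}$ or $b_{2}$.
\item Hence the complement of $\mathrm{RA}_{1}$ is not context-free: intersect it with $1\#_{1}\{0,1,\$\}^{\ast}$.
\item The same extraction as in your counterexample then applies to the words $w\#_{2}\$\#_{2}x_{0}\#_{2}101\$$ with $x_{0}=1\#_{1}101\$\in\mathrm{RA}_{1}$. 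The masked concatenation is $\$101\$\notin\mathrm{RA}$ if $w\in\mathrm{RA}_{1}$, and $101\$\in\mathrm{RA}$ otherwise. This shows $\mathrm{RA}_{2}=\mathrm{RA}_{1}\otimes\mathrm{RA}\notin\mathcal{CFL}$.
\end{itemize}
So your hypothesis is unavailable exactly where the paper uses the theorem. The argument that $L_{G_{0}}$ bounds $\{\mathrm{RA}_{k}\}_{k\geq0}$ needs every $\mathrm{RA}_{k}$ to be context-free, so it does not go through.
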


\begin{proof}
Let $\mathcal{A}$ and $\mathcal{B}$ be pushdown automata accepting $L$ and $%
T.$ Let $\Theta $ and $\Delta $ be the work alphabets of these automata. We
suppose that those alphabets are disjoint. We construct a pushdown automaton 
$\mathcal{N}$ accepting $T\otimes L$. The work alphabet of $\mathcal{N}$ is $%
\Theta \cup \Delta .$

Let%
\begin{equation*}
s\left( w,r\right) =w_{1}\#r_{1}\#\cdots \#v_{n}\#r_{n}
\end{equation*}%
be an input of $\mathcal{N}$. The automaton works on that input as follows:

Let $i\leq n$. Suppose that $\mathcal{N}$ is starting to process the factor $%
w_{i}.$ Automaton $\mathcal{N}$ simulates the computation of $\mathcal{B}$
on $w_{i}$. We assume:

\begin{itemize}
\item At the start of this simulation, the pushdown stack contains a string
$X_i \in \Theta^\ast$. This string $X_i$ coincides with the
content of the pushdown store of $\mathcal{A}$ after a run of
this automaton on
\[
\bigodot_{j \in T(w,i)} r_j ,
\]
where
\[
T(w,i)=\{\, j<i : \varepsilon^T(w_j)=1 \,\}.
\]

\item During the simulation of $\mathcal{B}$ in $w_i$, the automaton
$\mathcal{N}$ cannot access $X_i$.
When the simulation of $\mathcal{B}$ on $w_{i}$ ends, the automaton $%
\mathcal{N}$ knows $\varepsilon ^{T}\left( w_{i}\right) $. Then it proceeds
as follows:

\begin{enumerate}
\item $\mathcal{N}$ pops from the stack all symbols above $X_{i}$ belonging
to $\Gamma .$

\item If $w_{i}\in T$, the automaton $\mathcal{N}$ continues with the
simulation of $\mathcal{A}$ while scanning the co-factor $r_{i}$. During
this phase, $\mathcal{N}\ $is allowed to access $X_{i}$ (i.e., $\mathcal{N}\ 
$is allowed to access all the content of its pushdown store).

\item If $w_{i}\notin T$, the automaton $\mathcal{N}$ is not allowed to
continue with the simulation of $\mathcal{A}$ while scanning the co-factor $%
r_{i}$ ($\mathcal{N}$ undergoes a \textit{resting phase} while 
scanning $r_{i}$).
\end{enumerate}

\end{itemize}

$\mathcal{N}$ accepts $s\left( w,r\right) $ if and only if the simulation
ends in an accepting state of $\mathcal{A}$. 

It can be verified that $\mathcal{N}$ accepts the language $T\otimes _{\pi }L
$. The theorem follows. 
\end{proof}

\begin{definition}
Let $L$ be a context-free language. We now define a sequence $\left\{
L_{k}\right\} _{k\geq 0}\subset \mathcal{CFL}$. We define this sequence by
induction on $k$:

\begin{enumerate}
\item $L_{0}=1\cdot \left\{ 0,1\right\} ^{\ast }.$

\item Let $\Sigma _{k}$ be the alphabet of $L_{k}$, let $\#_{k+1}\notin
\Sigma _{k}$. We define
\begin{equation*}
L_{k+1}=L_{k}\otimes _{\#_{k+1}}L.
\end{equation*}
\end{enumerate}
\end{definition}

The next proposition is straightforward.

\begin{proposition}
$\mathcal{NL}\neq \log \mathcal{CFL}$ if and only if there exists $L\in 
\mathcal{CFL}$ such that $\left\{ L_{k}\right\} _{k\geq 0}$ is high in the
nondeterministic pebble hierarchy.
\end{proposition}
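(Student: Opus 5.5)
The plan is to prove the two implications separately. The backward one follows from Corollary \ref{high} and Theorem \ref{Context-free}. The forward one follows by choosing $L$ to be Greibach's hardest language $L_{G_{0}}$ and embedding $L_{G_{0}}$ into $L_{1}$.

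($\Leftarrow$) Suppose $L\in\mathcal{CFL}$ is such that $\{L_k\}_{k\ge 0}$ is high. I first show $\{L_k\}\subseteq\mathcal{CFL}$ by induction on $k$. The base case holds because $L_0=1\cdot\{0,1\}^\ast$ is regular. For the step, $L_{k+1}=L_k\otimes_{\#_{k+1}}L$ is the masked concatenation with mask $T=L_k$, so Theorem \ref{Context-free} applies. Thus $\{L_k\}$ is a high sequence of context-free languages, and Corollary \ref{high}, via the argument that $L_{G_0}$ is an upper bound, gives $L_{G_0}\in\log\mathcal{CFL}-\mathcal{NL}$. Hence $\mathcal{NL}\neq\log\mathcal{CFL}$.

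($\Rightarrow$) Suppose $\mathcal{NL}\neq\log\mathcal{CFL}$. I first note that $L_{G_0}\notin\mathcal{NL}$. Indeed, if $L_{G_0}\in\mathcal{NL}$, then every context-free language, being an inverse morphic image of $L_{G_0}$, is in $\mathcal{NL}$, since $\mathcal{NL}$ is closed under inverse morphisms. Because $\mathcal{NL}$ is closed under logspace reductions, this would give $\log\mathcal{CFL}\subseteq\mathcal{NL}$, and combined with $\mathcal{NL}\subseteq\log\mathcal{CFL}$ this contradicts the hypothesis. By Theorem \ref{Nondeterministic}, it follows that $L_{G_0}\notin\mathcal{NREG}_m$ for every $m$.

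Now take $L=L_{G_0}$. I claim $L_1=L_0\otimes_{\#_1}L_{G_0}$ already lies outside every $\mathcal{NREG}_m$, which makes the sequence high trivially, with $k=1$ working for every $m$. The idea is that a word $u=a_1\cdots a_n$ corresponds to the bi-string $1\#a_1\#1\#a_2\#\cdots\#1\#a_n$. Every local factor equals $1\in L_0$, so the masked concatenation returns exactly $u$, and $u\in L_{G_0}$ if and only if this bi-string lies in $L_1$. Therefore, if $L_1\in\mathcal{NREG}_m$ via an automaton $\mathcal{M}$, I would build an $m$-pebble automaton for $L_{G_0}$ that simulates $\mathcal{M}$ on this virtual bi-string. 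Each input cell stores one block $1\#a_i\#$, and the finite control records the offset inside the block for the head and for each pebble, exactly as in the proof of closure of $\mathcal{NREG}_k$ under inverse morphisms. This gives $L_{G_0}\in\mathcal{NREG}_m$, a contradiction.

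The main obstacle is a boundary issue. The map $a\mapsto 1\#a\#$ is not literally a word morphism onto bi-strings, because it leaves a trailing $\#$, and the number of separators must be $2n-1$. Therefore a naive morphism with a fixed number of $\#$ per letter cannot produce well-formed bi-strings for all $n$. I would not invoke the closure theorem verbatim. Instead I would redo its simulation with one modification: the block attached to the last input cell is $1\#a_n$ rather than $1\#a_n\#$. The simulating automaton detects the right end of the input and shortens the last block accordingly. This is a finite-state adjustment that needs no extra pebbles, so the simulation still uses only $m$ pebbles.
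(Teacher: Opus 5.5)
Your ``only if'' direction is sound and is essentially the paper's argument. The paper reduces a language outside $\mathcal{NL}$ to $L_1$ by a logspace reduction. You fix $L=L_{G_0}$, which makes explicit that $L$ must be context-free, and you transfer membership pebble-for-pebble by a direct simulation. Your boundary fix works but can be avoided: $a_1\cdots a_n\mapsto 1\#_1a_1\#_1\cdots\#_1 1\#_1a_n$ is trivially logspace computable, so closure of $\mathcal{NL}$ under logspace reductions already gives $L_1\notin\mathcal{NL}$, and Theorem \ref{Nondeterministic} finishes.

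The gap is in the ``if'' direction, at the induction step $L_k\in\mathcal{CFL}\Rightarrow L_{k+1}\in\mathcal{CFL}$, because Theorem \ref{Context-free} is false as stated.

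\emph{Why it fails.} Membership in $T\otimes_\pi L$ depends on $w_i\notin T$ for every discarded index. A nondeterministic pushdown automaton can certify $w_i\in T$ but in general cannot certify $w_i\notin T$. The sentence ``$\mathcal{N}$ knows $\varepsilon^{T}(w_i)$'' in the paper's proof is exactly where the argument breaks.

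\emph{Counterexample.} Take $T=\{a,b,c\}^{\ast}\setminus\{a^jb^jc^j: j\geq 0\}$ and $L=\{\xi\}$ over $\{d\}$. Then $(T\otimes_\pi L)\cap\{a,b,c\}^{\ast}\#d=\{a^jb^jc^j\#d: j\geq 0\}$, which is not context-free.

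\emph{Effect on the sequence.} The construction does work when the mask is regular, so $L_1\in\mathcal{CFL}$. From $k=1$ on, however, you would also need $\mathrm{co}$-$L_k\in\mathcal{CFL}$, and this fails in general. For instance, suppose $\xi\in L$ and $x\notin L$. Then $L_2\cap\Sigma_1^{\ast}\#_2x=(\Sigma_1^{\ast}\setminus L_1)\#_2x$. So $L_2\in\mathcal{CFL}$ would make $\mathrm{co}$-$L_1$ context-free, and then, via the words $1\#_1a_1\#_1\cdots\#_1 1\#_1a_n$, also $\Sigma^{+}\setminus L$.

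\emph{Consequence.} Without $\{L_k\}\subseteq\mathcal{CFL}$, Corollary \ref{high} cannot be invoked. Nothing else in the paper gives a pebble bound that is uniform in $k$ under $\mathcal{NL}=\log\mathcal{CFL}$, since logspace reductions to $L_{G_0}$ do not preserve the number of pebbles. The paper's own proof has the same hole: it takes $\{L_k\}\subset\mathcal{CFL}$ for granted from the definition of the sequence. So you did not introduce the gap, but as written the ``if'' direction is unproved.

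\emph{On a possible repair.} The pushdown construction does go through when both the mask and its complement are context-free, so the induction survives for $L\in\mathcal{CFL}\cap\mathrm{co}$-$\mathcal{CFL}$. That is a different statement, though. Your choice $L_{G_0}$ does not qualify, since inverse morphisms commute with complement and every context-free language would otherwise have a context-free complement.
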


\begin{proof}
Let $L$ be a language in $\log \mathcal{CFL-NL}$. This language is logspace
reducible to $L_{1}$. Then $L_{1}\in \log \mathcal{CFL-NL}$ and $\left\{
L_{k}\right\} _{k\geq 0}$ is high in the nondeterministic pebble hierarchy.
On the other hand, if $\left\{ L_{k}\right\} _{k\geq 0}$ is high in the
pebble hierarchy the language $L_{G_{0}}$ belongs to $\log \mathcal{CFL-NL}$.
\end{proof}

\section{Entropy of pebble automata\label{Section5}}

\begin{definition}
Let $X_n$ be a random variable distributed over the set
$\{1,\ldots,n\}$. The Shannon entropy of $X_n$ is defined as
\[
H(X_n)
=
-\sum_{i \leq n}
\Pr(X_n=i)\,
\log\bigl(\Pr(X_n=i)\bigr),
\]
using the convention
\[
0 \cdot \log(0)=0.
\]
\end{definition}

A standard fact is that entropy is maximised by the uniform distribution:

\begin{itemize}
\item $H\left( X_{n}\right) \leq \log \left( n\right) $, and

\item the equality holds when $X_{n}$ is uniformly distributed.
\end{itemize}

\begin{definition}
Let $X$ and $Y$ be random variables. We denote by $H\left( X,Y\right) $ the
entropy of the joint random variable $\left( X,Y\right) $; this is called
the joint entropy of $X$ and $Y$.
\end{definition}

\begin{definition}
For random variables $X$ and $Y$ the conditional entropy $H\left( X\mid
Y\right) $ is defined by 
\begin{equation*}
H\left( X\mid Y\right) =H\left( X,Y\right) -H\left( Y\right) .
\end{equation*}
\end{definition}

See \cite{Shannon} for further foundational material on entropy and
information theory.

\begin{definition}
Let $\mathcal{A}$ be a nondeterministic $m$-pebble automaton accepting $%
L\subseteq \Sigma ^{\ast }$. Let $Q$ denote the set of internal states and
let $w\in \Sigma ^{n}$. The \textit{configurations} reached by $\mathcal{A}$
on input $w$ are $\left( m+2\right) $-tuples in 
\begin{equation*}
\left\{ 0,1,\ldots ,n\right\} ^{m+1}\times Q.
\end{equation*}%
Let 
\begin{equation*}
\mathcal{D}_{t}=\left( h,p_{1},\ldots ,p_{m},q\right)
\end{equation*}%
be one such tuple, representing the configuration reached by $\mathcal{A}$
at time $t$. We have:

\begin{enumerate}
\item $h$ is the current head position.

\item $p_{1},\ldots ,p_{m-1},$ and $p_{m}$ denote the locations of the $m$
pebbles (if pebble $i$ is not on the tape we set $p_{i}=0$).

\item $q\in Q$ is the internal state at time $t$.
\end{enumerate}
\end{definition}

\begin{definition}
Let $\mathcal{D}_{t}$ be as above. We call $\mathcal{D}_{t}$ the
instantaneous configuration of $\mathcal{A}$ at time $t$. Let 
\begin{equation*}
\mathcal{I}_{t}=\left( p_{1},\ldots ,p_{m},q\right) .
\end{equation*}%
We refer to $\mathcal{I}_{t}$ as the coding configuration of $\mathcal{A}$
at time $t$.
\end{definition}

\begin{definition}
\label{RandomVariable}Let $L\subset \Sigma ^{\ast }$ and let $\mathcal{A}$
be a nondeterministic $m$-pebble automaton accepting $L$. Let $\mathcal{H}$
be an infinite subset of $\Sigma ^{\ast }$ and $n\geq 1$. We define $X_{%
\mathcal{A}}\left( \mathcal{H},n\right) $ as a random variable uniformly
distributed over the set of coding configurations visited by $\mathcal{A}$
during its computations on the elements of $\mathcal{H}\cap \Sigma ^{n}$.
\end{definition}

The following theorem is straightforward.

\begin{theorem}
Let $\mathcal{A}$ be a $k$-pebble automaton accepting a language $L\subseteq
\Sigma ^{\ast }$. The following inequality holds asymptotically: 
\begin{equation*}
H\left( X_{\mathcal{A}}\left( \Sigma ^{\ast },n\right) \right) \leq \left(
k+1\right) \cdot \log \left( n\right) .
\end{equation*}
\end{theorem}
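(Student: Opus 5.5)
The plan is a counting argument. Entropy is bounded by the logarithm of the support size, and we bound the number of possible coding configurations.

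First, recall from Definition \ref{RandomVariable} that $X_{\mathcal{A}}\left( \Sigma ^{\ast },n\right) $ is uniformly distributed over the set $C_{n}$ of coding configurations visited by $\mathcal{A}$ on inputs of length $n$. Hence
\begin{equation*}
H\left( X_{\mathcal{A}}\left( \Sigma ^{\ast },n\right) \right) =\log \left\vert C_{n}\right\vert .
\end{equation*}
In general, the standard fact that entropy is maximised by the uniform distribution gives the bound $\leq \log \left\vert C_{n}\right\vert $ for any distribution on $C_{n}$. So it suffices to bound $\left\vert C_{n}\right\vert $.

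Second, a coding configuration is a tuple $\left( p_{1},\ldots ,p_{k},q\right) $ with each $p_{i}\in \left\{ 0,\ldots ,n\right\} $ and $q\in Q$. Therefore $C_{n}\subseteq \left\{ 0,\ldots ,n\right\} ^{k}\times Q$, and
\begin{equation*}
\left\vert C_{n}\right\vert \leq \left\vert Q\right\vert \cdot \left( n+1\right) ^{k}.
\end{equation*}
Taking logarithms,
\begin{equation*}
H\leq k\log \left( n+1\right) +\log \left\vert Q\right\vert .
\end{equation*}

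Third, I would absorb the lower-order terms. The quantity $\log \left\vert Q\right\vert $ is a constant independent of $n$, and
\begin{equation*}
k\log \left( n+1\right) -k\log \left( n\right) =k\log \left( 1+1/n\right) \rightarrow 0.
\end{equation*}
So for all sufficiently large $n$ (once $\log n\geq \log \left\vert Q\right\vert +1$, say) we get $H\leq k\log n+\log n=\left( k+1\right) \log n$. This is the claimed asymptotic inequality, and in fact the slightly sharper bound $k\log n+O\left( 1\right) $. If one prefers to count the head position too, as in the instantaneous configuration $\mathcal{D}_{t}$, the same argument gives $\left( k+1\right) \log n+O\left( 1\right) $, and "asymptotically" is read as up to an additive constant.

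There is no real obstacle here. The only point needing care is the meaning of "asymptotically". With coding configurations (which exclude the head), the constant $\log \left\vert Q\right\vert $ is absorbed into the spare $\log n$. With full configurations, one must read the inequality modulo $o\left( \log n\right) $ terms. I would state the result with the coding configurations, as Definition \ref{RandomVariable} uses, so that the clean bound holds for all large $n$.
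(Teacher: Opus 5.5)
Your proposal is correct and is essentially the paper's proof: bound the support of $X_{\mathcal{A}}(\Sigma^{\ast},n)$ by $|Q|$ times the number of pebble-position tuples, take logarithms to get $k\log(n)+\log|Q|$ up to lower-order terms, and absorb $\log|Q|$ into the spare $\log(n)$ for large $n$. Your count $(n+1)^{k}$ is slightly more faithful than the paper's $\{1,\ldots,n\}^{k}$, since pebble positions range over $\{0,\ldots,n\}$. Your parenthetical threshold $\log n\geq \log|Q|+1$ should also require $n$ large enough that $k\log(1+1/n)\leq 1$; this is harmless, because you only claim the bound for sufficiently large $n$.
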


\begin{proof}
Let $\mathcal{A}$ be a $k$-pebble automaton accepting $L\subseteq \Sigma
^{\ast }$. The random variable $X_{\mathcal{A}}\left( \Sigma ^{\ast },n\right) $
is distributed over a subset of $\left\{ 1,...,n\right\} ^{k}\times Q$. Then 
\begin{equation*}
H\left( X_{\mathcal{A}}\left( \Sigma ^{\ast },n\right) \right) \leq k\cdot
\log \left( n\right) +\log \left( \left\vert Q\right\vert \right) .
\end{equation*}%
There exists $N$ such that, for all $n\geq N,$ the following inequality
holds:%
\begin{equation*}
k\cdot \log \left( n\right) +\log \left( \left\vert Q\right\vert \right)
\leq \left( k+1\right) \cdot \log \left( n\right) .
\end{equation*}%
The theorem follows.
\end{proof}

\section{Information theoretic decomposition\label{Section6}}

Let $L\subseteq \Sigma ^{\ast }$ be a language in $\mathcal{CFL}$. We study
the sequence $\left\{ L_{k}\right\} _{k\geq 0}$. The language $L_{k+1}$ is
defined by two different constraints:

\begin{itemize}
\item the local constraint encoded by $L_{k}$ and which interacts with the
local factors $w_{1},...,w_{n}$; and

\item the global constraint encoded by $L$ that interacts with the substring 
$\bigodot_{i \in L_k(w)} r_i$
\end{itemize}

We construct a set $\mathcal{H}_{k+1}^{\Sigma }$ for which these two constraints
become \textit{orthogonal} to each other. Over the constructed sets, the
entropy of pebble automata accepting $L_{k+1}$ can be decomposed as the sum
of two entropies, namely: the entropy of pebble automata accepting $L_{k}$
and the \textit{cross-entropy} of pebble automata accepting $L_{1}$, (see
below). A key element in the construction of the $\mathcal{H}_{k}^{\Sigma }$%
's is that their constituent bi-strings are made-up of a small (logarithmic)
number of factors.

\begin{remark}
\textit{An important remark about the importance of being constituted by a
polylogarithmic number of factors.} Let $\left\{ X_{n}\right\} _{n\geq 1}$
be a sequence of random variables and let $C,D,r>0.$ We have that, for all $%
\gamma >0,$ the inequality 
\begin{equation*}
H\left( X_{n}\right) \geq \left( 1-\gamma \right) \cdot C\cdot \log \left(
n\right)
\end{equation*}%
holds asymptotically if and only if, for all $\gamma >0$, the following
inequality holds asymptotically:%
\begin{equation*}
H\left( X_{n}\right) \geq \left( 1-\gamma \right) \cdot C\cdot \log \left(
D\cdot \log ^{r}\left( n\right) \cdot n\right) .
\end{equation*}
\end{remark}

\begin{definition}
Let $\Sigma ,$ $\Gamma $, and $\#$ be as above. Let $\mathcal{H}$ be an
infinite subset of $\Gamma ^{\ast }$. Suppose $\mathcal{H}\cap \Gamma
^{n}\neq \emptyset $. We use $\Sigma \left[ \mathcal{H}\right] \left( 3\cdot
\left\lfloor \log \left( n\right) \right\rfloor \cdot \left( n+1\right)
\right) $ to denote the set of all bi-strings $s\left( w,r\right) $, over $%
\left( \Sigma ,\Gamma ,\#\right) $, that satisfy the following four
conditions:

\begin{enumerate}
\item $s\left( w,r\right) $ is equal to: 
\begin{equation*}
w_{1}\#r_{1}\#\cdots \#w_{\left\lfloor \log \left( n\right) \right\rfloor
}\#r_{\left\lfloor \log \left( n\right) \right\rfloor }\#^{s}.
\end{equation*}

\item For all $1\leq i\leq \left\lfloor \log \left( n\right) \right\rfloor $%
, the string $w_{i}$ belongs to $\mathcal{H}\cap \Gamma ^{n}.$

\item For all $0\leq i\leq \left\lfloor \log \left( n\right) \right\rfloor $%
, the string $r_{i}$ belongs to $\Sigma ^{\leq n}$.

\item $s=3\cdot \left\lfloor \log \left( n\right) \right\rfloor \cdot \left(
n+1\right) -\left\vert w_{1}\#r_{1}\#\cdots \#w_{\left\lfloor \log \left(
n\right) \right\rfloor }\#r_{\left\lfloor \log \left( n\right) \right\rfloor
}\right\vert .$
\end{enumerate}
\end{definition}

\begin{definition}
We use $\Sigma[\mathcal{H}]$ to denote the set
\[
\bigcup_{n \geq 4}
\Sigma[\mathcal{H}]
\bigl(
3 \lfloor \log(n) \rfloor (n+1)
\bigr).
\]
\end{definition}

\begin{definition}
Let $\Sigma $ be a finite alphabet (the alphabet of $L$). Define:
\end{definition}

\begin{enumerate}
\item $\mathcal{H}_{0}=\left\{ 0,1\right\} ^{+}$.

\item $\mathcal{H}_{k+1}^{\Sigma }=\Sigma \left[ \mathcal{H}_{k}^{\Sigma }%
\right] .$
\end{enumerate}

\begin{remark}
Let $L\subseteq \Sigma ^{\ast }$. Consider the language $L_{1}=L_{0}\otimes
L $ and the set $\mathcal{H}_{1}^{\Sigma }\left( 3\cdot \left\lfloor \log
\left( n\right) \right\rfloor \cdot \left( n+1\right) \right) $.

Let%
\begin{equation*}
s\left( w,r\right) =w_{1}\#_{1}r_{1}\#_{1}\cdots \#_{1}w_{\left\lfloor \log
\left( n\right) \right\rfloor }\#_{1}r_{\left\lfloor \log \left( n\right)
\right\rfloor }\#_{1}^{s}
\end{equation*}%
be a bi-string in $\mathcal{H}_{1}^{\Sigma }\left( 3\cdot \left\lfloor \log
\left( n\right) \right\rfloor \cdot \left( n+1\right) \right) $. This bi-string belongs to $L_1$ if and only if
\[
\bigodot_{i : w_i[1]=1} r_i \in L .
\]So, whether $s\left( w,r\right) $ belongs to $L_{1}$
or not is completely determined by the co-factors $r_{1},...,r_{\left\lfloor
\log \left( n\right) \right\rfloor }\in \Sigma ^{\ast }$; and the short
binary string $w_{1}\left[ 1\right] \cdots w_{\left\lfloor \log \left(
n\right) \right\rfloor }\left[ 1\right] .$
\end{remark}

\begin{definition}
Let $\mathcal{A}_{1}$ be a nondeterministic pebble automaton. We say that $%
\mathcal{A}_{1}$ separates $L_{1}\cap \mathcal{H}_{1}^{\Sigma }$ from $co$-$%
L_{1}\cap \mathcal{H}_{1}^{\Sigma }$ if $\mathcal{A}_{1}$ accepts all the
strings in $L_{1}\cap \mathcal{H}_{1}^{\Sigma }$ and rejects all the strings
in $co$-$L_{1}\cap \mathcal{H}_{1}^{\Sigma }$.
\end{definition}

\begin{definition}
Let $\mathcal{A}_{1}$ be a nondeterministic pebble automaton separating $%
L_{1}\cap \mathcal{H}_{1}^{\Sigma }$ from $co$-$L_{1}\cap \mathcal{H}%
_{1}^{\Sigma }$. We use $Y_{\mathcal{A}_{1}}\left( \mathcal{H}_{1}^{\Sigma
},n\right) $ to denote a random variable that is uniformly distributed over
the set of coding configurations that are visited by $\mathcal{A}_{1}$, when
this automaton is processing an input $s\left( w,r\right) \in \mathcal{H}%
_{1}^{\Sigma }\left( 3\cdot \left\lfloor \log \left( n\right) \right\rfloor
\cdot \left( n+1\right) \right) $ and it is querying a local bit $w_{i}\left[
1\right] $ .
\end{definition}

We have:

\begin{theorem}
\label{NewProof}Let $L\subseteq \Sigma ^{\ast }$ be a context-free language.
Let $k\geq 1$ and let $\mathcal{A}_{k+1}$ be a nondeterministic pebble
automaton accepting $L_{k+1}$. There exist nondeterministic pebble automata $%
\mathcal{A}_{k}$ and $\mathcal{A}_{1}$ such that:

\begin{enumerate}
\item $\mathcal{A}_{k}$ accepts the language $L_{k}$.

\item $\mathcal{A}_{1}$ separates $L_{1}\cap \mathcal{H}_{1}^{\Sigma }$ from 
$co$-$L_{1}\cap \mathcal{H}_{1}^{\Sigma }$.

\item The following equality holds:%
\begin{eqnarray*}
&&H\left( X_{\mathcal{A}_{k+1}}\left( \mathcal{H}_{k+1}^{\Sigma },3\cdot
\left\lfloor \log \left( n\right) \right\rfloor \cdot \left( n+1\right)
\right) \right) \\
&=&H\left( X_{\mathcal{A}_{k}}\left( \mathcal{H}_{k}^{\Sigma },n\right)
\right) +H\left( Y_{\mathcal{A}_{1}}\left( \mathcal{H}_{1}^{\Sigma
},n\right) \right) .
\end{eqnarray*}
\end{enumerate}
\end{theorem}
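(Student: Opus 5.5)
Since $X_{\mathcal{A}}$ and $Y_{\mathcal{A}_1}$ are defined as \emph{uniform} random variables over sets of visited coding configurations, every entropy in item 3 is the logarithm of a cardinality. So the plan is to reduce the claimed equality to a cardinality identity
\[
\bigl|S_{k+1}\bigr| = \bigl|S_k\bigr|\cdot\bigl|S_1\bigr|,
\]
where $S_{k+1}$ is the set of coding configurations of $\mathcal{A}_{k+1}$ on $\mathcal{H}_{k+1}^{\Sigma}\cap\Sigma_{k+1}^{3\lfloor\log n\rfloor(n+1)}$, $S_k$ is that of $\mathcal{A}_k$ on $\mathcal{H}_k^{\Sigma}\cap\Sigma_k^{n}$, and $S_1$ is the set of configurations of $\mathcal{A}_1$ at local-bit queries. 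Both auxiliary automata would be built \emph{from} $\mathcal{A}_{k+1}$ by simulation, in the style of the closure argument for inverse morphisms (finite control carries bounded offsets; pebbles are reused).

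\textbf{Constructing $\mathcal{A}_k$.} On input $u\in\Sigma_k^{n}$, $\mathcal{A}_k$ simulates $\mathcal{A}_{k+1}$ on the bi-string $u\#_{k+1}r\#_{k+1}\cdots$. Here $u$ is placed in one local slot, the remaining slots hold fixed words of $\mathcal{H}_k^{\Sigma}$, and the co-factors are fixed. These fixed words are chosen so that the $L$-constraint is satisfied exactly when $u\in L_k$. This choice is possible when $L$ separates two co-factor patterns that differ only in whether slot $i$ is selected, which I would have to assume or arrange. The simulated tape is a logspace-computable (in fact finite-state, up to a counter) transformation of $u$, which is what the previous closure theorem handles. \textbf{Constructing $\mathcal{A}_1$.} Dually, $\mathcal{A}_1$ runs $\mathcal{A}_{k+1}$ on $\mathcal{H}_1^{\Sigma}$ inputs. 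Each local bit $w_i[1]$ is expanded into a canonical member of $\mathcal{H}_k^{\Sigma}$ inside or outside $L_k$, and this expanded word is computed on the fly. Correctness of both constructions follows from the definition of $\otimes_{\pi}$ and the Remark that membership in $L_1$ depends only on the bits $w_i[1]$ and the $r_i$.

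\textbf{Main obstacle.} The hard step is the exact product decomposition of $S_{k+1}$. I would first try to exhibit a bijection $S_{k+1}\to S_k\times S_1$ by mapping each configuration to the pair (the projection of pebble positions into the local factor currently being examined, the configuration at the enclosing local-bit query). However, I expect this to fail as stated. An arbitrary $\mathcal{A}_{k+1}$ need not process the local factors one at a time; it may interleave pebbles across factors and co-factors. In that case the map is neither injective nor surjective, and the auxiliary automata would have to be chosen adversarially to force it.

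Realistically I can only hope for an injection $S_{k+1}\hookrightarrow S_k\times S_1$ modulo polylogarithmic factors, which the Remark on polylogarithmically many factors makes harmless. That injection yields an \emph{upper} bound $H(X_{\mathcal{A}_{k+1}})\le H(X_{\mathcal{A}_k})+H(Y_{\mathcal{A}_1})+o(\log n)$. This is the wrong direction for a lower-bound induction. Obtaining the reverse inequality, let alone equality, requires showing that $\mathcal{A}_{k+1}$ \emph{must} realize every combination of a "local" configuration and a "global" one. That is essentially a space lower bound in its own right, and I do not see how to derive it from the results stated so far. I would flag this as the point where the argument needs a genuinely new idea, and as likely the place where the statement as written is not true for arbitrary $\mathcal{A}_{k+1}$.
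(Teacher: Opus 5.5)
\textbf{The gap.} Your proposal does not prove item 3, as you acknowledge yourself. The reduction to $|S_{k+1}|=|S_k|\cdot|S_1|$ is correct, because $X$ and $Y$ are uniform over the sets of visited configurations. But your simulations of $\mathcal{A}_{k+1}$ can give at most items 1 and 2, together with bounds such as $|S_k|\lesssim|S_{k+1}|$ and $|S_1|\lesssim|S_{k+1}|$ \emph{separately}. Corollary~\ref{corolarioInductivo} needs the lower-bound direction $H(X_{\mathcal{A}_{k+1}})\ge H(X_{\mathcal{A}_k})+H(Y_{\mathcal{A}_1})-o(\log n)$. Equivalently, it needs $|S_k|\cdot|S_1|\lesssim|S_{k+1}|$ up to polylogarithmic factors. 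That is a direct-sum statement: every automaton for $L_{k+1}$ must hold a complete local configuration and an independent global configuration at the same time. You are right that this is the missing idea, and right that nothing earlier in the paper yields it.

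\textbf{The paper's proof assumes this step rather than proving it.} It stipulates that $\mathcal{A}_{k+1}$ is built from two pieces: a local subroutine $\mathcal{A}_k$ that computes $\varepsilon^{L_k}(w_i)$, and a global subroutine $\mathcal{B}_{k+1}$. It then writes $H(X)=H(\mathcal{C}_k)+H(X\mid\mathcal{C}_k)$ and asserts $H(\mathcal{C}_k)=H(X_{\mathcal{A}_k}(\mathcal{H}_k^{\Sigma},n))$ and $H(X\mid\mathcal{C}_k)=H(Y_{\mathcal{B}_1}(\mathcal{H}_1^{\Sigma},n))$. For an arbitrary $\mathcal{A}_{k+1}$ none of this is justified, for the following reasons.
\begin{enumerate}
\item No modular decomposition is available; your interleaving objection applies verbatim.
\item $\mathcal{C}_k$ is not a function of the outcome of $X$. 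Coding configurations omit the head position. Co-factors have variable length, so the offset of $w_i$ depends on the input. And $\mathcal{C}_k$ is undefined whenever the automaton is not inside a local factor. So the chain rule is not even set up on a common probability space.
\item Even if $\mathcal{C}_k=g(X)$ for some map $g$, the general facts give only $H(g(X))\le\log|\mathrm{im}\,g|$ and $H(X\mid g(X))\le\log\max_c|g^{-1}(c)|$. The claimed identities need $g(X)$ to be uniform, i.e.\ all fibres of equal size, and each fibre to have size $|S_1|$. That is exactly the product structure you could not establish.
\end{enumerate}
So your proposal stops exactly where new content is needed, and the paper passes that point only by assumption. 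Additive-composition intuitions of this kind are known to be unreliable for general space-bounded machines. For example, the conjectured additive space lower bound for Tree Evaluation was overturned by Cook and Mertz's $O(\log n\cdot\log\log n)$-space algorithm, which reuses memory across levels rather than storing the states of different levels independently.
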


\begin{proof}
Let $\mathcal{A}_{k+1}$ be a nondeterministic pebble automaton accepting $%
L_{k+1}$. Let $n\geq 2$ and suppose that $\mathcal{H}_{k}^{\Sigma }\left(
n\right) $ is non-empty. We analyze the entropy of the random variable 
\begin{equation*}
X_{\mathcal{A}_{k+1}}\left( \mathcal{H}_{k+1}^{\Sigma },3\cdot \left\lfloor
\log \left( n\right) \right\rfloor \cdot \left( n+1\right) \right) .
\end{equation*}%
Let $\mathcal{C}_{k+1}$ be an outcome of this variable. We assume that $%
\mathcal{A}_{k+1}$ is checking a local factor $w_{i}\in \mathcal{H}%
_{k}^{\Sigma }\left( n\right) $ and computing the characteristic bit $%
\varepsilon ^{L_{k}}\left( w_{i}\right) $. We use $\mathcal{A}_{k}$ to
denote the nondeterministic pebble automaton that accepts $L_{k}$ and which
is used by $\mathcal{A}_{k+1}$ in order to compute the $\varepsilon ^{L_{k}}$%
-bits. The configuration $\mathcal{C}_{k+1}$ encodes a \textit{local
configuration} $\mathcal{C}_{k}$: the current configuration accessed by $%
\mathcal{A}_{k}$ on the local factor $w_{i}\in \mathcal{H}_{k}^{\Sigma
}\left( n\right) $. The following equality holds:%
\begin{eqnarray*}
&&H\left( X_{\mathcal{A}_{k+1}}\left( \mathcal{H}_{k+1}^{\Sigma },3\cdot
\left\lfloor \log \left( n\right) \right\rfloor \cdot \left( n+1\right)
\right) \right) \\
&=&H\left( \mathcal{C}_{k}\right) +H\left( X_{\mathcal{A}_{k+1}}\left( 
\mathcal{H}_{k+1}^{\Sigma },3\cdot \left\lfloor \log \left( n\right)
\right\rfloor \cdot \left( n+1\right) \right) \mid \mathcal{C}_{k}\right) .
\end{eqnarray*}

The random variable $\mathcal{C}_{k}$ is uniformly distributed over the set
of coding configurations visited by $\mathcal{A}_{k}$ on inputs from $%
\mathcal{H}_{k}^{\Sigma }\left( n\right) $. This means that $\mathcal{C}_{k}$
is equivalent to $X_{\mathcal{A}_{k}}\left( \mathcal{H}_{k}^{\Sigma
},n\right) $ and the following equality holds:%
\begin{equation*}
H\left( \mathcal{C}_{k}\right) =H\left( X_{\mathcal{A}_{k}}\left( \mathcal{H}%
_{k}^{\Sigma },n\right) \right) .
\end{equation*}

We observe that:

\begin{itemize}
\item $\mathcal{A}_{k+1}$ uses the nondeterministic local subroutine $%
\mathcal{A}_{k}$ to compute the $\varepsilon ^{L_{k}}$-bits, and

\item it uses a global subroutine $\mathcal{B}_{k+1}$ for computing the next
factor to be processed, processing the co-factors, and deciding whether the
bi-string $s\left( w,r\right) $ is accepted or rejected.
\end{itemize}

Let us show that we can use $\mathcal{B}_{k+1}$ for separating $L_{1}\cap 
\mathcal{H}_{1}^{\Sigma }$ from $co$-$L_{1}\cap \mathcal{H}_{1}^{\Sigma }$.
Thus, let 
\begin{equation*}
s\left( w,r\right) =w_{1}\#_{1}r_{1}\#_{1}\cdots \#_{1}w_{\left\lfloor \log
\left( n\right) \right\rfloor }\#_{1}r_{\left\lfloor \log \left( n\right)
\right\rfloor }\#_{1}^{s}\in \mathcal{H}_{1}^{\Sigma }.
\end{equation*}%
For all $i\leq n$, we have: $\varepsilon ^{L_{0}}\left( w_{i}\right) =w_{i}%
\left[ 1\right] $. We work on this instance as follows:

\begin{enumerate}
\item We simulate, on $s\left( w,r\right) $, the computation of $\mathcal{A}%
_{k+1}$ on the string%
\begin{equation*}
w_{1}\#_{k+1}r_{1}\#_{k+1}\cdots \#_{k+1}w_{\left\lfloor \log \left(
n\right) \right\rfloor }\#_{k+1}r_{\left\lfloor \log \left( n\right)
\right\rfloor }\#_{k+1}^{s}.
\end{equation*}

\item Each time $\mathcal{A}_{k+1}$ calls the local subroutine $\mathcal{A}%
_{k}$ we proceed as follows: we interrupt the execution of this subroutine
(on the current local factor $\#_{1}w_{i}$), ask for the value of $w_{i}%
\left[ 1\right] $, and set 
\begin{equation*}
\varepsilon ^{L_{k}}\left( w_{i}\right) =w_{i}\left[ 1\right] .
\end{equation*}
\end{enumerate}

We use $\mathcal{B}_{1}$ to denote this truncated version of $\mathcal{B}%
_{k+1}$. Note that $\mathcal{B}_{1}$ is a nondeterministic pebble automaton
that separates $L_{1}\cap \mathcal{H}_{1}^{\Sigma }$ from $co$-$L_{1}\cap 
\mathcal{H}_{1}^{\Sigma }$. Note also that%
\begin{equation*}
H\left( Y_{\mathcal{B}_{1}}\left( \mathcal{H}_{1}^{\Sigma },n\right) \right)
=H\left( X_{\mathcal{A}_{k+1}}\left( \mathcal{H}_{k+1}^{\Sigma },3\cdot
\left\lfloor \log \left( n\right) \right\rfloor \cdot \left( n+1\right)
\right) \mid \mathcal{C}_{k}\right) .
\end{equation*}

The theorem follows.
\end{proof}

\begin{definition}
Let $L\subseteq \Sigma ^{\ast }$ be a language in $\mathcal{CFL\cap L}$. We
say that $L$ is hard-to-cross if there exists $d>0$ such that, for all
nondeterministic pebble automata $\mathcal{A}_{1}$ separating $L_{1}\cap 
\mathcal{H}_{1}^{\Sigma }$ from $co$-$L_{1}\cap \mathcal{H}_{1}^{\Sigma }$
and all $\gamma >0,$ the following inequality holds asymptotically:%
\begin{equation*}
H\left( Y_{\mathcal{A}_{1}}\left( \mathcal{H}_{1}^{\Sigma },n\right) \right)
\geq d\cdot \left( 1-\gamma \right) \cdot \log \left( n\right) .
\end{equation*}
\end{definition}

We obtain the following corollary:

\begin{corollary}
\label{corolarioInductivo}Let $L\in \mathcal{CFL\cap NL}$ be a hard-to-cross
language, the sequence $\left\{ L_{k}\right\} _{k\geq 0}$ is high in the
nondeterministic pebble hierarchy.
\end{corollary}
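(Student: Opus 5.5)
The plan is to prove, by induction on $k\geq 1$, a quantitative lower bound on the entropy of every automaton accepting $L_k$, and then compare it with the upper bound $H(X_{\mathcal{A}}(\Sigma^{\ast},n))\leq (m+1)\log(n)$ proved for $m$-pebble automata in Section \ref{Section5}. Fix the constant $d>0$ given by the hard-to-cross hypothesis on $L$. The inductive claim is:
\begin{quote}
for every $k\geq 1$, every nondeterministic pebble automaton $\mathcal{A}_k$ accepting $L_k$, and every $\gamma>0$, the inequality $H(X_{\mathcal{A}_k}(\mathcal{H}_k^{\Sigma},N))\geq k\cdot d\cdot(1-\gamma)\cdot\log(N)$ holds asymptotically along the lengths $N$ for which $\mathcal{H}_k^{\Sigma}\cap\Sigma_k^{N}\neq\emptyset$.
\end{quote}

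\textbf{Base case $k=1$.} An automaton accepting $L_1$ in particular separates $L_1\cap\mathcal{H}_1^{\Sigma}$ from $co$-$L_1\cap\mathcal{H}_1^{\Sigma}$. The random variable $Y_{\mathcal{A}_1}(\mathcal{H}_1^{\Sigma},n)$ is uniform over a subset of the support of $X_{\mathcal{A}_1}(\mathcal{H}_1^{\Sigma},\cdot)$, namely the configurations visited while querying local bits. Both variables are uniform, so the entropy of each is the logarithm of the size of its support, and the inclusion of supports gives $H(X)\geq H(Y)$. Hard-to-crossness then gives the bound $d(1-\gamma)\log(n)$. The length of the inputs is $N=3\lfloor\log(n)\rfloor(n+1)$, and the Remark on polylogarithmically many factors lets me replace $\log(n)$ by $\log(N)$ at the cost of an arbitrarily small $\gamma$.

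\textbf{Inductive step.} Let $\mathcal{A}_{k+1}$ accept $L_{k+1}$. Theorem \ref{NewProof} provides an $\mathcal{A}_k$ accepting $L_k$ and an $\mathcal{A}_1$ separating $L_1\cap\mathcal{H}_1^{\Sigma}$ from $co$-$L_1\cap\mathcal{H}_1^{\Sigma}$, with
\[
H\bigl(X_{\mathcal{A}_{k+1}}(\mathcal{H}_{k+1}^{\Sigma},3\lfloor\log(n)\rfloor(n+1))\bigr)=H\bigl(X_{\mathcal{A}_k}(\mathcal{H}_k^{\Sigma},n)\bigr)+H\bigl(Y_{\mathcal{A}_1}(\mathcal{H}_1^{\Sigma},n)\bigr).
\]
The induction hypothesis bounds the first term by $k\,d(1-\gamma)\log(n)$, and hard-to-crossness bounds the second by $d(1-\gamma)\log(n)$. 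Their sum is at least $(k+1)d(1-\gamma)\log(n)$, and the Remark converts $\log(n)$ into $\log(3\lfloor\log(n)\rfloor(n+1))$ asymptotically. One bookkeeping point needs checking: $\gamma$ must be allowed to shrink at each of the finitely many levels $1,\dots,k+1$. Since $k$ is fixed, replacing $\gamma$ by $\gamma/(k+1)$ at each level handles this.

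\textbf{Conclusion.} Suppose, for contradiction, that every $L_k$ lies in some fixed $\mathcal{NREG}_m$. Take $k>(m+1)/d$ and $\gamma$ small enough that $k\,d(1-\gamma)>m+1$. Let $\mathcal{A}_k$ be an $m$-pebble automaton accepting $L_k$. The support of $X_{\mathcal{A}_k}(\mathcal{H}_k^{\Sigma},N)$ is contained in $\{0,\dots,N\}^m\times Q$, so the argument of the upper-bound theorem gives $H\leq(m+1)\log(N)$ asymptotically. This contradicts the lower bound on the infinitely many admissible $N$. Hence for every $m$ some $L_k\notin\mathcal{NREG}_m$, which is exactly highness.

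\textbf{Main obstacle.} The arithmetic above is routine. The real weight rests on applying Theorem \ref{NewProof} to an \emph{arbitrary} automaton accepting $L_{k+1}$, since that theorem presupposes that $\mathcal{A}_{k+1}$ splits cleanly into a local subroutine $\mathcal{A}_k$ and a global part whose conditional entropy equals that of an $L_1$-separator. I would simply invoke the theorem as stated. I flag this step, however, as the one where I would most expect the argument to need scrutiny, since a general pebble automaton need not have such a modular structure. A secondary point to verify is that the lengths $N$ at level $k$ form an infinite set on which the asymptotic statements are meaningful.
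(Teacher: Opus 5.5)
Your deduction is correct relative to the paper's earlier results and is essentially the paper's own argument (induction on $k$ through Theorem~\ref{NewProof} plus hard-to-crossness, then a clash with the $(m+1)\log(n)$ support bound). It is in fact tidier in three places: starting at $k=1$ via $\mathrm{supp}(Y)\subseteq\mathrm{supp}(X)$ avoids the paper's tacit use of Theorem~\ref{NewProof} at $k=0$, which is outside its stated range $k\geq 1$; you track the length change $n\mapsto 3\lfloor\log(n)\rfloor(n+1)$ explicitly; and your threshold $k>(m+1)/d$ is the right one, whereas the paper's conclusion $L_k\notin\mathcal{NREG}_{\lfloor k/d\rfloor}$ inverts the dependence on $d$ (the bounds only exclude $m<kd-1$).

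Your flagged worry is well founded, though. Both proofs are only as sound as Theorem~\ref{NewProof}, whose proof does three unjustified things for a general automaton: it posits that an arbitrary $\mathcal{A}_{k+1}$ splits into a standalone $L_k$-automaton and a global part; it treats the marginal $\mathcal{C}_k$ as uniform; and it identifies the conditional entropy $H(X_{\mathcal{A}_{k+1}}\mid\mathcal{C}_k)$ with the entropy of a uniform variable of some automaton. So the corollary is established only conditionally on that additivity claim.
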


\begin{proof}
We show that, for all $\gamma \geq 0$ and all $\mathcal{A}_{k},$ the
following inequality holds asymptotically:%
\begin{equation*}
H\left( X_{\mathcal{A}_{k}}\left( \mathcal{H}_{k}^{\Sigma },n\right) \right)
\geq k\cdot d\cdot \left( 1-\gamma \right) \cdot \log \left( n\right) .
\end{equation*}

We use induction on $k$.

The non-negativity of Shannon entropy implies that the inequality holds for $%
k=0$. 

We assume that the inequality holds for $k$ and show that it holds for $k+1.$
Thus, let $\mathcal{A}_{k+1}$ be a nondeterministic pebble automaton
accepting $L_{k+1}$. Theorem \ref{NewProof} asserts that there exist $%
\mathcal{A}_{k}$ and $\mathcal{A}_{1}$ such that%
\begin{eqnarray*}
&&H\left( X_{\mathcal{A}_{k+1}}\left( \mathcal{H}_{k+1}^{\Sigma },3\cdot
\left\lfloor \log \left( n\right) \right\rfloor \cdot \left( n+1\right)
\right) \right)  \\
&=&H\left( X_{\mathcal{A}_{k}}\left( \mathcal{H}_{k}^{\Sigma },n\right)
\right) +H\left( Y_{\mathcal{A}_{1}}\left( \mathcal{H}_{1}^{\Sigma
},n\right) \right) .
\end{eqnarray*}%
The inductive hypothesis ensures that, for all $\gamma >0$, the following
inequality holds asymptotically:%
\begin{equation*}
H\left( X_{\mathcal{A}_{k}}\left( \mathcal{H}_{k}^{\Sigma },n\right) \right)
\geq k\cdot d\cdot \left( 1-\gamma \right) \cdot \log \left( n\right) 
\end{equation*}%
On the other hand, since $L$ is hard-to-cross, for all $\gamma >0$, the
following inequality holds asymptotically:%
\begin{equation*}
H\left( Y_{\mathcal{A}_{1}}\left( \mathcal{H}_{1}^{\Sigma },n\right) \right)
\geq d\cdot \left( 1-\gamma \right) \cdot \log \left( n\right) .
\end{equation*}%
Then, for all $\gamma >0$, the inequality%
\begin{equation*}
H\left( X_{\mathcal{A}_{k+1}}\left( \mathcal{H}_{k+1}^{\Sigma },n\right)
\right) \geq \left( k+1\right) \cdot d\cdot \left( 1-\gamma \right) \cdot
\log \left( n\right) 
\end{equation*}%
holds asymptotically. We obtain that, for all $k>\frac{1}{d}$, the language $L_{k}$ does not belong to $\mathcal{%
NREG}_{\left\lfloor \frac{k}{d}\right\rfloor }$. This implies that $\left\{
L_{k}\right\} _{k\geq 0}$ is high in the nondeterministic pebble hierarchy.
\end{proof}

\section{Context-free languages that are hard-to-cross\label{Section7}}

Corollary \ref{corolarioInductivo} reduces the problem of proving $\mathcal{%
NL}\neq \log \mathcal{CFL}$ to that of constructing a hard-to-cross
context-free language.

Let $L\subseteq \Sigma ^{\ast }$ be a context-free language. We study the%
\textit{\ crossing complexity} of separating $L_{1}\cap \mathcal{H}%
_{1}^{\Sigma }$ from $co$-$L_{1}\cap \mathcal{H}_{1}^{\Sigma }$.

\subsection{Entropy and state complexity of context-free languages}

Two-way nondeterministic finite state automata (2NFAs) are the same as
nondeterministic pebble automata provided with zero pebbles (i.e., 2NFAs are
nondeterministic zero-pebble automata).

\begin{definition}
Let $L\subseteq \Sigma ^{\ast }$ be a context-free language and $\#\notin
\Sigma $.

\begin{enumerate}
\item Let $n\geq 2$, we use $POS_{n}\left( L\right) $ to denote the finite
set:%
\begin{equation*}
\left\{
\begin{array}{c}
\boldsymbol{\varepsilon}\# r_1 \# \cdots \#
r_{\lfloor \log(n) \rfloor} : \\[4pt]

\boldsymbol{\varepsilon} \in
\{0,1\}^{\lfloor \log(n) \rfloor},
\quad
r_1,\ldots,r_{\lfloor \log(n) \rfloor}
\in \Sigma^{\leq n}; \\[4pt]

\text{and }
\bigodot_{i:\boldsymbol{\varepsilon}[i]=1} r_i \in L
\end{array}
\right\}.
\end{equation*}

\item We use $NEG_{n}\left( L\right) $ to denote the finite set:%
\begin{equation*}
\left\{
\begin{array}{c}
\boldsymbol{\varepsilon}\# r_1 \# \cdots \#
r_{\lfloor \log(n) \rfloor} : \\[4pt]

\boldsymbol{\varepsilon} \in
\{0,1\}^{\lfloor \log(n) \rfloor},
\quad
r_1,\ldots,r_{\lfloor \log(n) \rfloor}
\in \Sigma^{\leq n}; \\[4pt]

\text{and }
\bigodot_{i:\boldsymbol{\varepsilon}[i]=1} r_i \notin L
\end{array}
\right\}.
\end{equation*}
\end{enumerate}
\end{definition}

\begin{definition}
Let $\mathcal{A}$ be a 2NFA that separates $POS_{n}\left( L\right) $ from $%
NEG_{n}\left( L\right) $. Let $q$ be an internal state of $\mathcal{A}$. We
say that $q$ is a \textit{crossing state} if it is visited while $\mathcal{A}
$ processes an input from $POS_{n}\left( L\right) \cup NEG_{n}\left(
L\right) $ and the automaton passes through the cell containing the leftmost
occurrence of $\#$. The \textit{crossing size} of $\mathcal{A}$ is the
number of its crossing states. We use $\left\Vert \mathcal{A}\right\Vert _{c}
$ to denote the crossing size of $\mathcal{A}$.
\end{definition}

\begin{remark}
The crossing states are the internal states that are used to carry
information from one side (either $\mathbf{\varepsilon }$ or $r_{1}\#\cdots
\#r_{\left\lfloor \log \left( n\right) \right\rfloor }$) to the other.
\end{remark}

\begin{definition}
Let $L\subseteq \Sigma ^{\ast }$. The nondeterministic crossing complexity
of $L$ is defined by:%
\begin{equation*}
2ncc\left( L,n\right) =\min \left\{ \left\Vert \mathcal{A}\right\Vert _{c}%
\text{: }\mathcal{A}\text{ is a 2NFA that separates }POS_{n}\left( L\right) 
\text{ from }NEG_{n}\left( L\right) \right\} .
\end{equation*}
\end{definition}

We have:

\begin{theorem}
Let $L\subseteq \Sigma ^{\ast }$. Let $\mathcal{A}_{1}$ be a
nondeterministic pebble automaton separating $L_{1}\cap \mathcal{H}%
_{1}^{\Sigma }$ from $co$-$L_{1}\cap \mathcal{H}_{1}^{\Sigma }$. Suppose the
inequality 
\begin{equation*}
H\left( Y_{\mathcal{A}_{1}}\left( \mathcal{H}_{1}^{\Sigma },n\right) \right)
\leq f\left( n\right)
\end{equation*}%
holds for infinitely many values of $n.$ Then, there exists $C$ such that
the following inequality holds for infinitely many values of $n$:%
\begin{equation*}
2ncc\left( L,n\right) \leq 2^{f\left( n\right) }\cdot C\cdot \log \left(
n\right) .
\end{equation*}
\end{theorem}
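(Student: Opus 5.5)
The plan is to turn $\mathcal{A}_{1}$, for each fixed $n$, into a 2NFA $\mathcal{A}^{(n)}$ that separates $POS_{n}(L)$ from $NEG_{n}(L)$. The construction is arranged so that the only information carried across the leftmost $\#$ is a coding configuration from the support of $Y_{\mathcal{A}_{1}}(\mathcal{H}_{1}^{\Sigma },n)$, together with an index $i\leq \lfloor \log (n)\rfloor$ and a constant amount of bookkeeping.

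The first step converts the entropy bound into a cardinality bound. Let $S_{n}$ be the set of coding configurations that $\mathcal{A}_{1}$ visits on inputs from $\mathcal{H}_{1}^{\Sigma }(3\lfloor \log (n)\rfloor (n+1))$ while querying a local bit $w_{i}[1]$. By definition $Y_{\mathcal{A}_{1}}(\mathcal{H}_{1}^{\Sigma },n)$ is uniform on $S_{n}$, so its entropy equals $\log |S_{n}|$. Hence, for each of the infinitely many $n$ with $H(Y_{\mathcal{A}_{1}})\leq f(n)$, we get $|S_{n}|\leq 2^{f(n)}$.

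The second step is the simulation. On input $\boldsymbol{\varepsilon }\#r_{1}\#\cdots \#r_{\lfloor \log (n)\rfloor }$, the 2NFA $\mathcal{A}^{(n)}$ simulates $\mathcal{A}_{1}$ on a canonical bi-string $s(w,r)$ with $w_{i}=\boldsymbol{\varepsilon }[i]\,0^{n-1}$ (this is in $\mathcal{H}_{0}\cap \{0,1\}^{n}$), the same co-factors $r_{i}$, and the padding $\#^{s}$. Since $n$ is fixed, all $\mathcal{A}_{1}$-data that is not explicitly on the tape can be kept in the finite control of $\mathcal{A}^{(n)}$: head position, pebble positions, and the positions inside the virtual padding $0^{n-1}$ and $\#^{s}$. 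The segments $r_{i}$ are physically present, so $\mathcal{A}^{(n)}$ reads them directly and hardwires everything else. Correctness follows from the Remark on $L_{1}$: $s(w,r)\in L_{1}$ iff $\bigodot_{i:\boldsymbol{\varepsilon }[i]=1}r_{i}\in L$. So $\mathcal{A}^{(n)}$ accepts exactly $POS_{n}(L)$, since $\mathcal{A}_{1}$ separates $L_{1}\cap \mathcal{H}_{1}^{\Sigma }$ from its complement there.

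The third step bounds the crossing states. The head of $\mathcal{A}^{(n)}$ sits on the right-hand part whenever the simulated head is in any $r_{i}$, in the virtual padding, or inside some $w_{i}$ away from its first letter. It moves to the $\boldsymbol{\varepsilon }$-block only when $\mathcal{A}_{1}$ reads some $w_{i}[1]$, that is, exactly when $\mathcal{A}_{1}$ is in a configuration of $S_{n}$. To make this work, the state carried across the leftmost $\#$ must encode only:
\begin{itemize}
\item the current coding configuration $c\in S_{n}$;
\item the index $i$;
\item a constant number of flags.
\end{itemize}
Crucially it must \emph{not} encode the position of the head or pebbles inside $r_{1}\#\cdots \#r_{\lfloor \log (n)\rfloor }$. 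To arrange this, $\mathcal{A}^{(n)}$ physically leaves a marker-like pattern it can return to: it nondeterministically guesses the return point and verifies it against $c$, which already records the pebble positions. In the $\boldsymbol{\varepsilon }$-block it walks to position $i$, reads the bit, and crosses back in state $(c,i,\boldsymbol{\varepsilon }[i])$. This gives at most $|S_{n}|\cdot \lfloor \log (n)\rfloor \cdot C\leq 2^{f(n)}\cdot C\cdot \log (n)$ crossing states, for a constant $C$ depending only on $\mathcal{A}_{1}$.

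The main obstacle is this last point. A naive simulation would store the exact head location in the finite control while crossing, which would blow up the crossing size by a factor polynomial in $n$. The first thing to try is to note that a coding configuration in $S_{n}$ already fixes the pebble locations, and the head location at a query is determined by $i$. After reading $\boldsymbol{\varepsilon }[i]$, the automaton can therefore re-derive where to resume from $(c,i)$ alone: it is at the $i$-th virtual factor, which sits physically just before the $\#$ preceding $r_{i}$. Thus resumption requires no extra crossing information beyond $(c,i)$ and a constant.
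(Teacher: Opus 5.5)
Your proposal is correct and follows essentially the paper's route. You simulate $\mathcal{A}_{1}$ on the collapsed input with the virtual head and pebble positions held in the finite control, let the head pass the leftmost $\#$ only at local-bit queries carrying just a configuration of your $S_{n}$ (the paper's $\mathcal{CC}_{\mathcal{A}_{1}}(\mathcal{H}_{1}^{\Sigma},n)$), the index $i$ and constant flags, and conclude from $|S_{n}|=2^{H(Y_{\mathcal{A}_{1}}(\mathcal{H}_{1}^{\Sigma},n))}\leq 2^{f(n)}$.

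Your canonical preimage $w_{i}=\boldsymbol{\varepsilon}[i]0^{n-1}$ and your requirement that the head stay strictly right of the leftmost $\#$ whenever the simulated head is off a queried bit make explicit two points the paper glosses over. This includes the cells $w_{1}[2],\ldots,w_{1}[n]$ and the separator after $w_{1}$, which the paper's mapping sends onto the leftmost $\#$ itself. One correction: after a query, the virtual head position depends on $|r_{1}|,\ldots,|r_{i-1}|$ as well as on $i$. So it must be recomputed by a rightward scan from the leftmost $\#$. It cannot be read off from $(c,i)$, and no ``marker'' or guessing is needed (a 2NFA cannot mark the tape anyway).
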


\begin{proof}
Let $\mathcal{A}_{1}$ be a nondeterministic pebble automaton separating $%
L_{1}\cap \mathcal{H}_{1}^{\Sigma }$ from $co$-$L_{1}\cap \mathcal{H}%
_{1}^{\Sigma }$ and such that the inequality%
\begin{equation*}
H\left( Y_{\mathcal{A}_{1}}\left( \mathcal{H}_{1}^{\Sigma },n\right) \right)
\leq f\left( n\right)
\end{equation*}%
holds for infinitely many values of $n$. Let us fix such a $n$. Let $%
\mathcal{C}_{\mathcal{A}_{1}}\left( \mathcal{H}_{1}^{\Sigma },n\right) $ be
the set of coding configurations that are visited by $\mathcal{A}_{1}$ when
this automaton is processing an input from $\mathcal{H}_{1}^{\Sigma }\left(
3\cdot \left\lfloor \log \left( n\right) \right\rfloor \cdot \left(
n+1\right) \right) $. And let $\mathcal{CC}_{\mathcal{A}_{1}}\left( \mathcal{%
H}_{1}^{\Sigma },n\right) $ be the subset constituted by all the
configurations visited by the automaton when it is asked to compute a 
\textit{local bit} $w_{i}\left[ 1\right] $. We construct a 2NFA $\mathcal{M}%
_{n}$ such that:

\begin{enumerate}
\item $\mathcal{M}_{n}$ separates $POS_{n}\left( L\right) $ from $%
NEG_{n}\left( L\right) $.

\item The set of internal states of $\mathcal{M}_{n}$ is equal to: 
\begin{equation*}
\mathcal{C}_{\mathcal{A}_{1}}\left( \mathcal{H}_{1}^{\Sigma },n\right)
\times \left\{ 0,\ldots ,3\cdot \left\lfloor \log \left( n\right)
\right\rfloor \cdot \left( n+1\right) \right\} ^{2}\times \left\{ 1,...,\log
\left( n\right) \right\} ^{2}.
\end{equation*}

\item The set of crossing states of $\mathcal{M}_{n}$ is included in 
\begin{equation*}
\mathcal{CC}_{\mathcal{A}_{1}}\left( \mathcal{H}_{1}^{\Sigma },n\right)
\times \left\{ \left( \left\lfloor \log \left( n\right) \right\rfloor
,0\right) \right\} \times \left\{ 0,...,\log \left( n\right) \right\} ^{2}.
\end{equation*}
\end{enumerate}

Suppose $\mathcal{A}_{1}$ is provided with $m$-pebbles. Recall that an
instantaneous configuration of $\mathcal{A}_{1}$ is a tuple $\left( h,\left(
c_{1},...,c_{m},q\right) \right) $ such that $h,c_{i}\in \left\{
0,...,3\cdot \left\lfloor \log \left( n\right) \right\rfloor \cdot \left(
n+1\right) \right\} $ and $q$ is an internal state of $\mathcal{A}_{1}$. 

Suppose:

\begin{itemize}
\item $\mathcal{A}_{1}$ is in the instantaneous configuration $\left(
h,\left( c_{1},...,c_{m},q\right) \right) $ scanning the local pair $%
w_{i}\#_{1}r_{i}$. 
\end{itemize}

Then:

\begin{itemize}
\item $\mathcal{M}_{n}$ is in the internal state $\left(
c_{1},...,c_{m},q,\left( h,0\right) ,\left( i,0\right) \right) $. 
\end{itemize}

Moreover, we have:

\begin{itemize}
\item Let $\left( 0,\left( 0,\ldots ,0,q_{0}\right) \right) $ be the initial
instantaneous configuration of $\mathcal{A}_{1}$. The tuple 
\begin{equation*}
\left( 0,\ldots ,0,q_{0},\left( 0,0\right) ,\left( 0,0\right) \right) 
\end{equation*}%
is the initial state of $\mathcal{M}_{n}$.

\item The set of accepting states of $\mathcal{M}_{n}$ is the set%
\begin{equation*}
\left\{ \left( \overrightarrow{c},q,\left( l,t\right) \left( i,j\right)
\right) :\left( l,\left( \overrightarrow{c},q\right) \right) \text{ is an
accepting instantaneous configuration of }\mathcal{A}_{1}\right\} .
\end{equation*}
\end{itemize}

It only remains to define the transition function of $\mathcal{M}_{n}$. Let 
\begin{equation*}
s\left( w,r\right) =w_{1}\#_{1}r_{1}\#_{1}\cdots \#_{1}w_{\left\lfloor \log
\left( n\right) \right\rfloor }\#_{1}r_{\left\lfloor \log \left( n\right)
\right\rfloor }\#_{1}^{s}
\end{equation*}%
be an input of $\mathcal{A}_{1}$, and let 
\begin{equation*}
S\left( w,r\right) =w_{1}\left[ 1\right] \cdots w_{\left\lfloor \log \left(
n\right) \right\rfloor }\left[ 1\right] \#r_{1}\#\cdots \#r_{\left\lfloor
\log \left( n\right) \right\rfloor }
\end{equation*}%
be the corresponding input of $\mathcal{M}_{n}$. The latter automaton
simulates, on $S\left( w,r\right) ,$ the computation of $\mathcal{A}_{1}$ on 
$s\left( w,r\right) $. Let $\left( \overrightarrow{c},q,\left( l,t\right)
\left( i,j\right) \right) $ be an internal state of $\mathcal{M}_{n}$. This
automaton uses the coordinate $l$ to track the location of its input head.
The coordinate $t$ is used as an auxiliary counter. On the other hand, $%
\mathcal{M}_{n}$ uses the last two components to track the factor being
processed. Thus, suppose the input head of $\mathcal{A}_{1}$ is scanning the 
$j$-th letter of $\#_{1}r_{i}$, and suppose that the current coding
configuration of this pebble automaton is equal to\ $\left( \overrightarrow{c%
},q\right) $. We assume that $\mathcal{M}_{n}$ is scanning the $j$-th letter
of $\#r_{i}$ and that the internal state is equal to 
\begin{equation*}
\left( \overrightarrow{c},q,\left( \left\vert r_{1}\#\cdots
\#r_{i-1}\#\right\vert +j,0\right) ,\left( i,0\right) \right) .
\end{equation*}

Notice that $\mathcal{M}_{n}$ can use the auxiliary counter represented by $%
t $ to:

\begin{itemize}
\item check if the input head position matches with the position of a
specific \textit{virtual} pebble (i.e., matches $\overrightarrow{c}\left[ i%
\right] $ for some $i\leq k$), and

\item reset the contents of the first counter (i.e., the value of $l$) after
this check.
\end{itemize}

Suppose $\mathcal{A}_{1}$ chooses to continue with the processing of $r_{i}$%
. The automaton $\mathcal{M}_{n}$ simulates this phase of the computation
using the internal states in the set 
\begin{equation*}
\mathcal{C}_{\mathcal{A}_{1}}\left( \mathcal{H}_{1}^{\Sigma },n\right)
\times \left\{ 1,\ldots ,3\cdot \left\lfloor \log \left( n\right)
\right\rfloor \cdot \left( n+1\right) \right\} ^{2}\times \left\{ \left(
i,0\right) \right\} .
\end{equation*}%
Notice that all those internal states of $\mathcal{M}_{n}$ are non-crossing
states.

Now suppose that $\left( \overrightarrow{c},q\right) \in \mathcal{CC}_{%
\mathcal{A}_{1}}\left( \mathcal{H}_{1}^{\Sigma },n\right) $ and that $%
\mathcal{A}_{1}$ chooses to scan the bit $w_{i}\left[ 1\right] $. Then, the
automaton $\mathcal{M}_{n}$ has to move to the left in order to scan the $i$%
-th cell. Notice that, during this phase of the computation, $\mathcal{M}_{n}
$ can set $t,j=0.$ Then, $\mathcal{M}_{n}$ can simulate this phase of the
computation using the set 
\begin{equation*}
\left\{ 0,\ldots ,3\cdot \left\lfloor \log \left( n\right) \right\rfloor
\cdot \left( n+1\right) \right\} \times \left\{ 0\right\} \times \left\{
\left( \overrightarrow{c},q\right) \right\} \times \left\{ \left( i,0\right)
\right\} .
\end{equation*}%
It can be verified that:

\begin{enumerate}
\item $\mathcal{M}_{n}$ separates $POS_{n}\left( L\right) $ from $%
NEG_{n}\left( L\right) $.

\item The set of crossing states of $\mathcal{M}_{n}$ is included in%
\begin{equation*}
\left\{ \left( \left\lfloor \log \left( n\right) \right\rfloor +1,0\right)
\right\} \times \mathcal{CC}_{\mathcal{A}_{1}}\left( \mathcal{H}_{1}^{\Sigma
},n\right) \times \left\{ 1,...,\left\lfloor \log \left( n\right)
\right\rfloor \right\} \times \left\{ 0\right\} .
\end{equation*}
\end{enumerate}

Let $C=\left\vert Q\right\vert $. The following inequalities hold for
infinitely many values of $n$:%
\begin{equation*}
2ncc\left( L,n\right) \leq \left\vert \mathcal{CC}_{\mathcal{A}_{1}}\left( 
\mathcal{H}_{1}^{\Sigma },n\right) \right\vert \cdot C\cdot \log ^{2}\left(
n\right) \leq 2^{f\left( n\right) }\cdot C\cdot \log \left( n\right) .
\end{equation*}
\end{proof}

We obtain the following corollary:

\begin{corollary}
Let $e,d>0$, and suppose that the inequality $2ncc\left( L,n\right) \geq
e\cdot n^{d}$ holds asymptotically. Then, for all $c<d$, all
nondeterministic pebble automata $\mathcal{A}_{1}$ separating $L_{1}\cap 
\mathcal{H}_{1}^{\Sigma }$ from $co$-$L_{1}\cap \mathcal{H}_{1}^{\Sigma }$,
and all $\gamma >0$, the following inequality holds asymptotically:%
\begin{equation*}
H\left( Y_{\mathcal{A}_{1}}\left( \mathcal{H}_{1}^{\Sigma },n\right) \right)
\geq c\cdot \left( 1-\gamma \right) \cdot \log \left( n\right) .
\end{equation*}
\end{corollary}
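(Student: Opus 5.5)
We argue by contraposition, using the theorem immediately preceding this corollary, which converts an upper bound on the entropy $H\left( Y_{\mathcal{A}_{1}}\left( \mathcal{H}_{1}^{\Sigma },n\right) \right)$ into an upper bound on $2ncc\left( L,n\right)$. Fix $c<d$, a nondeterministic pebble automaton $\mathcal{A}_{1}$ separating $L_{1}\cap \mathcal{H}_{1}^{\Sigma }$ from $co$-$L_{1}\cap \mathcal{H}_{1}^{\Sigma }$, and $\gamma >0$. Suppose the desired inequality fails asymptotically. Then for infinitely many $n$ we have
\begin{equation*}
H\left( Y_{\mathcal{A}_{1}}\left( \mathcal{H}_{1}^{\Sigma },n\right) \right) <c\cdot \left( 1-\gamma \right) \cdot \log \left( n\right) .
\end{equation*}
We set $f\left( n\right) =c\left( 1-\gamma \right) \log \left( n\right)$ and apply that theorem. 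This gives a constant $C$ such that, for infinitely many $n$,
\begin{equation*}
2ncc\left( L,n\right) \leq 2^{f\left( n\right) }\cdot C\cdot \log \left( n\right) =C\cdot n^{c\left( 1-\gamma \right) }\cdot \log \left( n\right) .
\end{equation*}

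Next we compare this with the hypothesis $2ncc\left( L,n\right) \geq e\cdot n^{d}$. Along the infinite set of $n$ found above, the two bounds together give
\begin{equation*}
e\cdot n^{d}\leq C\cdot n^{c\left( 1-\gamma \right) }\log \left( n\right) .
\end{equation*}
Since $c\left( 1-\gamma \right) \leq c<d$, the ratio $n^{d-c\left( 1-\gamma \right) }/\log \left( n\right)$ tends to infinity. So this inequality fails for all sufficiently large $n$, which is a contradiction.

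The argument is a routine calculation, and the only delicate point is bookkeeping. The hypothesis ``holds asymptotically'' means for all large $n$, while the theorem yields ``infinitely many $n$''. An infinite set of $n$ necessarily contains arbitrarily large values, so it meets the range where the lower bound holds, and that is exactly where the contradiction arises. The logarithmic factor, which appears as $\log(n)$ or $\log^{2}(n)$ depending on which form of the theorem's last display one uses, is absorbed by the strict gap $c<d$. The slack $\gamma$ is not actually needed.
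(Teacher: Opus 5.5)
Your proof is correct and follows the same route as the paper's. You assume the entropy bound fails for infinitely many $n$, feed this into the preceding theorem to get $2ncc(L,n)\le C\,n^{c}$ times a power of $\log(n)$ along an infinite set of $n$, and contradict $2ncc(L,n)\ge e\,n^{d}$ using $c<d$. The paper detours through $|\mathcal{CC}_{\mathcal{A}_{1}}(\mathcal{H}_{1}^{\Sigma},n)|=2^{H(Y_{\mathcal{A}_{1}}(\mathcal{H}_{1}^{\Sigma},n))}$ by uniformity, whereas your direct substitution $f(n)=c(1-\gamma)\log(n)$ is cleaner and avoids the paper's $c$/$d$ mix-up. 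One small point: your step $c(1-\gamma)\le c$ tacitly uses $c\ge 0$, which is the intended reading, since for $c<0$ and $\gamma>1$ the statement as literally written can fail.
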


\begin{proof}
Let $c<d$. Suppose that the inequality%
\begin{equation*}
H\left( Y_{\mathcal{A}_{1}}\left( \mathcal{H}_{1}^{\Sigma },n\right) \right)
\geq c\cdot \left( 1-\gamma \right) \cdot \log \left( n\right)
\end{equation*}%
does not hold asymptotically for all $\gamma >0$. Then, there exists $\gamma
>0$ such that the inequality%
\begin{equation*}
H\left( Y_{\mathcal{A}_{1}}\left( \mathcal{H}_{1}^{\Sigma },n\right) \right)
<d\cdot \left( 1-\gamma \right) \cdot \log \left( n\right)
\end{equation*}%
holds for infinitely many values of $n.$ Let $n$ be one such value. The
random variable $Y_{\mathcal{A}_{1}}\left( \mathcal{H}_{1}^{\Sigma
},n\right) $ is uniformly distributed over $\mathcal{CC}_{\mathcal{A}%
_{1}}\left( \mathcal{H}_{1}^{\Sigma },n\right) $. Then:%
\begin{equation*}
\left\vert \mathcal{CC}_{\mathcal{A}_{1}}\left( \mathcal{H}_{1}^{\Sigma
},n\right) \right\vert <2^{\log \left( n^{c}\right) }.
\end{equation*}

We get that:

\begin{enumerate}
\item There exists $C$ such that $2ncc\left( L,n\right) \leq n^{c}\cdot
C\cdot \log \left( n\right) $ holds for infinitely many values of $n$.

\item The inequality $2ncc\left( L,n\right) \geq e\cdot n^{d}$ holds
asymptotically.
\end{enumerate}

We obtain a contradiction, the corollary follows.
\end{proof}

\begin{definition}
Let $L\in \mathcal{CFL}$. We say that $L$ is crossing-hard if there exists $%
c,d>0$ such that, for all $n\geq 1$, the following inequality holds: 
\begin{equation*}
2ncc\left( L,n\right) >c\cdot n^{d}.
\end{equation*}
\end{definition}

We have:

\begin{corollary}
\label{CorollaryCorssing Hard}The following assertions hold:

\begin{enumerate}
\item Let $L\in \mathcal{CFL}$ be crossing-hard. Then, the language $L$ is
hard-to-cross.

\item Let $L\in \mathcal{CFL}$. If $L$ is crossing-hard the sequence $%
\left\{ L_{k}\right\} _{k\geq 0}$ is high in the nondeterministic pebble
hierarchy.

\item If there exists a context-free language that is crossing-hard the
separation $\mathcal{NL}\neq \log \mathcal{CFL}$ holds.
\end{enumerate}
\end{corollary}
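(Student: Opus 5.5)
The plan is to obtain all three assertions by chaining results already established in this section and in Sections~\ref{Section3} and~\ref{Section6}. No new construction is needed.

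For assertion 1, let $L\in\mathcal{CFL}$ be crossing-hard, with constants $c,d>0$ such that $2ncc\left( L,n\right) >c\cdot n^{d}$ for all $n\geq 1$. In particular the inequality $2ncc\left( L,n\right) \geq c\cdot n^{d}$ holds asymptotically. I will apply the preceding corollary with $e=c$ and any fixed $c'$ with $0<c'<d$, say $c'=d/2$. That corollary gives, for every nondeterministic pebble automaton $\mathcal{A}_{1}$ separating $L_{1}\cap\mathcal{H}_{1}^{\Sigma}$ from $co$-$L_{1}\cap\mathcal{H}_{1}^{\Sigma}$ and every $\gamma>0$, the asymptotic bound
\begin{equation*}
H\left( Y_{\mathcal{A}_{1}}\left( \mathcal{H}_{1}^{\Sigma },n\right) \right) \geq c'\cdot \left( 1-\gamma \right) \cdot \log \left( n\right) .
\end{equation*}
This is exactly the defining inequality of hard-to-cross, with witness constant $c'$. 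The only thing to watch is the quantifier order: $c'$ must be chosen uniformly, before $\mathcal{A}_{1}$ and $\gamma$, and the previous corollary allows this because its $c$ depends only on $d$.

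For assertion 2, I will feed assertion 1 into Corollary~\ref{corolarioInductivo}. That corollary then gives that $\left\{ L_{k}\right\} _{k\geq 0}$ is high. Recall its content: by induction using Theorem~\ref{NewProof}, the entropy of any automaton for $L_{k}$ on $\mathcal{H}_{k}^{\Sigma}$ grows like $k\cdot c'\cdot\log(n)$. Combined with the upper bound $(m+1)\log(n)$ for $m$-pebble automata from Section~\ref{Section5}, this excludes $L_{k}$ from $\mathcal{NREG}_{m}$ for large $k$. For assertion 3, I use Theorem~\ref{Context-free}: each $L_{k}$ is context-free, so assertion 2 yields a high sequence of context-free languages. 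Since $L_{G_{0}}$ is an upper bound for every sequence in $\mathcal{CFL}$ (the closure of $\mathcal{NREG}_{m}$ under inverse morphisms), the lemma in the introduction gives $L_{G_{0}}\in\log\mathcal{CFL}-\mathcal{NL}$, which is the content of Corollary~\ref{high}.

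The main obstacle is a side hypothesis rather than the logic of the chain. The definition of hard-to-cross is stated for $L\in\mathcal{CFL}\cap\mathcal{L}$, and Corollary~\ref{corolarioInductivo} assumes $L\in\mathcal{CFL}\cap\mathcal{NL}$, whereas the present statement assumes only $L\in\mathcal{CFL}$. I would first try to show that this hypothesis is never actually used in the inductive argument, so that it can be dropped. Failing that, I would add it explicitly to assertions 1 and 2. It would then have to be discharged for the concrete crossing-hard language constructed later, and that language should be exhibited in $\mathcal{NL}$.

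I should also flag that the whole chain inherits the correctness of Theorem~\ref{NewProof}. That theorem decomposes the entropy into a local part and a global part, which tacitly assumes that $\mathcal{A}_{k+1}$ splits into a local subroutine $\mathcal{A}_{k}$ and a global subroutine $\mathcal{B}_{k+1}$. An arbitrary pebble automaton need not decompose in this way. Assertion 3 would imply $\mathcal{NL}\neq\mathrm{P}$, so any gap is most likely to sit in that decomposition and not in the bookkeeping above.
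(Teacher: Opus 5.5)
Your chain is the one the paper intends. The paper gives no separate proof; the corollary is meant to follow from the preceding entropy/crossing corollary, Corollary~\ref{corolarioInductivo}, Theorem~\ref{Context-free} and the Greibach upper bound. Assertion 1 goes through as you set it up, with $c'=d/2$ fixed before $\mathcal{A}_1$ and $\gamma$. The membership hypotheses you worry about are cosmetic, since the induction never uses them.

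Assertion 2, however, is not proved, and flagging Theorem~\ref{NewProof} does not change that. That theorem's proof assumes $\mathcal{A}_{k+1}$ is assembled from a local subroutine $\mathcal{A}_k$ and a global one. An arbitrary pebble automaton need not be: work on different factors can interleave, and pebbles can be shared between the two tasks. Even granting that structure, the identification fails. Suppose the visited configurations are pairs (local part, global part) and $X$ is uniform on them. The only general relation is subadditivity: $H(X)$ is at most $\log$ of the number of distinct local parts plus $\log$ of the number of distinct global parts. Equality holds exactly when the visited set is a full product. The induction needs the reverse inequality, which is precisely what is missing. Also, $\mathcal{C}_k$ is a marginal of a uniform distribution and is not uniform, so $H(\mathcal{C}_k)\neq H\bigl(X_{\mathcal{A}_k}(\mathcal{H}_k^{\Sigma},n)\bigr)$ in general.

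Assertion 3 has a second, independent gap that you did not flag: it needs every $L_k$ to be context-free, and Theorem~\ref{Context-free} is false. The pushdown automaton in its proof must also certify $w_i\notin T$ for each skipped factor, so it needs co-$T\in\mathcal{CFL}$. For $L=\mathrm{RA}$ this fails at every level.

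Take any $k\ge 1$ and fix $W\in L_k$. Intersect $L_{k+1}$ with the regular set $\Sigma_k^{\ast}\#_{k+1}11011\$\#_{k+1}W\#_{k+1}101\$$. What remains is co-$L_k$ followed by a fixed suffix. If the first factor lies in $L_k$, the selected word $11011\$101\$$ codes a non-crossable rectangle; otherwise the selected word is $101\$\in\mathrm{RA}$.

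Next, for $V\in L_{k-1}$ one has co-$L_k\cap V\#_k\{0,1,\$\}^{\ast}=V\#_k(\{0,1,\$\}^{\ast}-\mathrm{RA})$. Finally, $(\{0,1,\$\}^{\ast}-\mathrm{RA})\cap 101^{+}\$1^{+}0101^{+}01\$=\{101^{b}\$1^{c}0101^{e}01\$ : b\neq c,\ b\neq e\}$, which is not context-free by Ogden's lemma.

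Hence no $\mathrm{RA}_k$ with $k\ge 2$ is context-free, and Greibach's theorem gives no control over them. A high sequence of languages in $\mathcal{NL}$ is by itself no obstacle to $\mathcal{NL}=\log\mathcal{CFL}$. The construction does work when co-$T$ is also context-free, so this defect is specific to languages like RA. Avoiding it by another choice of $L$ would still leave assertion 2 unproved.
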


\section{Rectangles, digraph accessibility, and the language RA\label%
{Section8}}

Corollary \ref{CorollaryCorssing Hard} reduces the problem of proving the
separation $\mathcal{NL}\neq \log \mathcal{CFL}$ to that of constructing a
context-free language that is crossing-hard. We construct a context-free
language, denoted RA, and show that it is crossing-hard. This language is closely
related to the digraph accessibility problem.

\begin{definition}
Let $G$ be a digraph. We say that $G$ is a rectangle of depth $k$ and width $%
n$ if and only if:

\begin{enumerate}
\item $V\left( G\right) =\left\{ 1,\ldots ,n\right\} \times \left\{ 1,\ldots
,k+1\right\} .$

\item If $\left( \left( i,j\right) ,\left( r,t\right) \right) \in E\left(
G\right) $ the equality $t=j+1$ holds.
\end{enumerate}
\end{definition}

Rectangles are \textit{layered digraphs} with edges directed strictly
between successive layers (levels). It is worth recalling that the
restriction of digraph accessibility to the class of layered digraphs is $%
\mathcal{NL}$-complete; see \cite{Jones}. One can verify that the
restriction of this latter problem to the class of rectangles is also $%
\mathcal{NL}$-complete.

\begin{definition}
Let $G$ be a rectangle of depth $k$. We say that $G$ is crossable if there
exists a path from the first level of $G$ (i.e., the set $\left\{
1,...,n\right\} \times \left\{ 1\right\} $) to the level $k+1$.
\end{definition}

\begin{definition}
Let $G$ be a rectangle of depth $k$. We encode $G$ as follows:

\begin{itemize}
\item Let $s<k+1$ and let%
\begin{equation*}
l\left( s\right) =\left\{ \left( \left( i_{1},s\right) ,\left(
j_{1},s+1\right) \right) ,...,\left( \left( i_{m_{s}},s\right) ,\left(
j_{m_{s}},s+1\right) \right) \right\} 
\end{equation*}%
be a list (which may include repetitions) of the edges that go from level $s$
to level $s+1.$ We define:%
\begin{equation*}
r_{s,l\left( s\right) }=1^{i_{1}}01^{j_{1}}0\cdots
01^{i_{m_{s}}}01^{j_{m_{s}}}.
\end{equation*}

\item We set $w_{G,l}=r_{1,l\left( 1\right) }\$\cdots \$r_{k,l\left(
k\right) }\$$.
\end{itemize}
\end{definition}

Let $G$ be rectangle of depth $k.$ Suppose $w_{G,l}=r_{1,l\left( 1\right)
}\$\cdots \$r_{k,l\left( k\right) }\$$. Notice that $R$ is the concatenation
of $k$ rectangles of depth $1$, the rectangles encoded by the strings $%
r_{1,l\left( 1\right) }\$,\ldots ,r_{k-1,l\left( k-1\right) }\$$, and $%
r_{k,l\left( k\right) }\$$.

\begin{remark}
Let $n$ be a large integer. Let $G$ be a rectangle of depth $1$ and width $%
\left\lfloor n^{\frac{1}{4}}\right\rfloor $. This rectangle contains no more
than $n^{\frac{1}{2}}$ edges. Furthermore, the code for each of these edges
is no longer than $2n^{\frac{1}{4}}+2.$ Then there exists a code-word $%
w_{G,l}$ such that $\left\vert w_{G,l}\right\vert \leq n$, that is: we can
think of $w\in \left\{ 0,1\right\} ^{n}$ as the code of a rectangle of depth 
$1$ and width $\left\lfloor n^{\frac{1}{4}}\right\rfloor $.
\end{remark}

\begin{definition}
We use RA to denote the following context-free language%
\begin{equation*}
\left\{ w_{G,l}\in \left\{ 0,1,\$\right\} ^{\ast }:G\text{ is crossable}%
\right\} .
\end{equation*}
\end{definition}

We have:

\begin{theorem}
RA is a context-free language.
\end{theorem}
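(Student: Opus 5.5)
The plan is to build a nondeterministic pushdown automaton $\mathcal{P}$ that accepts RA. Its stack is used only as a unary counter, so RA is in fact a one-counter language. I would also separate the syntactic constraints from the semantic ones. The set of well-formed codewords is
\begin{equation*}
W=\left( 1^{+}01^{+}\left( 01^{+}01^{+}\right) ^{\ast }\$\right) ^{+}.
\end{equation*}
This set is regular, and $\mathcal{CFL}$ is closed under intersection with regular languages. So it suffices to build a PDA accepting exactly the well-formed codes $w_{G,l}$ that contain a level-by-level path, and then intersect with $W$.

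The semantic part rests on one observation. In a rectangle of depth $k$, a path from level $1$ to level $k+1$ is exactly a sequence of edges $e_{1},\ldots ,e_{k}$ with the following properties. Each $e_{s}$ occurs in the list $l(s)$, and hence in the block $r_{s,l(s)}$. The target of $e_{s}$ equals the source of $e_{s+1}$ for every $s<k$. The automaton guesses this sequence on the fly, one block at a time.

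Inside block $s$, $\mathcal{P}$ skips a nondeterministically chosen number of edge codes $1^{i}01^{j}$ in its finite control, leaving the stack untouched. It then declares that the next edge is the chosen edge $e_{s}=((i,s),(j,s+1))$. If $s>1$, it pops one stack symbol for each $1$ of $1^{i}$ and requires the stack to become empty exactly when the $0$ is reached. This checks that $i$ equals the target chosen at level $s-1$. For $s=1$ nothing is checked, because any source vertex on level $1$ is allowed.

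Next, $\mathcal{P}$ pushes one symbol for each $1$ of $1^{j}$, so the stack now stores the target $j$ in unary. It then skips the remaining edges of the block up to $\$$ without touching the stack. The automaton accepts at the end of the input if every block produced a chosen edge and every equality check succeeded. Testing whether the stack is empty is done with the usual bottom marker.

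For correctness I would prove, by induction on $s$, the following invariant. There is a run of $\mathcal{P}$ that reaches the $s$-th $\$$ with stack content of height $j$ if and only if $G$ contains a path from level $1$ to the vertex $(j,s+1)$. Taking $s=k$ gives acceptance if and only if $G$ is crossable. The forward direction reads the chosen edges off an accepting run. The backward direction lets $\mathcal{P}$ guess the edges of a given path, which appear somewhere in the lists $l(s)$; repetitions in these lists are harmless.

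The only delicate step is this pop-then-push handoff between consecutive blocks, because the stack must hold exactly one number at the moment of comparison. This is why the pushed target must be compared against the source of the next chosen edge, which occurs later in the input. The encoding $1^{i}01^{j}$ places the source before the target within each edge, so the required comparison is always between a pushed value and a later unary block, which a stack can check. Once this step is verified, the closure of $\mathcal{CFL}$ under intersection with the regular set $W$ finishes the proof.
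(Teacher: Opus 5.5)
Your proof is correct and takes the same approach as the paper. A nondeterministic pushdown automaton guesses one edge code $1^{i_s}01^{j_s}$ in each block and uses the stack to check $j_s=i_{s+1}$ between consecutive levels. Your version is more careful than the paper's sketch: the paper's literal condition allows arbitrary factors of the form $1^{i}01^{j}$, which could be truncated edge codes or straddle two edges, whereas you guess edge-aligned codes and intersect with the regular set $W$ of well-formed codewords, which is what that condition actually needs.
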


\begin{proof}
Let $w_{1}\$\cdot \cdots \$w_{k}\$$ be an instance of RA. This instance
belongs to RA if and only if there exists factors $1^{i_{1}}01^{j_{1}},%
\ldots ,1^{i_{k}}01^{j_{k}}$ such that:

\begin{enumerate}
\item $1^{i_{1}}01^{j_{1}}$ is a factor of $w_{1},$\ldots $,$ $%
1^{i_{k-1}}01^{j_{k-1}}$ is a factor of $w_{k-1}$, and $1^{i_{k}}01^{j_{k}}$
is a factor of $w_{k}$.

\item $j_{1}=i_{2},\ldots ,j_{k-2}=i_{k-1}$, and $j_{k-1}=i_{k}$.
\end{enumerate}

It is easy to build a nondeterministic pushdown automaton that guesses those
factors and checks all those equalities.
\end{proof}

We set RA$_{\infty }=L_{G_{0}}$, we have:

\begin{enumerate}
\item RA$_{\infty }$ is an upper bound for $\left\{ \text{RA}_{k}\right\}
_{k\geq 0}$.

\item If RA is crossing-hard the sequence $\left\{ \text{RA}_{k}\right\}
_{k\geq 0}$ is high in the nondeterministic pebble hierarchy and the
language RA$_{\infty }$ belongs to $\mathcal{NL}-\log \mathcal{CFL}$.
\end{enumerate}

\subsection{Decorated versions of RA}

From now, we study the crossing complexity of RA. We aim to show that RA is
crossing-hard. Let us observe that this problem is a question about the 
\textit{communication complexity} of 2NFAs.  

\begin{definition}
Let $m\geq 1$. We define and injective function from $\left\{ 1,\ldots ,m\right\} $
to $\left\{ 0,1\right\} ^{m}$. We define this injection as follows:

\begin{enumerate}
\item If $m$ is even, we set%
\begin{equation*}
\left[ i\right] _{m}=\left\{ 
\begin{array}{c}
\left( 10\right) ^{i}0^{m-2i},\text{ if }i\leq \frac{m}{2} \\ 
\left( 11\right) ^{i-\frac{m}{2}}\left( 10\right) ^{\ast },\text{ otherwise}%
\end{array}%
\right. .
\end{equation*}

\item If $m$ is odd, we set%
\begin{equation*}
\left[ i\right] _{m}=\left\{ 
\begin{array}{c}
\left( 10\right) ^{i}0^{m-2i},\text{ if }i<\frac{m}{2} \\ 
\left( 11\right) ^{i-\left\lfloor \frac{m}{2}\right\rfloor }\left( 10\right)
^{\ast }0,\text{ if }\frac{m}{2}<i<m \\ 
1^{m}\text{, if }i=m%
\end{array}%
\right. .
\end{equation*}
\end{enumerate}
\end{definition}

The next proposition is straightforward.

\begin{proposition}
\label{proposicion tecnica boba}Suppose $m$ is even. Then:

\begin{enumerate}
\item $\left[ \frac{m}{2}+1\right] _{m}=11\left( 10\right) ^{\frac{m}{2}-1}$.

\item $\left[ \frac{m}{2}+2\right] _{m}=1111\left( 10\right) ^{\frac{m}{2}%
-2}.$

\item For all $i\neq \frac{m}{2}+1,\left[ \frac{m}{2}+2\right] _{m}$, the
string $\left[ m\right] _{i}$ does not belong to $11\left( 10\right) ^{\ast
}\cup 1111\left( 10\right) ^{\ast }$.
\end{enumerate}
\end{proposition}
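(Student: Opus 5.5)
The plan is a direct computation from the definition of $\left[ \cdot \right]_{m}$, after fixing the intended reading of the notation. First, I will make the convention explicit. The expression $\left( 10\right)^{\ast}$ in the second clause must stand for the unique power of $10$ that makes $\left[ i\right]_{m}$ a word of length $m$. For $i>\frac{m}{2}$ the prefix $\left( 11\right)^{i-\frac{m}{2}}$ has length $2i-m$, so the suffix is $\left( 10\right)^{m-i}$, and
\begin{equation*}
\left[ i\right]_{m}=\left( 11\right)^{i-\frac{m}{2}}\left( 10\right)^{m-i}\quad \text{for }\tfrac{m}{2}<i\leq m .
\end{equation*}
I also read item 3 as the statement about $\left[ i\right]_{m}$ for all $i\in \left\{ 1,\ldots ,m\right\} -\left\{ \frac{m}{2}+1,\frac{m}{2}+2\right\}$; the symbols $\left[ m\right]_{i}$ and $\left[ \frac{m}{2}+2\right]_{m}$ in the excluded list are evidently typos. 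Under this convention every code word has length exactly $m$, which is consistent with the stated injectivity.

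Items 1 and 2 are then pure substitution. For item 1, taking $i=\frac{m}{2}+1$ gives $\left( 11\right)^{1}\left( 10\right)^{\frac{m}{2}-1}$. For item 2, taking $i=\frac{m}{2}+2$ gives $\left( 11\right)^{2}\left( 10\right)^{\frac{m}{2}-2}=1111\left( 10\right)^{\frac{m}{2}-2}$. Item 2 implicitly requires $m\geq 4$, and I would note this.

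For item 3 I would split into cases on $i$.
\begin{itemize}
\item If $i\leq \frac{m}{2}$, then $\left[ i\right]_{m}=\left( 10\right)^{i}0^{m-2i}$ with $i\geq 1$. This word begins with $10$, while every word of $11\left( 10\right)^{\ast}\cup 1111\left( 10\right)^{\ast}$ begins with $11$, so it lies in neither set.
\item If $i\geq \frac{m}{2}+3$, then $\left[ i\right]_{m}$ begins with $\left( 11\right)^{3}=111111$. A word of $11\left( 10\right)^{\ast}$ has the letter $0$ (or ends) at position $4$ whenever its length is at least $4$. A word of $1111\left( 10\right)^{\ast}$ has $0$ at position $6$ whenever its length is at least $6$. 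Since $\left[ i\right]_{m}$ has a $1$ at positions $4$ and $6$ and has length $m\geq 6$, it is excluded from both sets.
\end{itemize}
The boundary word $\left[ \frac{m}{2}\right]_{m}=\left( 10\right)^{\frac{m}{2}}$ falls under the first case and needs no separate treatment.

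The only step I expect to require care is not mathematical but interpretive: fixing the meaning of $\left( 10\right)^{\ast}$ and the typos in item 3. Once the exponent $m-i$ is pinned down by the length requirement, each item reduces to comparing the first six letters of explicit words. I would therefore state the normalized formula above at the start of the proof and derive all three items from it.
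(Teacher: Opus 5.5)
Your proof is correct: once $(10)^{\ast}$ is normalized to $(10)^{m-i}$, which is the only reading compatible with $[i]_m\in\{0,1\}^m$, and item 3 is read as a claim about $[i]_m$ for $i\notin\{\frac{m}{2}+1,\frac{m}{2}+2\}$, items 1--2 are pure substitution and your two-case prefix comparison settles item 3. The paper offers no argument beyond calling the proposition straightforward, and your direct verification from the definition is exactly the intended check.
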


\begin{definition}
We define:

\begin{itemize}
\item The set $POS_{n}^{\ast }\left( \text{RA}\right) $ is equal to:%
\begin{equation*}
\left\{
\begin{array}{c}
\boldsymbol{\varepsilon}
\# [1]_{\lfloor \log(n) \rfloor}\$ r_1
\# \cdots \#
[\log(n)]_{\lfloor \log(n) \rfloor}\$
r_{\lfloor \log(n) \rfloor}
: \\[4pt]

\boldsymbol{\varepsilon}
\in \{0,1\}^{\lfloor \log(n) \rfloor},
\quad
r_1,\ldots,r_{\lfloor \log(n) \rfloor}
\in \{0,1\}^{\leq n}; \\[4pt]

\text{and }
\bigodot_{\boldsymbol{\varepsilon}[i]=1} r_i\$
\in \mathrm{RA}
\end{array}
\right\}.
\end{equation*}

\item The set $NEG_{n}^{\ast }\left( \text{RA}\right) $ is equal to: 
\begin{equation*}
\left\{
\begin{array}{c}
\boldsymbol{\varepsilon}
\# [1]_{\lfloor \log(n) \rfloor}\$ r_1
\# \cdots \#
[\lfloor \log(n) \rfloor]_{\lfloor \log(n) \rfloor}\$
r_{\lfloor \log(n) \rfloor}
: \\[4pt]

\boldsymbol{\varepsilon}
\in \{0,1\}^{\lfloor \log(n) \rfloor},
\quad
r_1,\ldots,r_{\lfloor \log(n) \rfloor}
\in \{0,1\}^{\leq n}; \\[4pt]

\text{and }
\bigodot_{\boldsymbol{\varepsilon}[i]=1} r_i\$
\notin \mathrm{RA}
\end{array}
\right\}.
\end{equation*}
\end{itemize}
\end{definition}

\begin{remark}
It is important to observe that the elements of $POS_{n}^{\ast }\left( \text{%
RA}\right) \cup NEG_{n}^{\ast }\left( \text{RA}\right) $ are decorated
versions of the elements of $POS_{n}\left( \text{RA}\right) \cup
NEG_{n}\left( \text{RA}\right) $. We will show that those
decorations do not increase the crossing complexity.
\end{remark}

Let $\mathcal{M}$ be a 2NFA that separates $POS_{n}^{\ast }\left( \text{RA}%
\right) $ from $NEG_{n}^{\ast }\left( \text{RA}\right) $. The crossing
states of $\mathcal{M}$ are the internal states visited when the automaton
passes through the cell containing the leftmost occurrence of $\#.$ We
define $\left\Vert \mathcal{M}\right\Vert _{c}$ accordingly and we define $%
ncc\left( POS_{n}^{\ast }\left( \text{RA}\right) ,NEG_{n}^{\ast }\left( 
\text{RA}\right) \right) $ as%
\begin{equation*}
\min \left\{ \left\Vert \mathcal{M}\right\Vert _{c}:\mathcal{M}\text{
separates }POS_{n}^{\ast }\left( \text{RA}\right) \text{ from }NEG_{n}^{\ast
}\left( \text{RA}\right) \right\} .
\end{equation*}

We have:

\begin{lemma}
For all $n\geq 1$, the following inequality holds:%
\begin{equation*}
ncc\left( POS_{n}\left( \text{RA}\right) ,NEG_{n}\left( \text{RA}\right)
\right) \geq ncc\left( POS_{n}^{\ast }\left( \text{RA}\right) ,NEG_{n}^{\ast
}\left( \text{RA}\right) \right) .
\end{equation*}
\end{lemma}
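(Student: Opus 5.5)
Given any 2NFA $\mathcal{M}$ separating $POS_{n}(\text{RA})$ from $NEG_{n}(\text{RA})$, I will build a 2NFA $\mathcal{M}^{\ast}$ separating $POS_{n}^{\ast}(\text{RA})$ from $NEG_{n}^{\ast}(\text{RA})$ with $\left\Vert \mathcal{M}^{\ast}\right\Vert_{c}\leq\left\Vert \mathcal{M}\right\Vert_{c}$. Taking $\mathcal{M}$ optimal then gives the inequality. The idea is that $\mathcal{M}^{\ast}$ runs $\mathcal{M}$ on the undecorated word, treating each decoration block $[i]_{\lfloor\log(n)\rfloor}\$$ as invisible.

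First I fix the correspondence between inputs. Deleting the decorations from
\[
\boldsymbol{\varepsilon}\#[1]_{\lfloor\log(n)\rfloor}\$r_{1}\#\cdots\#[\lfloor\log(n)\rfloor]_{\lfloor\log(n)\rfloor}\$r_{\lfloor\log(n)\rfloor}
\]
gives $\boldsymbol{\varepsilon}\#r_{1}\#\cdots\#r_{\lfloor\log(n)\rfloor}$. This map is a bijection between the decorated and undecorated sets. It preserves membership, because the condition $\bigodot_{\boldsymbol{\varepsilon}[i]=1}r_{i}\$\in\text{RA}$ is the same in both definitions; I will check that the trailing \$ convention in $POS_{n}$ matches.

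Next I define $\mathcal{M}^{\ast}$. It has the states of $\mathcal{M}$ together with auxiliary \emph{skipping} states. When $\mathcal{M}^{\ast}$ is simulating $\mathcal{M}$ and moves right off a $\#$ that lies strictly to the right of the leftmost $\#$, it enters a skipping mode. In this mode it moves right over the decoration block until it has passed the first \$, and then resumes $\mathcal{M}$'s state. Leftward moves are handled symmetrically: entering a \$ from the right, it scans left past the $0/1$ decoration symbols to the preceding $\#$. This is unambiguous, since each block contains exactly one \$ and no $\#$. The pending state of $\mathcal{M}$ is carried along, so the skipping states are pairs (state of $\mathcal{M}$, direction).

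The leftmost $\#$ needs care, because there the block $[1]_{\lfloor\log(n)\rfloor}\$$ sits immediately to its right. I will let $\mathcal{M}^{\ast}$ be in exactly the state of $\mathcal{M}$ while it is on that cell. All skipping states are used only on cells strictly to the right of it. Then every crossing state of $\mathcal{M}^{\ast}$ is a crossing state of $\mathcal{M}$ on the corresponding input.

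Acceptance is preserved because $\mathcal{M}^{\ast}$ faithfully simulates $\mathcal{M}$ on the image input. Nondeterminism is also harmless, since the skipping phases are deterministic.

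The main obstacle is the boundary step across the first block. When $\mathcal{M}$ moves right from the leftmost $\#$, $\mathcal{M}^{\ast}$ must skip $[1]\$$, and the first skipping state is occupied on the cell just after $\#$, not on $\#$ itself. I must make sure the definition of crossing state counts only states present at the leftmost-$\#$ cell. If instead it counts states present during the step off that cell, I would fold the skip into the transition: take the state at $\#$ to be $\mathcal{M}$'s state and let the skipping state be tagged but not counted. The other edge cases are end markers and $r_{i}$ empty, where a $\#$ or \$ directly follows \$. These only require the skipping rule to stop at the first \$ met.
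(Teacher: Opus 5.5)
\textbf{The deletion step fails.} Your plan is the paper's plan: its proof only says that $\mathcal{M}$ should ``ignore the infixes''. But the step you postponed, checking the trailing-\$ convention, fails. With it goes your claim that deleting the decoration blocks preserves membership. By definition, $POS_{n}(\mathrm{RA})$ requires $\bigodot_{i:\boldsymbol{\varepsilon}[i]=1}r_{i}\in\mathrm{RA}$ with nothing appended. By contrast, $POS^{\ast}_{n}(\mathrm{RA})$ requires $\bigodot_{\boldsymbol{\varepsilon}[i]=1}r_{i}\$\in\mathrm{RA}$. Every nonempty word of RA ends in \$, and once you delete $[i]_{\lfloor\log(n)\rfloor}\$$ the co-factors are \$-free. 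Take $\boldsymbol{\varepsilon}=10\cdots0$ and $r_{1}=101$. The decorated word lies in $POS^{\ast}_{n}(\mathrm{RA})$, because $101\$$ encodes a crossable rectangle of depth one. Its image, however, has $\bigodot_{i:\boldsymbol{\varepsilon}[i]=1}r_{i}=101\notin\mathrm{RA}$. So $\mathcal{M}$ rejects the image, and your $\mathcal{M}^{\ast}$ rejects a positive instance.

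\textbf{How to repair it.} The \$ must be relocated, not discarded. Let $\mathcal{M}^{\ast}$ simulate $\mathcal{M}$ on $\boldsymbol{\varepsilon}\#r_{1}\$\#r_{2}\$\#\cdots\#r_{\lfloor\log(n)\rfloor}\$$. Its co-factors $r_{i}\$$ are words over the alphabet $\{0,1,\$\}$ of RA, and its mask condition is literally the decorated one. Besides skipping $[i]_{\lfloor\log(n)\rfloor}\$$, $\mathcal{M}^{\ast}$ keeps a flag for the virtual \$ closing $r_{i}$ while its head is on the $\#$ following $r_{i}$ (or at the right end of the input). That cell is never the leftmost $\#$, so no crossing states are added.

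\textbf{Remaining issues.} Even with this repair, $|r_{i}\$|$ can be $n+1$. Whenever some $|r_{i}|=n$, the simulated word lies outside $POS_{n}(\mathrm{RA})\cup NEG_{n}(\mathrm{RA})$, where $\mathcal{M}$ is unconstrained. The statement gives you no slack here, and the paper's proof does not address it either. So these inputs need a separate argument, or the length bounds need adjusting.

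Finally, the boundary step you singled out, moving right off the leftmost $\#$, is harmless; the leftward one is not. If your leftward skip recognises the preceding $\#$ only by reading it, a skipping state is occupied on the leftmost $\#$. The same problem arises if you tag states by region in order to tell the leftmost $\#$ from the later ones that carry the virtual \$. Since $\mathcal{M}^{\ast}$ is built for a fixed $n$, and both $\boldsymbol{\varepsilon}$ and the blocks have known length $\lfloor\log(n)\rfloor$, use step counters. Then every arrival at the leftmost $\#$, from either side, is already in a plain state of $\mathcal{M}$, and tagged copies are used everywhere else.
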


\begin{proof}
Let $\mathcal{M}$ be a 2NFA that separates $POS_{n}\left( \text{RA}\right) $
from $NEG_{n}\left( \text{RA}\right) $. This automaton can be used to
separate $POS_{n}^{\ast }\left( \text{RA}\right) $ from $NEG_{n}^{\ast
}\left( \text{RA}\right) $. It suffices if, on the input $\mathbf{%
\varepsilon }\#\left[ 1\right] _{\left\lfloor \log \left( n\right)
\right\rfloor }\$r_{1}\#\cdots \#\left[ \log \left( n\right) \right]
_{\left\lfloor \log \left( n\right) \right\rfloor }\$r_{\left\lfloor \log
\left( n\right) \right\rfloor }$, the automaton ignores the infixes 
\begin{equation*}
\#\left[ 1\right] _{\left\lfloor \log \left( n\right) \right\rfloor
}\$,\ldots ,\#\left[ \left\lfloor \log \left( n\right) -1\right\rfloor %
\right] _{\left\lfloor \log \left( n\right) \right\rfloor }\$,\text{ and }%
\left[ \left\lfloor \log \left( n\right) \right\rfloor \right]
_{\left\lfloor \log \left( n\right) \right\rfloor }.
\end{equation*}%
The lemma follows from this.
\end{proof}

We show, in Section \ref{Section9}, that there exist $c,d>0$ such that, for
all $n$ large enough, the following inequality holds:

\begin{equation*}
2ncc\left( POS_{n}^{\ast }\left( \text{RA}\right) ,NEG_{n}^{\ast }\left( 
\text{RA}\right) \right) \geq c\cdot n^{d}.
\end{equation*}%
This implies that RA is crossing-hard and this also implies the separation $%
\mathcal{NL}\neq \log \mathcal{CFL}$.

\section{On the crossing complexity of pattern matching}

We are interested in the crossing complexity of digraph accessibility (the crossing complexity of the language RA). This leads us to study the crossing complexity of pattern
matching:

\begin{enumerate}
\item We show that a promise problem related to pattern matching is crossing
hard.

\item We show that the crossing complexity of this pattern matching problem
is at most as high as the crossing complexity of RA.
\end{enumerate}

Let $i\geq 1$. Recall that $\left\langle i\right\rangle $ denotes the binary
code of $i-1$.

\begin{definition}
Suppose $i\leq 2^{m+1}.$ We define:%
\begin{equation*}
\left\langle i\right\rangle _{m}=0^{m-\left\vert \left\langle i\right\rangle
\right\vert }\left\langle i\right\rangle .
\end{equation*}
\end{definition}

\begin{definition}
Let $n\geq 2$, we use $PM_{n}$ to denote the following promise problem:

\begin{itemize}
\item Input:$\ \mathbf{\varepsilon }\#1^{n_{1}}\#\cdots \#1^{n_{m}}$, where $%
\mathbf{\varepsilon }\in \left\{ 0,1\right\} ^{\left\lfloor \log \left(
n\right) \right\rfloor }$; $m\leq 2^{4\cdot \left\lfloor \log \left(
n\right) \right\rfloor }$; and $n_{1},\ldots ,n_{m}\leq 2^{\left\lfloor \log
\left( n\right) \right\rfloor }$.

\item Problem: decide whether there exists $i\leq m$ such that $\mathbf{%
\varepsilon }=\left\langle n_{i}\right\rangle _{\left\lfloor \log \left(
n\right) \right\rfloor }.$
\end{itemize}
\end{definition}

We use $PM_{n}^{+}$ to denote the set of positive instances of $PM_{n}$. We
use $PM_{n}^{-}$ to denote the set of negative instances.

\begin{remark}
From now on, we use this notation with any promise problem under
consideration, (i.e., \thinspace $L^{+}$ and $L^{-}$ denote the sets of
positive and negative instances of the promise problem $L$).
\end{remark}

We define $2ncc\left( PM_{n}^{+},PM_{n}^{-}\right) $ in the usual way. We
have:

\begin{theorem}
For all $n\geq 2,$ the following inequality holds:%
\begin{equation*}
2ncc\left( PM_{n}^{+},PM_{n}^{-}\right) \geq \left( \frac{n}{2}\right) ^{%
\frac{1}{2}}.
\end{equation*}
\end{theorem}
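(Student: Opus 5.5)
The plan is a fooling-set (cut-and-paste) argument on the crossing behaviour at the leftmost $\#$.

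\emph{Setup.} Fix $n\geq 2$ and put $\ell=\lfloor \log (n)\rfloor$, so that $2^{\ell}\geq n/2$. Let $\mathcal{M}$ be a 2NFA separating $PM_{n}^{+}$ from $PM_{n}^{-}$, with crossing size $s=\Vert \mathcal{M}\Vert _{c}$. For each $\boldsymbol{\varepsilon}\in \{0,1\}^{\ell}$ let $k(\boldsymbol{\varepsilon})$ be the unique $k\leq 2^{\ell}$ with $\langle k\rangle _{\ell}=\boldsymbol{\varepsilon}$. Consider the fooling family of positive instances
\[
x_{\boldsymbol{\varepsilon}}=\boldsymbol{\varepsilon}\#1^{k(\boldsymbol{\varepsilon})}.
\]
For $\boldsymbol{\varepsilon}\neq \boldsymbol{\varepsilon}'$, the mixed string $\boldsymbol{\varepsilon}\#1^{k(\boldsymbol{\varepsilon}')}$ is a legal negative instance. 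This uses that $\langle \cdot \rangle _{\ell}$ is injective on $\{1,\ldots ,2^{\ell}\}$ and that $m=1\leq 2^{4\ell}$.

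\emph{Compressing each accepting run.} For each $\boldsymbol{\varepsilon}$, fix a shortest accepting run of $\mathcal{M}$ on $x_{\boldsymbol{\varepsilon}}$ and record its crossing sequence at the $\#$ cell. In a shortest run no pair (state, head on the $\#$ cell) repeats, so this is a sequence of pairwise distinct crossing states. I would then compress it to a pair of crossing states: the first crossing state (the state in which control first passes from $\boldsymbol{\varepsilon}$ to the right side) and the last one (the state in which control finally leaves the $\#$ cell toward the side where acceptance occurs). This gives a map
\[
\boldsymbol{\varepsilon}\mapsto (p_{\boldsymbol{\varepsilon}},q_{\boldsymbol{\varepsilon}})
\]
into a set of size at most $s^{2}$. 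The core claim is that this map is injective. Granting it, $s^{2}\geq 2^{\ell}\geq n/2$, so $s\geq (n/2)^{1/2}$, which is the statement.

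\emph{Main obstacle.} Injectivity is the hard step, because the standard cut-and-paste lemma for two-way automata needs the \emph{entire} crossing sequences to agree, not just their endpoints. I expect this is where the specific shape of $PM_{n}$ must be used. The first thing I would try is to exploit that the right-hand side is unary and that $PM_{n}$ allows many blocks. From $x_{\boldsymbol{\varepsilon}}$ and $x_{\boldsymbol{\varepsilon}'}$ I would build the two-block right sides
\[
1^{k(\boldsymbol{\varepsilon})}\#1^{k(\boldsymbol{\varepsilon}')}\quad\text{and}\quad 1^{k(\boldsymbol{\varepsilon}')}\#1^{k(\boldsymbol{\varepsilon})},
\]
and use a crossing-sequence matching argument on the \emph{inner} $\#$ (whose crossings are not charged to $s$) to normalise every accepting run. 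The aim is for each normalised run to cross the leftmost $\#$ only twice: once going right carrying $p_{\boldsymbol{\varepsilon}}$, and once coming back or finishing with $q_{\boldsymbol{\varepsilon}}$. After this normalisation, agreement of the endpoint pairs for $\boldsymbol{\varepsilon}\neq \boldsymbol{\varepsilon}'$ would let the left part of the run on $\boldsymbol{\varepsilon}$ be glued to the right part of the run on $\boldsymbol{\varepsilon}'$, producing an accepting run on the negative instance $\boldsymbol{\varepsilon}\#1^{k(\boldsymbol{\varepsilon}')}$, a contradiction.

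\emph{Fallback.} If this normalisation fails, I would weaken the target to injectivity of the full set of crossing states used by the run, and try to bound the number of such sets polynomially in $s$ using minimality of runs. This would still give a bound of the form $n^{d}$, possibly with a smaller exponent than $\tfrac{1}{2}$.
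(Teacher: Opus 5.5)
Your core claim, that $\boldsymbol{\varepsilon}\mapsto(p_{\boldsymbol{\varepsilon}},q_{\boldsymbol{\varepsilon}})$ is injective, is false, and no normalisation can rescue it.

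Consider the following automaton. It first sweeps to the right end, ignoring the bits of $\boldsymbol{\varepsilon}$, to see whether there is more than one block. On single-block inputs $\boldsymbol{\varepsilon}\#1^{k}$ it then works deterministically. For $i=1,\ldots,\ell$ it reads $\boldsymbol{\varepsilon}[i]$ and crosses the $\#$ in a state $(i,\boldsymbol{\varepsilon}[i])$. It computes the $i$-th bit of $\langle k\rangle_{\ell}$ by counting the unary block with non-crossing states, rejects on a mismatch, and otherwise returns. After bit $\ell$ it crosses back in a fixed state and accepts on the left. On multi-block inputs it carries all of $\boldsymbol{\varepsilon}$ across (about $2^{\ell}$ crossing states) and guesses a matching block.

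This automaton separates $PM_{n}^{+}$ from $PM_{n}^{-}$. Yet the unique accepting run on each $x_{\boldsymbol{\varepsilon}}$ crosses the leftmost $\#$ about $2\ell$ times. It always has the same first crossing state (the sweep state) and the same last one. So no normalisation to two crossings exists without changing the automaton, that is, without changing the quantity you are bounding.

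More fundamentally, the example shows that single-block instances are separated with $O(\log n)$ crossing states. Your family varies only the left part, which carries just $\ell$ bits, so it cannot support a polynomial bound. The valid cut-and-paste argument uses full crossing sequences, which are repetition-free in shortest runs, so there are at most $2^{O(s\log s)}$ of them. It therefore gives only $2^{O(s\log s)}\geq 2^{\ell}$, i.e.\ $s=\Omega(\log n/\log\log n)$. Your fallback has the same defect. Minimality bounds the number of sets of crossing states only by $2^{s}$, which yields $s\geq \ell$. No argument that looks only at single-block runs can do better, since on those inputs $O(\ell)$ crossing states suffice.

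The missing idea is to vary the \emph{right} part, which is where $PM_{n}$ carries $2^{\ell}$ bits of information; this is what the paper does. Since $m$ may be as large as $2^{4\ell}$, every nonempty $S\subseteq\{1,\ldots,2^{\ell}\}$ is the support of some right part $1^{n_{1}}\#\cdots\#1^{n_{m}}$. A right part affects acceptance only through its behaviour relation $f:Q_{c}\rightarrow 2^{Q_{c}}$. This relation sends each crossing state in which the head enters the right part to the set of crossing states in which it can next return. (Strictly, one must also record, for each entering state, whether acceptance is possible on the right; this only turns $s^{2}$ into $s^{2}+s$ below.)

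Right parts with different supports must have different behaviours. Take $j$ in the symmetric difference of the two supports and put $\boldsymbol{\gamma}=\langle j\rangle_{\ell}$ on the left. Exactly one of the two resulting inputs is positive, but equal behaviours force equal verdicts. Hence the $2^{2^{\ell}}-1$ supports inject into at most $2^{s^{2}}$ behaviours, so $s^{2}\geq 2^{\ell}\geq n/2$. In communication terms, this is a one-way lower bound for an index-type function. The $s^{2}$-bit behaviour relation plays the role of the message sent from right to left, so the hardness lives in the right part.
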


\begin{proof}
Let $n\geq 2$ and let $w=\mathbf{\varepsilon }\#1^{n_{1}}\#\cdots
\#1^{n_{k}} $ be an instance of $PM_{n}.$ We use $S\left( w\right) $ to
denote the set 
\begin{equation*}
\bigcup_{i \leq k} \{n_k\}.
\end{equation*}
which we call the \textit{support} of $w.$

Let $\mathcal{M}_{n}$ be a 2NFA that solves the problem $PM_{n}$ and let $%
Q_{c}$ be its set of crossing states. Let $f_{w,\mathcal{M}%
_{n}}:Q_{c}\rightarrow 2^{Q_{c}}$ be the function defined by:

Let $p\in Q_{c}$, suppose the input head of $\mathcal{M}_{n}$ passes from
left ($\mathbf{\varepsilon }$) to right ($1^{n_{1}}\#\cdots \#1^{n_{k}}$) in
the crossing state $p,$ the set of crossing states that can be visited in
the next pass (from right to left) is equal to $f_{w,\mathcal{M}_{n}}\left(
p\right) $.

Suppose there exist two inputs $w=\mathbf{\varepsilon }\#1^{n_{1}}\#\cdots
\#1^{n_{k}}$ and $u=\mathbf{\delta }\#1^{s_{1}}\#\cdots \#1^{s_{t}}$ such
that:

\begin{itemize}
\item $S\left( w\right) \neq S\left( u\right) ,$ and

\item $f_{w,\mathcal{M}_{n}}=$ $f_{u,\mathcal{M}_{n}}$.
\end{itemize}

Let $j\notin S\left( w\right) \cap S\left( u\right) $ and let $\mathbf{%
\gamma }=\left\langle j\right\rangle _{\left\lfloor \log \left( n\right)
\right\rfloor }$. Set $w^{\ast }=\mathbf{\gamma }\#1^{n_{1}}\#\cdots
\#1^{n_{k}}$ and $u^{\ast }=\mathbf{\gamma }\#1^{s_{1}}\#\cdots \#1^{s_{t}}.$
Let us consider these two inputs. We have:

\begin{enumerate}
\item $w^{\ast }\in PM_{n}^{+}\Leftrightarrow u^{\ast }\notin PM_{n}^{+}$

\item The automaton accepts both inputs or rejects both.
\end{enumerate}

This is a contradiction. We conclude that there exists an injective function
from $\mathcal{P}\left( \left\{ 1,...,2^{\left\lfloor \log \left( n\right)
\right\rfloor }\right\} \right) $ into $\left( 2^{Q_{c}}\right) ^{Q_{c}}$.
This entails the inequality 
\begin{equation*}
\left\Vert \mathcal{M}_{n}\right\Vert _{c}\geq \left( \frac{n}{2}\right) ^{%
\frac{1}{2}}.
\end{equation*}%
The theorem follows.
\end{proof}

In the next section, we show that a 2NFA separating $POS_{n}^{\ast }\left( 
\text{RA}\right) $ from $NEG_{n}^{\ast }\left( \text{RA}\right) $ requires
as many states as a 2NFA solving the problem $PM_{n^{\frac{1}{4}}}$. This
implies that RA is crossing-hard.

\section{The language RA is crossing-hard\label{Section9}}

In this section, the last section of this work, we prove that RA is
crossing-hard. We get as a corollary the separation $\mathcal{NL}\neq \log 
\mathcal{CFL}$.

\subsection{Decorated versions of pattern matching}

\begin{definition}
We use the following notation:

\begin{enumerate}
\item Let $x,y\in \left\{ 0,1\right\} ,$ we use $x\oplus y$ to denote the
XOR of $x$ and $y$.

\item Let $\mathbf{\varepsilon }=\varepsilon _{1}\cdots \varepsilon _{m}\in
\left\{ 0,1\right\} ^{m}$. We define: 
\begin{equation*}
\mathbf{\varepsilon }^{\oplus }=\varepsilon _{1}\left( 1\oplus \varepsilon
_{1}\right) \cdots \varepsilon _{m}\left( 1\oplus \varepsilon _{m}\right) .
\end{equation*}
\end{enumerate}
\end{definition}

Next proposition is straightforward.

\begin{proposition}
Let $\mathbf{\varepsilon }$ and $\mathbf{\varepsilon }^{\oplus }$ be as
above. The pattern $111$ does not occur in $\mathbf{\varepsilon }^{\oplus }$.
\end{proposition}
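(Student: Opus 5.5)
The plan is to argue directly from the block structure of $\mathbf{\varepsilon }^{\oplus }$. By definition, $\mathbf{\varepsilon }^{\oplus }$ is the concatenation of the $m$ two-letter blocks $b_{j}=\varepsilon _{j}\left( 1\oplus \varepsilon _{j}\right) $ for $j=1,\ldots ,m$. First we observe that each block is either $10$ (when $\varepsilon _{j}=1$) or $01$ (when $\varepsilon _{j}=0$). In particular, every block contains exactly one $1$ and exactly one $0$. This is a two-case check using $1\oplus 1=0$ and $1\oplus 0=1$.

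Next, suppose for contradiction that $111$ occurs in $\mathbf{\varepsilon }^{\oplus }$. Say it occupies positions $p,p+1,p+2$ (1-indexed). Block $b_{j}$ occupies positions $2j-1$ and $2j$. Any window of three consecutive positions meets at most two blocks, and it contains one of them entirely: if $p$ is odd, it contains $b_{(p+1)/2}$; if $p$ is even, it contains $b_{(p+2)/2}$, which occupies positions $p+1,p+2$. That fully contained block contributes a $0$ inside the window, contradicting the assumption that all three letters are $1$.

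No step here is a real obstacle; the only care needed is the parity bookkeeping that locates a full block inside any length-3 window. As a byproduct, the same argument shows that $000$ does not occur in $\mathbf{\varepsilon }^{\oplus }$ either. It also shows that any two consecutive $1$'s must straddle a block boundary, as in $\cdots 01\,10\cdots $. This is presumably the feature that later lets these codes be distinguished from the decorations $\left[ i\right] _{m}$, which contain $1111$ (cf. Proposition \ref{proposicion tecnica boba}).
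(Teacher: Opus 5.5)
Your proof is correct. Each block $\varepsilon_j(1\oplus\varepsilon_j)$ is $10$ or $01$, and your parity bookkeeping correctly shows that every window of three consecutive positions contains a whole block, hence a $0$; this is the routine check the paper leaves unproved as ``straightforward.'' Your closing speculation about distinguishing the decorations $[i]_m$ is not needed (not all of them contain $1111$), and the fact is evidently meant for something else: it guarantees that the leftmost occurrence of $111$ in $\mathbf{\varepsilon}^{\oplus}111X\cdots$ sits at the junction of $\mathbf{\varepsilon}^{\oplus}$ and the appended $111$, so it can serve as the crossing cell.
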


Consider the following \textit{decorated version }of the problem $PM_{n^{%
\frac{1}{4}}}$.

\begin{definition}
Let $n\geq 1$. Let $X,Y\in \left\{ 0,1,\#,\$\right\} ^{\ast }$ be two
strings such that $X$ avoids the pattern $\#11\left( 10\right) ^{\ast }\$$
and $Y$ avoids the pattern $\#1111\left( 10\right) ^{\ast }\$$. We use $%
LOC_{n}\left( X,Y\right) $ to denote the following promise problem.

\begin{itemize}
\item Input: $\mathbf{\varepsilon }^{\oplus }111X\#11\left( 10\right) ^{\ast
}\$\mathbf{1}^{n_{1}}\mathbf{0\cdots 01}^{n_{m}}\#1111\left( 10\right)
^{\ast }\$Y$,

where $\mathbf{\varepsilon }\in \left\{ 0,1\right\} ^{\left\lfloor \frac{%
\log \left( n\right) }{4}\right\rfloor }$; $m\leq 2^{\left\lfloor \log
\left( n\right) \right\rfloor }$; and $n_{1},...,n_{m}\leq 2^{\left\lfloor 
\frac{\log \left( n\right) }{4}\right\rfloor }$.

\item Problem: decide whether there exists an odd $i\leq m$ such that $%
\left\langle n_{i}\right\rangle _{\left\lfloor \frac{\log \left( n\right) }{2%
}\right\rfloor }=\mathbf{\varepsilon }$.
\end{itemize}
\end{definition}

\begin{definition}
Let $\mathcal{M}_{n}$ be a two-way automaton solving the problem $%
LOC_{n}\left( X,Y\right) $. Let $q$ be an internal state of $\mathcal{M}_{n}$%
. We say that $q$ is a crossing state if $q$ is visited while the input head
is scanning the leftmost occurrence of the pattern $111$.
\end{definition}

It is worth noting that instances of $LOC_{n}\left( X,Y\right) $ are nothing
more than instances of $PM_{n^{\frac{1}{4}}}$ but decorated with an infix
and a suffix that can be easily recognized. We have:

\begin{lemma}
Let $X$ and $Y$ be as above. For all $n\geq 1,$ the following inequality
holds:%
\begin{equation*}
2ncc\left( LOC_{n}^{+}\left( X,Y\right) ,LOC_{n}^{-}\left( X,Y\right)
\right) \geq 2ncc\left( PM_{n^{\frac{1}{4}}}^{+},PM_{n^{\frac{1}{4}%
}}^{-}\right) .
\end{equation*}
\end{lemma}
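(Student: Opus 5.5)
The plan is a direct simulation. Given any 2NFA $\mathcal{M}_n$ solving $LOC_n(X,Y)$, I will build a 2NFA $\mathcal{N}$ solving $PM_{n^{\frac14}}$ whose crossing states, those visited on the leftmost $\#$, inject into the crossing states of $\mathcal{M}_n$. Taking $\mathcal{M}_n$ of minimal crossing size then gives the inequality.

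On an input $w=\boldsymbol{\varepsilon}\#1^{n_1}\#\cdots\#1^{n_m}$ of $PM_{n^{\frac14}}$, the automaton $\mathcal{N}$ runs $\mathcal{M}_n$ on a virtual decorated input
\[
\Phi(w)=\boldsymbol{\varepsilon}^{\oplus}111X\#11(10)^{\ast}\$\,1^{n_1}01^{n_1}0\cdots01^{n_m}01^{n_m}\#1111(10)^{\ast}\$Y .
\]
Here every block is written twice. This doubling handles the requirement in $LOC_n$ that the witness index be odd: a value occurs at an odd position of the doubled list iff it occurs anywhere in the original list. Hence $\Phi(w)\in LOC_n^{+}$ iff $w\in PM_{n^{\frac14}}^{+}$. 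I will check that $\Phi(w)$ satisfies the promise, namely that the counts $2m$ and the bit lengths fit. Where the indices $\lfloor\frac{\log n}{4}\rfloor$ and $\lfloor\frac{\log n}{2}\rfloor$ disagree, I will adjust $\Phi$ by a fixed padding of $\boldsymbol{\varepsilon}$, which is again a finite-state decoration.

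The simulation is the standard one for a fixed finite transduction. Each real cell of $w$ stands for a bounded number of virtual cells, and $\mathcal{N}$ records in its control a tag saying which virtual cell it is on. Each bit $\varepsilon_j$ stands for the two cells $\varepsilon_j(1\oplus\varepsilon_j)$. The constant strings $X\#11(10)^{\ast}\$$ and $\#1111(10)^{\ast}\$Y$ are hardwired into the finite control and simulated while parked on an endmarker or boundary cell. The doubled block $1^{n_i}01^{n_i}$ is simulated on the single real block $1^{n_i}$ together with a copy bit. To go from the end of the first copy to the start of the second, $\mathcal{N}$ walks back to the delimiter on the left and treats it as the virtual separator $0$. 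Moves in the reverse direction are handled symmetrically. All of this multiplies the state set of $\mathcal{M}_n$ by a constant but, so far, says nothing about crossing states.

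The key point, and the step I expect to need the most care, is arranging the correspondence so that the crossing count is not multiplied by a constant. I will place the virtual cells $\boldsymbol{\varepsilon}^{\oplus}$ and the first two $1$'s of the leftmost $111$ on the real $\boldsymbol{\varepsilon}$-region, using tags on its last cell. The real leftmost $\#$ will represent exactly one virtual cell, the third $1$ of that $111$. The remainder $X\#\cdots$ is simulated to its right, with the simulated head leaving the real $\#$ as soon as the simulated head of $\mathcal{M}_n$ moves right. Then whenever $\mathcal{N}$ is on the real $\#$, its tag is uniquely determined. Its state is $(q,\text{fixed tag})$ with $q$ a state of $\mathcal{M}_n$ visited on the leftmost occurrence of $111$, so $q$ is a crossing state of $\mathcal{M}_n$. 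Consequently $\|\mathcal{N}\|_c\le\|\mathcal{M}_n\|_c$.

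Two routine checks remain. First, the leftmost occurrence of $111$ in $\Phi(w)$ is indeed the one right after $\boldsymbol{\varepsilon}^{\oplus}$; this holds because $\boldsymbol{\varepsilon}^{\oplus}$ avoids $111$ by the preceding proposition. Second, the auxiliary walks inside the blocks never pass the leftmost $\#$. Together these give the lemma.
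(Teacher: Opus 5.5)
Your route is the paper's: simulate an arbitrary $\mathcal{M}_n$ for $LOC_n(X,Y)$ on a decorated copy of the $PM_{n^{1/4}}$ input, and show that the states the simulator uses on the real leftmost $\#$ inject into the crossing states of $\mathcal{M}_n$. Your doubling handles the odd-index condition more explicitly than the paper's $S(w)$, and you are more careful about which single virtual cell the real $\#$ stands for. But the injection fails as you set it up, because in $\boldsymbol{\varepsilon}\#1^{n_1}\#\cdots$ the leftmost $\#$ is also the left delimiter of the first block. Your rule ``walk back to the delimiter on the left and treat it as the virtual separator $0$'' puts $\mathcal{N}$ on the leftmost $\#$ for $i=1$. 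So does its mirror image, which detects the simulated head leaving $1^{n_1}$ leftwards into the parked string $X\#11(10)^{\ast}\$$. In both cases $\mathcal{N}$ sits there in a state $(q,\mathrm{tag})$, where $q$ is whatever state $\mathcal{M}_n$ has on that separator or on the first cell of the block. Such $q$ need not be crossing states of $\mathcal{M}_n$, and $\mathcal{M}_n$ may use arbitrarily many of them (it may, say, count inside $1^{n_1}$). Hence $\|\mathcal{N}\|_c\le\|\mathcal{M}_n\|_c$ does not follow, and your closing claim that the auxiliary walks never pass the leftmost $\#$ is false precisely for $i=1$. The same happens if you recognise the last cell of $\boldsymbol{\varepsilon}$ by reading the $\#$. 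This is not cosmetic: a 2NFA can locate the left end of $1^{n_1}$ only by reading the cell to its left or by already knowing its offset. The paper's proof has the same lacuna; it parks $X\#11(10)^{\ast}\$$ on the first cell of $1^{n_1}$ and simply asserts what the crossing states of $\mathcal{N}_n$ are.

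The missing idea is that $\mathcal{N}$ must never use the real $\#$ as a landmark. It can afford this because only the states on that one cell are counted. Concretely:
\begin{itemize}
\item Keep the position inside $\boldsymbol{\varepsilon}$ in a counter; its length $\lfloor\log(n)/4\rfloor$ is fixed.
\item Park the third appended $1$ together with $X\#11(10)^{\ast}\$$ on the first cell of $1^{n_1}$.
\item Keep the exact offset inside $1^{n_1}$ in the control. It is $1$ on entry from the $\#$. The value $n_1\le 2^{\lfloor\log(n)/4\rfloor}$ is learned on reaching the second $\#$ and kept until the simulated head next enters $X\#11(10)^{\ast}\$$. Since the simulated head can get to the right of the first block only through the second $\#$, $\mathcal{N}$ never enters $1^{n_1}$ from the right without knowing $n_1$.
\end{itemize}
Then the real $\#$ is entered only by simulated moves onto one designated cell, with all counters cleared.

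That designated cell must be the second appended $1$, not the third. If the last bit of $\boldsymbol{\varepsilon}$ is $0$, then $\boldsymbol{\varepsilon}^{\oplus}$ ends in $1$ and the leftmost occurrence of $111$ starts at the last cell of $\boldsymbol{\varepsilon}^{\oplus}$. States on the third appended $1$ then need not be crossing states; the fact that $\boldsymbol{\varepsilon}^{\oplus}$ avoids $111$ does not exclude this.

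Finally, the promise check you announce fails in two ways. Doubling gives $2m$ blocks, which exceeds $2^{\lfloor\log(n)\rfloor}$ when $4\mid\lfloor\log(n)\rfloor$ and $m$ is close to $2^{4\lfloor\log(n)/4\rfloor}$. Restrict $PM_{n^{1/4}}$ to $m\le 2^{\lfloor\log(n)/4\rfloor}$, which is all its lower-bound proof uses. Also, padding $\boldsymbol{\varepsilon}$ cannot repair the $\lfloor\log(n)/2\rfloor$ versus $\lfloor\log(n)/4\rfloor$ mismatch, because the promise fixes $|\boldsymbol{\varepsilon}|$. Read that subscript in the definition of $LOC_n$ as a typo for $\lfloor\log(n)/4\rfloor$.
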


\begin{proof}
Let $n\geq 1$ and let $\mathcal{M}_{n}$ be a nondeterministic pebble
automaton that solves the promise problem $LOC_{n}\left( X,Y\right) $. Let $%
s\left( w\right) =\mathbf{\varepsilon }\#1^{n_{1}}\#\cdots \#1^{n_{m}}$ be
an instance of $PM_{n^{\frac{1}{4}}}$. Suppose $\mathbf{\varepsilon }\neq
0^{\ast }$. Let%
\begin{equation*}
S\left( w\right) =\mathbf{\varepsilon }^{\oplus }111X\#11\left( 10\right)
^{\ast }\$\mathbf{1}^{n_{1}}\mathbf{01\cdots 01}^{n_{m}}\mathbf{01}%
\#1111\left( 10\right) ^{\ast }\$Y.
\end{equation*}%
We construct a 2NFA $\mathcal{N}_{n}$ that, on input $s\left( w\right) $,
simulates the computation of $\mathcal{M}_{n}$ on $S\left( w\right) $. We
show that $\mathcal{N}_{n}$ separates $PM_{n^{\frac{1}{4}}}$ from $PM_{n^{%
\frac{1}{4}}}^{-}$ and that $\left\Vert \mathcal{M}_{n}\right\Vert
_{c}=\left\Vert \mathcal{N}_{n}\right\Vert _{c}$.

Let $Q$ be the set of internal of states of $\mathcal{M}_{n}$. The set of
internal states of $\mathcal{N}_{n}$ is the set: 
\begin{eqnarray*}
&&Q\times \left\{ 1,\ldots ,\left\vert X\#11\left( 10\right) ^{\ast
}\$\right\vert +\left\vert \#1111\left( 10\right) ^{\ast }\$Y\right\vert
+1\right\} \times  \\
&&\left\{ 1,\ldots ,\left\lfloor \frac{\log \left( n\right) }{4}%
\right\rfloor \right\} .
\end{eqnarray*}

\begin{itemize}
\item Suppose $\mathcal{M}_{n}$ is in state $q$ scanning the $i$-th position
of $X\#11\left( 10\right) ^{\ast }\$$. Then, $\mathcal{N}_{n}$ is in state $%
\left( q,i,\left\lfloor \frac{\log \left( n\right) }{4}\right\rfloor \right) 
$ with the input head placed on the cell $\left\vert \mathbf{\varepsilon }%
\#\right\vert +1.$

\item Suppose $\mathcal{M}_{n}$ is in state $q$ scanning the $i$-th position
of the factor $1^{n_{j}}0$. Then, $\mathcal{N}_{n}$ is in state $\left(
q,\left\vert X\#11\left( 10\right) ^{\ast }\$\right\vert +1,\left\lfloor 
\frac{\log \left( n\right) }{4}\right\rfloor \right) $ with the input head
placed on the $i$-th position of the factor $1^{n_{j}}0.$

\item Suppose $\mathcal{M}_{n}$ is in state $q$ scanning the $i$-th position
of the factor $\#1111\left( 10\right) ^{\ast }\$Y$. Then, $\mathcal{N}_{n}$
is in state 
\begin{equation*}
\left( q,\left\vert X\#11\left( 10\right) ^{\ast }\$\right\vert
+1+i,\left\lfloor \frac{\log \left( n\right) }{4}\right\rfloor \right) 
\end{equation*}
with the input head placed on the last position of $1^{n_{m}}.$

\item Let $\delta =0,1$. Suppose $\mathcal{M}_{n}$ is in state $q$ scanning
the $\left( 2i-\delta \right) $-th position of the factor $\mathbf{%
\varepsilon }^{\oplus }$. Then, $\mathcal{N}_{n}$ is in state $\left(
q,1,i\right) $ with the input head placed on the $i$-th position of $\mathbf{%
\varepsilon }.$
\end{itemize}

The set of accepting states is the set 
\begin{equation*}
\left\{ \left( q,i,j\right) :q\text{ is accepting}\right\} .
\end{equation*}%
The transition function is defined accordingly. It is easy to check that $%
\mathcal{N}_{n}$ separates $PM_{n^{\frac{1}{4}}}$ from $PM_{n^{\frac{1}{4}%
}}^{-}$.

Let $Q_{c}$ be the set of crossing states of $\mathcal{M}_{n}$. It is
important to observe that the set of crossing states of $\mathcal{N}_{n}$ is
equal to 
\begin{equation*}
Q_{c}\times \left\{ 1\right\} \times \left\{ \left\lfloor \frac{\log \left(
n\right) }{4}\right\rfloor \right\} .
\end{equation*}%
This implies the equality $\left\Vert \mathcal{M}_{n}\right\Vert
_{c}=\left\Vert \mathcal{N}_{n}\right\Vert _{c}$. We obtain:%
\begin{equation*}
2ncc\left( LOC_{n}^{+}\left( X,Y\right) ,LOC_{n}^{-}\left( X,Y\right)
\right) \geq 2ncc\left( PM_{n^{\frac{1}{4}}}^{+},PM_{n^{\frac{1}{4}%
}}^{-}\right) .
\end{equation*}%
The lemma follows.

We obtain the following corollary:
\end{proof}

\begin{corollary}
Let $X,Y\in \left\{ 0,1,\#,\$\right\} ^{\ast }$ be as above. For all $n\geq
2,$ the following inequality holds:%
\begin{equation*}
2ncc\left( LOC_{n}^{+}\left( X,Y\right) ,LOC_{n}^{-}\left( X,Y\right)
\right) \geq \left( \frac{n}{2}\right) ^{\frac{1}{8}}.
\end{equation*}
\end{corollary}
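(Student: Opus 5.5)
The plan is to chain the preceding Lemma with the pattern-matching Theorem, and then check the exponent and the constant. For every admissible pair $X,Y$ and every $n\geq 2$, the Lemma gives
\[
2ncc\left( LOC_{n}^{+}\left( X,Y\right) ,LOC_{n}^{-}\left( X,Y\right) \right) \geq 2ncc\left( PM_{n^{\frac{1}{4}}}^{+},PM_{n^{\frac{1}{4}}}^{-}\right) .
\]
So it suffices to bound the right-hand side from below by $\left( \frac{n}{2}\right) ^{\frac{1}{8}}$. The reduction behind the Lemma (simulating $\mathcal{M}_{n}$ on the decorated word $S\left( w\right) $ while keeping the crossing states as $Q_{c}\times \{1\}\times \{\lfloor \log (n)/4\rfloor \}$) I take as given.

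Next I would not apply the Theorem as a black box with $n$ replaced by $n^{1/4}$. Doing so only yields $\bigl( n^{1/4}/2\bigr) ^{1/2}=2^{-1/2}n^{1/8}$, which is weaker than the stated bound. Instead I would rerun its fooling-set argument with the parameters of $LOC_{n}$ written explicitly. Let $N=2^{\lfloor \log (n)/4\rfloor }$; the supports $S\left( w\right) $ then range over all subsets of $\{1,\ldots ,N\}$.

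For two inputs with different supports I would take $j$ in the symmetric difference of the two supports and prefix both right-hand parts with $\gamma =\langle j\rangle $. This shows that $w\mapsto f_{w,\mathcal{M}_{n}}$ separates supports, because otherwise the left/right crossing behaviour of the two modified inputs, and hence their acceptance, would coincide. This gives an injection $\mathcal{P}(\{1,\ldots ,N\})\hookrightarrow (2^{Q_{c}})^{Q_{c}}$, so $|Q_{c}|^{2}\geq N$.

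The main obstacle is the constant. From $|Q_{c}|^{2}\geq N$ and $N\geq n^{1/4}/2$ one gets only $|Q_{c}|\geq 2^{-1/2}n^{1/8}$, and $(n/2)^{1/8}=2^{-1/8}n^{1/8}$ is larger. What I would try first is to sharpen the counting instead of the exponent. Here $f_{w}$ is a relation on $Q_{c}$ that can be read in both crossing directions, and the acceptance pattern is a function of the pair (left relation, right relation). I would try to show that only the right-hand relation depends on the support and that it is determined up to a relabelling fixed by the left block. If that does not save the factor $2^{3/8}$, I would exploit integrality of $|Q_{c}|$ together with the extra freedom in $LOC_{n}$: there are up to $2^{\lfloor \log n\rfloor }$ blocks, and the parity condition on $i$ lets each odd/even pair of blocks carry additional information.

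I am not certain this closes the gap for every $n\geq 2$. In particular, when $n$ lies just below a power of $16$, the universe has size only about $n^{1/4}/16$, and a purely support-counting argument seems capped at about $n^{1/8}/4$. If the refinement fails, the fallback is to verify the bound through the chain above in the cases where $\lfloor \log (n)/4\rfloor =\log (n)/4$, and to flag the general constant as the point where the written argument needs the sharper count.
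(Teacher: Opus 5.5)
Your plan is the paper's own derivation. The corollary is stated right after the Lemma with no separate proof, so it rests on the Lemma plus the theorem on $PM_n$ applied at $n^{1/4}$. You are right that this chain gives only $\bigl(n^{1/4}/2\bigr)^{1/2}=(n/16)^{1/8}$, so the paper's constant is not justified by its own argument.

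\textbf{The gap.} The genuine gap is the step meant to recover the missing factor $2^{3/8}$. None of your repairs can supply it, because the inequality is false as stated for small $n$.
\begin{itemize}
\item For $3\le n\le 15$ we have $\lfloor \log(n)/4\rfloor=0$, so $\mathbf{\varepsilon}$ is empty and the input begins with the block $111$. Membership then depends only on what lies to the right of that block. A 2NFA can sweep right over the leading $1$'s in its initial state and decide the rest (a finite set of possibilities for fixed $n$) without ever returning. It therefore has at most one crossing state, while $(n/2)^{1/8}>1$.
\item For large $n$ your refinements do not change the count. The even-indexed blocks do not affect membership, so no left part separates two right parts with the same odd-indexed support. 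The family to be told apart is still indexed by the $2^{N}$ supports, and there are at most $2^{|Q_c|^{2}+|Q_c|}$ right-hand behaviours. So this style of argument stops at $|Q_c|\approx\sqrt{N}$.
\item Integrality gives no uniform gain. For $n=2^{16}-1$ one has $N=8$, hence only $|Q_c|\ge 3$, whereas $(n/2)^{1/8}\approx 3.67$.
\item A correction to your own estimate: $N=2^{\lfloor\log(n)/4\rfloor}>n^{1/4}/2$ for every $n$. Just below a power of $16$ the universe has size about $n^{1/4}/2$, not $n^{1/4}/16$.
\end{itemize}

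\textbf{What you can prove.} Your fooling-set rerun does give an $\Omega(n^{1/8})$ bound, which is all that crossing-hardness needs downstream. Two adjustments make it correct.
\begin{itemize}
\item The right-hand behaviour must record, besides the return relation on $Q_c$, the entry states from which the automaton can accept without coming back. The injection therefore goes into $\bigl(2^{Q_c\cup\{\mathrm{acc}\}}\bigr)^{Q_c}$ and yields $|Q_c|(|Q_c|+1)\ge N$, not $|Q_c|^{2}\ge N$.
\item For $n\ge 16$ we have $N\ge 2$, hence $|Q_c|\ge 1$ and $2|Q_c|^{2}\ge N>n^{1/4}/2$. This gives $2ncc\bigl(LOC_n^{+}(X,Y),LOC_n^{-}(X,Y)\bigr)>\tfrac12 n^{1/8}=(n/256)^{1/8}$.
\end{itemize}
If you want the form $(n/2)^{1/8}$, restrict to $n$ with $4\mid\lfloor\log n\rfloor$ rather than to powers of $16$. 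That is the only case Theorem \ref{2DFAs} uses. There $N=2^{\lfloor\log n\rfloor/4}\ge (n/2)^{1/4}$, and the chain gives $(n/2)^{1/8}$ up to the lower-order correction coming from the acceptance bit.
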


\subsection{From pattern matching to rectangle traversing}

Let $n\geq 2.$ We show that there exist two strings $X_{n},Y_{n}\in \left\{
0,1,\#,\$\right\} ^{\ast }$ such that:

\begin{enumerate}
\item $X_{n}$ avoids the pattern $\#11\left( 10\right) ^{\ast }\$$ and $%
Y_{n} $ avoids the pattern $\#1111\left( 10\right) ^{\ast }\$$.

\item If $\mathcal{M}$ is a 2NFA that separates $POS_{n}^{\ast }\left( \text{%
RA}\right) $ from $NEG_{n}^{\ast }\left( \text{RA}\right) $, then this same
automaton solves the \textit{promise problem} $LOC_{n}\left(
X_{n},Y_{n}\right) $.
\end{enumerate}

We construct $X_{n}$ and $Y_{n}$ in such a way that every positive instance
of $LOC_{n}\left( X_{n},Y_{n}\right) $, say 
\begin{equation*}
Z=\mathbf{\varepsilon }^{\left( 2\right) }111X_{n}\#11\left( 10\right)
^{\ast }\$\mathbf{1}^{n_{1}}\mathbf{0\cdots 01}^{n_{m}}\#1111\left(
10\right) ^{\ast }\$Y_{n},
\end{equation*}%
is an element of $POS_{n}^{\ast }\left( \text{RA}\right) $ and every
negative instance is an element of $NEG_{n}^{\ast }\left( \text{RA}\right) $

\begin{enumerate}
\item If we view $Z$ as an instance of $LOC_{n}\left( X_{n},Y_{n}\right) ,$
the problem we are considering is the problem of deciding whether we can
find the pattern $\mathbf{\varepsilon }$ within the set $\left\{
n_{1},\ldots ,n_{m}\right\} .$ Here, $\mathbf{\varepsilon }$ acts as a
pattern.

\item If we view $Z$ as an element of $POS_{n}^{\ast }\left( \text{RA}%
\right) \cup NEG_{n}^{\ast }\left( \text{RA}\right) $, the problem we are
considering is the problem of deciding whether the rectangle encoded by $Z$
can be crossed from left to right. Here, $\mathbf{\varepsilon }$ acts as
part of a mask.
\end{enumerate}

Then, we have to build a gadget that converts masks into patterns. Given $%
n\geq 32,$ we construct a gadget $B_{n}$ that fulfills this function.

\begin{definition}
\label{gadget}Let $n\geq 32.$ The construction of $B_{n}$ proceeds as
follows:

\begin{enumerate}
\item Given $1\leq i\leq \left\lfloor \frac{\log \left( n\right) }{4}%
\right\rfloor $, we choose a word $r_{2i-1}\in \left\{ 0,1\right\} ^{\leq n}$
encoding the rectangle $G_{i,1}$ that is defined by:

\begin{itemize}
\item $V\left( G_{i,1}\right) =\left\{ 1,\ldots ,2^{\left\lfloor \frac{\log
\left( n\right) }{4}\right\rfloor }\right\} \times \left\{ 1,2\right\} $.

\item $E\left( G_{i,1}\right) =\left\{ \left( \left( s,1\right) ,\left(
s,2\right) \right) :\text{the }i\text{-th bit of }\left\langle
s\right\rangle _{\left\lfloor \frac{\log \left( n\right) }{4}\right\rfloor }%
\text{ is equal to }1\right\} $.
\end{itemize}

\item Given $1\leq i\leq \left\lfloor \frac{\log \left( n\right) }{4}%
\right\rfloor $, we choose a word $r_{2i}\in \left\{ 0,1\right\} ^{\leq n}$
that encodes the even version of $G_{i,1}$, namely the rectangle $G_{i,0%
\text{ }}$ defined by the following dual condition:%
\begin{equation*}
E\left( G_{i,0}\right) =\left\{ \left( \left( s,1\right) ,\left( s,2\right)
\right) :\text{the }i\text{-th bit of }\left\langle s\right\rangle
_{\left\lfloor \frac{\log \left( n\right) }{4}\right\rfloor }\text{ is equal
to }0\right\} .
\end{equation*}

\item We set: 
\begin{equation*}
B_{n}=r_{1}\$\cdots \$r_{2\cdot \left\lfloor \frac{\log \left( n\right) }{4}%
\right\rfloor }\$.
\end{equation*}
\end{enumerate}
\end{definition}

\begin{definition}
Let $\mathbf{\varepsilon }\in \left\{ 0,1\right\} ^{\left\lfloor \frac{\log
\left( n\right) }{4}\right\rfloor }$. We use $i_{\mathbf{\varepsilon }}$ to
denote the integer in $\left\{ 1,...,2^{\left\lfloor \frac{\log \left(
n\right) }{4}\right\rfloor }\right\} $ that satisfies the condition $%
\left\langle i_{\mathbf{\varepsilon }}\right\rangle _{\left\lfloor \frac{%
\log \left( n\right) }{4}\right\rfloor }=\mathbf{\varepsilon }$.
\end{definition}

\begin{lemma}
\label{gadgetLemma}Let $\mathbf{\varepsilon }\in \left\{ 0,1\right\}
^{\left\lfloor \frac{\log \left( n\right) }{4}\right\rfloor }$ and let $%
B_{n}\left( \mathbf{\varepsilon }\right) $ be the rectangle encoded by the
string%
\begin{equation*}
\bigodot_{
i \leq 2\lfloor \tfrac{\log(n)}{4} \rfloor
:\,
\boldsymbol{\varepsilon}^{\oplus}[i]=1
}
r_i\$ .
\end{equation*}
The only node in the last level of $B_{n}\left( \mathbf{\varepsilon }\right) 
$ that can be reached from the first level is the node $\left( i_{%
\mathbf{\varepsilon }},\left\lfloor \frac{\log \left( n\right) }{4}%
\right\rfloor +1\right) .$
\end{lemma}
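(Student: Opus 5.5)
The plan is to unfold the definition of $\boldsymbol{\varepsilon}^{\oplus}$, identify exactly which layers are selected, and then observe that every edge of the resulting rectangle is ``horizontal'' (it keeps the row index $s$). Reachability then reduces to a bitwise comparison between $\langle s\rangle_{\ell}$ and $\boldsymbol{\varepsilon}$. Throughout, write $\ell=\left\lfloor \frac{\log(n)}{4}\right\rfloor$ and $\boldsymbol{\varepsilon}=\varepsilon_1\cdots\varepsilon_\ell$.

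\emph{Step 1: which factors are selected.} By definition,
\[
\boldsymbol{\varepsilon}^{\oplus}=\varepsilon_1(1\oplus\varepsilon_1)\cdots\varepsilon_\ell(1\oplus\varepsilon_\ell),
\]
so $\boldsymbol{\varepsilon}^{\oplus}[2i-1]=\varepsilon_i$ and $\boldsymbol{\varepsilon}^{\oplus}[2i]=1\oplus\varepsilon_i$. Hence, for each $i\leq \ell$, exactly one of $r_{2i-1},r_{2i}$ survives the mask:
\begin{itemize}
\item if $\varepsilon_i=1$, the surviving word is $r_{2i-1}$, which encodes $G_{i,1}$;
\item if $\varepsilon_i=0$, the surviving word is $r_{2i}$, which encodes $G_{i,0}$.
\end{itemize}
In both cases, the $i$-th surviving factor encodes the depth-one rectangle $G_{i,\varepsilon_i}$. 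Its edges are $((s,1),(s,2))$ for exactly those $s$ such that the $i$-th bit of $\langle s\rangle_{\ell}$ equals $\varepsilon_i$. The masked concatenation therefore has exactly $\ell$ factors, each terminated by $\$$.

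\emph{Step 2: the concatenated rectangle.} Following the remark after the encoding definition, a string $r_{1,l(1)}\$\cdots\$r_{k,l(k)}\$$ encodes the rectangle whose edges from level $j$ to level $j+1$ are those coded in the $j$-th factor. So $B_n(\boldsymbol{\varepsilon})$ has width $2^{\ell}$ and depth $\ell$, and its edge set is
\[
\bigl\{((s,i),(s,i+1)) : \text{the $i$-th bit of } \langle s\rangle_\ell \text{ equals } \varepsilon_i\bigr\}.
\]
Here I will also check that $2^\ell\leq n^{1/4}$, so that each $r_j$ indeed lies in $\{0,1\}^{\leq n}$; this follows from the earlier remark on rectangles of width $\lfloor n^{1/4}\rfloor$.

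\emph{Step 3: reachability.} Every edge preserves the first coordinate. Hence any path from level $1$ to level $\ell+1$ is of the form $(s,1),(s,2),\ldots,(s,\ell+1)$ for a single row $s$, and such a path exists iff, for all $i\leq\ell$, the $i$-th bit of $\langle s\rangle_\ell$ equals $\varepsilon_i$, i.e.\ iff $\langle s\rangle_\ell=\boldsymbol{\varepsilon}$. Since $s\mapsto\langle s\rangle_\ell$ is a bijection from $\{1,\ldots,2^\ell\}$ onto $\{0,1\}^\ell$ (it is the padded binary code of $s-1$), there is exactly one such $s$, namely $i_{\boldsymbol{\varepsilon}}$. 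Consequently $(i_{\boldsymbol{\varepsilon}},\ell+1)$ is reachable from the first level, and no other node of the last level is.

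The only step needing care is Step 2. The main obstacle is matching conventions: the concatenated word must be read as a single rectangle whose $j$-th layer comes from the $j$-th surviving factor, and the bit indexing of $\langle s\rangle_\ell$ must agree with that of $\boldsymbol{\varepsilon}$. Once this bookkeeping is fixed, Steps 1 and 3 are immediate.
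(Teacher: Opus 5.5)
Your proposal is correct and follows essentially the same route as the paper. Both arguments identify the masked concatenation as the depth-$\ell$ rectangle whose $i$-th layer is $G_{i,\varepsilon_i}$, observe that all edges are horizontal, and conclude that row $s$ reaches the last level iff $\langle s\rangle_\ell=\boldsymbol{\varepsilon}$. You are more explicit than the paper in unfolding $\boldsymbol{\varepsilon}^{\oplus}$ and in using the bijectivity of $s\mapsto\langle s\rangle_\ell$ to get both existence and uniqueness, where the paper leaves these steps implicit.
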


\begin{proof}
Let $\mathbf{\varepsilon }\in \left\{ 0,1\right\} ^{\left\lfloor \frac{\log
\left( n\right) }{4}\right\rfloor }$. Notice that:

\begin{enumerate}
\item The rectangle $B_{n}\left( \mathbf{\varepsilon }\right) $ is the rectangle of depth
$\lfloor \log(n)/4 \rfloor$
encoded by the string
\[
\bigodot_{i \leq \lfloor \log(n)/4 \rfloor}
G_{i,\boldsymbol{\varepsilon}[i]}\$ .
\]

\item Suppose that there exists a path from the first layer of $B_{n}\left( 
\mathbf{\varepsilon }\right) $ to the node $\left( s,\left\lfloor \frac{\log
\left( n\right) }{4}\right\rfloor +1\right) .$ This path is equal to%
\begin{equation*}
\left( s,1\right) ,\left( s,2\right) ,\ldots ,\left( s,\left\lfloor \frac{%
\log \left( n\right) }{4}\right\rfloor +1\right) .
\end{equation*}

\item For all $i\leq \left\lfloor \frac{\log \left( n\right) }{4}%
\right\rfloor $ and all $s\in \left\{ 1,\ldots ,2^{\left\lfloor \frac{\log
\left( n\right) }{4}\right\rfloor }\right\} $, the edge $\left( \left(
s,i\right) ,\left( s,i+1\right) \right) $ occurs in $B_{n}\left( \mathbf{%
\varepsilon }\right) $ if and only if 
\begin{equation*}
\mathbf{\varepsilon }\left[ i\right] =\left\langle s\right\rangle
_{\left\lfloor \frac{\log \left( n\right) }{4}\right\rfloor }\left[ i\right]
.
\end{equation*}
\end{enumerate}

The lemma follows from these three facts.
\end{proof}

\subsection{Main theorem}

We have:

\begin{theorem}
\label{2DFAs}For all $n\geq 2^{12}$, the following inequality holds:%
\begin{equation*}
2ncc\left( \text{RA,}n\right) \geq \left( \frac{n}{2}\right) ^{\frac{1}{8}}.
\end{equation*}
\end{theorem}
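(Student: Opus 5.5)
The plan is to chain three results already available: the lower bound for $PM$ (its Theorem, in the form of the Corollary for $LOC_{n}(X,Y)$), the Lemma comparing $POS_{n}(\text{RA})$ with the decorated $POS_{n}^{\ast}(\text{RA})$, and the gadget Lemma \ref{gadgetLemma}. Concretely, for each $n\geq 2^{12}$ I will exhibit strings $X_{n},Y_{n}$ satisfying the avoidance conditions. I will then show that every 2NFA $\mathcal{M}$ separating $POS_{n}^{\ast}(\text{RA})$ from $NEG_{n}^{\ast}(\text{RA})$ also solves $LOC_{n}(X_{n},Y_{n})$, with no more crossing states. Granting that, the Corollary gives $\Vert\mathcal{M}\Vert_{c}\geq (n/2)^{1/8}$, and the decoration Lemma transfers this bound to $2ncc(\text{RA},n)$.

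\textbf{Step 1: choosing the mask and $X_{n}$.} Set $k=\lfloor \log(n)/4\rfloor$. In an element of $POS_{n}^{\ast}(\text{RA})$ the mask has length $\lfloor\log(n)\rfloor\geq 2k+3$, and I use it as follows.
\begin{itemize}
\item The first $2k$ bits form $\boldsymbol{\varepsilon}^{\oplus}$. They select among the cofactors $r_{1},\ldots,r_{2k}$ of the gadget $B_{n}$ of Definition \ref{gadget}.
\item The next three bits are $111$. They select three further cofactors: a \emph{data layer}, a \emph{terminal layer}, and a padding layer whose edges are $(s,s)$ for all $s$.
\item The remaining mask bits are $0$, so the corresponding cofactors are ignored.
\end{itemize}
$X_{n}$ is the concatenation of the decorated blocks $\#[j]_{\lfloor\log n\rfloor}\$r_{j}$ for $j\leq 2k$, together with the decoration preceding the data block. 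By Proposition \ref{proposicion tecnica boba}, the decorations $\#11(10)^{\ast}\$$ and $\#1111(10)^{\ast}\$$ occur exactly at indices $\frac{m}{2}+1$ and $\frac{m}{2}+2$. I will therefore place the data layer at index $\frac{m}{2}+1$ and the terminal layer at index $\frac{m}{2}+2$, so the avoidance conditions on $X_{n}$ and $Y_{n}$ hold automatically. The data cofactor $1^{n_{1}}0\cdots01^{n_{m}}$ is read as the edge list $(n_{1},n_{2}),(n_{3},n_{4}),\ldots$, so that only odd-indexed $n_{i}$ occur as sources. $Y_{n}$ carries the terminal layer, which makes every target of the data layer reach the last level, together with the trailing ignored blocks.

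\textbf{Step 2: correctness of the reduction.} By Lemma \ref{gadgetLemma}, after the $k$ selected gadget layers the only reachable node is $i_{\boldsymbol{\varepsilon}}$. Hence the whole rectangle is crossable iff the data layer contains an edge leaving $i_{\boldsymbol{\varepsilon}}$, that is, iff $n_{i}=i_{\boldsymbol{\varepsilon}}$ for some odd $i$. This is exactly membership in $LOC_{n}^{+}(X_{n},Y_{n})$. I also need to check two lengths:
\begin{itemize}
\item each $r_{j}$ has length at most $n$, which holds by the Remark on width $n^{1/4}$ rectangles;
\item the number of data edges, at most $2^{\lfloor\log n\rfloor}$, fits in the $\leq n$ bound on the data cofactor.
\end{itemize}
The second point is tight, and I would shrink $m$ if necessary. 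Once both checks pass, $\mathcal{M}$ solves $LOC_{n}(X_{n},Y_{n})$.

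\textbf{Main obstacle.} I expect the hardest step to be matching the two notions of crossing state. For $POS_{n}^{\ast}$ the cut is at the leftmost $\#$, whereas for $LOC_{n}$ it is at the leftmost $111$. These are different cells, and between them lies the block of the gadget's decorated cofactors. I would first try arguing that every computation passing the $111$ cell must also pass the leftmost $\#$, since the relevant data all lies to the right of that $\#$. This would bound the crossing states at the $111$ cell by a constant factor times those at the $\#$. I am not confident this holds as stated. A two-way automaton can oscillate between the two cuts while storing information in states that are never crossing states at the other cut. So this step, and more fundamentally the soundness of the counting argument behind the $PM$ lower bound, which only uses one-pass behaviour functions $f_{w,\mathcal{M}_{n}}$, are where I expect the argument to need real repair.
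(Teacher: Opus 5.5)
\textbf{Same route as the paper, with the key step left open.} Your plan matches the paper's proof. It uses the same gadget $B_n$ on the first $\lfloor\log n\rfloor/2$ cofactors, puts the data layer at index $\frac{\lfloor\log n\rfloor}{2}+1$ and a complete layer at $\frac{\lfloor\log n\rfloor}{2}+2$ so the decorations give the avoidance conditions, and then appeals to the $LOC_n$ corollary. As written, though, it is not a proof: the step ``$\mathcal M$ solves $LOC_n(X_n,Y_n)$ with no more crossing states'' is left unresolved. The paper's proof skips exactly this step. It asserts that a separator of $POS_n^{\ast}(\mathrm{RA})$ from $NEG_n^{\ast}(\mathrm{RA})$ solves $LOC_n(X_n,Y_n)$ and reads off $\Vert\mathcal M_n\Vert_c\ge (n/2)^{1/8}$, even though $\Vert\cdot\Vert_c$ is measured at the leftmost $\#$ in one problem and at the leftmost $111$ in the other.

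Your suggested patch (every pass over the $111$ cell also reaches the $\#$) does not hold in general. It is also unnecessary. You have mislocated what separates the two cuts. In your own layout the input is $\boldsymbol{\varepsilon}^{\oplus}111\,0^{\lfloor\log n\rfloor-2k-3}\#[1]_{\lfloor\log n\rfloor}\$r_1\#\cdots$. So between the leftmost $111$ and the leftmost $\#$ there is only the tail of the mask, a fixed string independent of the instance; the gadget blocks lie to the right of that $\#$, and $X_n$ must begin with those $0$'s.

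\textbf{The missing idea: charge the fixed segment to the left party.} You can do this in either of two ways.
\begin{itemize}
\item Modify the simulation in the lemma comparing $LOC_n$ with $PM_{n^{1/4}}$. While $\mathcal M_n$ walks through that segment, let $\mathcal N_n$ park on the last cell of $\boldsymbol\varepsilon$, with a counter in its state. Then $\mathcal N_n$ sits on its $\#$ exactly when $\mathcal M_n$ sits on the leftmost $\#$ of $Z$, and its crossing states are those of $\mathcal M_n$ there.
\item More cleanly, drop $PM$/$LOC$ and run the fooling argument at the $\#$ cut directly. Fix the gadget and the trailing layers. For $S\subseteq\{1,\ldots,2^k\}$, let $R_S$ be the right-hand part whose data layer has source set exactly $S$. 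By the gadget lemma, $\boldsymbol\varepsilon^{\oplus}111\,0\cdots0$ followed by $R_S$ lies in $POS_n^{\ast}(\mathrm{RA})$ iff $i_{\boldsymbol\varepsilon}\in S$. Hence distinct $S$ must induce distinct right-hand behaviours at the $\#$.
\end{itemize}

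\textbf{The one-pass behaviour functions are sound.} Your deeper worry about $f_{w,\mathcal M_n}$ is unfounded. For a fixed cut, an accepting run of a 2NFA is an alternation of left and right excursions glued at visits to the cut cell. If two right-hand parts allow the same (entry state, exit state) pairs, they can be swapped inside any accepting run. Nondeterminism is harmless, since only the existence of excursions matters.

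What does need repair is bookkeeping:
\begin{itemize}
\item The behaviour must also record from which entry states the machine can accept without returning. The paper's $f$ omits this, so the codomain should be $(2^{Q_c\cup\{\mathrm{acc}\}})^{Q_c}$.
\item The distinguishing index must be chosen in $S(w)\triangle S(u)$, not merely outside $S(w)\cap S(u)$.
\item The fooling family must consist of genuine elements of $POS_n^{\ast}\cup NEG_n^{\ast}$. As stated, $LOC_n$ allows up to $2^{\lfloor\log n\rfloor}$ unary blocks of length up to $2^k$, whose total length exceeds $n$, so a separator need not ``solve'' all of $LOC_n$. The problem is total length, not the number of edges. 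The family $\{R_S\}$ has data layers of length $O(2^{2k})=O(\sqrt n)$, which is fine.
\end{itemize}

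The count then gives $s(s+1)\ge 2^{k}$ for the number $s$ of crossing states, i.e.\ $\Omega(n^{1/8})$. That is what crossing-hardness needs. To get the literal constant $(n/2)^{1/8}$ for every $n\ge 2^{12}$, two more points need care. First, your placement of the data layer at index $\frac{\lfloor\log n\rfloor}{2}+1=2k+1$ tacitly requires $4\mid\lfloor\log n\rfloor$, as does the paper's proof. Second, the accept component costs a little in the count. In the direct argument both are absorbed by taking $k$ one or two larger, which the mask length and the length bound $n$ still permit for $n\ge 2^{12}$.
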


\begin{proof}
We show that, for all $n\geq 2^{12}$, 
\begin{equation*}
2ncc\left( POS_{n}^{\ast }\left( \text{RA}\right) ,NEG_{n}^{\ast }\left( 
\text{RA}\right) \right) \geq \left( \frac{n}{2}\right) ^{\frac{1}{8}}.
\end{equation*}

\begin{remark}
If $n\geq 2^{12},$ the inequality $\left\lfloor \frac{\log \left( n\right) }{%
4}\right\rfloor -3\geq 0$ holds.
\end{remark}

Let us suppose that $\left\lfloor \log \left( n\right) \right\rfloor $ is
divisible by $4.$

\begin{enumerate}
\item Let $\mathbf{\varepsilon }=\varepsilon _{1}\cdots \varepsilon _{\frac{%
\left\lfloor \log \left( n\right) \right\rfloor }{4}}\in \left\{ 0,1\right\}
^{\frac{\left\lfloor \log \left( n\right) \right\rfloor }{4}}$. Let $r_{%
\frac{\left\lfloor \log \left( n\right) \right\rfloor }{2}%
+2},...,r_{\left\lfloor \log \left( n\right) \right\rfloor }\in \left\{
0,1\right\} ^{\leq n}$ be words such that, for all $\frac{\left\lfloor \log
\left( n\right) \right\rfloor }{2}+1\leq 2\leq \left\lfloor \log \left(
n\right) \right\rfloor $, the string $r_{i}$ encodes a \textit{complete}
rectangle of depth $2$ and width $2^{\frac{\left\lfloor \log \left( n\right)
\right\rfloor }{4}}$ (i.e., $E\left( G_{i}\right) =\left\{ 1,\ldots ,2^{%
\frac{\left\lfloor \log \left( n\right) \right\rfloor }{4}}\right\} ^{2}$). 

\item Let $r_{1},...,r_{\frac{\left\lfloor \log \left( n\right)
\right\rfloor }{2}}$ be like in the construction of the gadget $B_{n}$, see
Definition \ref{gadget}.

\item Define:

\begin{itemize}
\item $X_{n}=1^{\frac{\left\lfloor \log \left( n\right) \right\rfloor }{4}%
-3}\#\left[ 1\right] _{\left\lfloor \log \left( n\right) \right\rfloor
}\$r_{1}\#\cdots \#\left[ \frac{\left\lfloor \log \left( n\right)
\right\rfloor }{2}\right] _{\left\lfloor \log \left( n\right) \right\rfloor
}\$r_{\frac{\left\lfloor \log \left( n\right) \right\rfloor }{2}},$ and

\item $Y_{n+1}=\left[ \frac{\left\lfloor \log \left( n\right) \right\rfloor 
}{2}+2\right] _{\left\lfloor \log \left( n\right) \right\rfloor }\$r_{\frac{%
\left\lfloor \log \left( n\right) \right\rfloor }{2}+2}\#\cdots \#\left[
\left\lfloor \log \left( n\right) \right\rfloor \right] _{\left\lfloor \log
\left( n\right) \right\rfloor }\$r_{\left\lfloor \log \left( n\right)
\right\rfloor }.$
\end{itemize}
\end{enumerate}

Notice that:

\begin{enumerate}
\item $\left[ \frac{\left\lfloor \log \left( n\right) \right\rfloor }{2}+1%
\right] _{\left\lfloor \log \left( n\right) \right\rfloor }=11\left(
10\right) ^{\frac{\left\lfloor \log \left( n\right) \right\rfloor }{2}-1}.$

\item $\left[ \frac{\left\lfloor \log \left( n\right) \right\rfloor }{2}+2%
\right] _{\left\lfloor \log \left( n\right) \right\rfloor }=1111\left(
10\right) ^{\frac{\left\lfloor \log \left( n\right) \right\rfloor }{2}-2}.$

\item Let $n_{1},\ldots ,n_{m}\leq 2^{\frac{\left\lfloor \log \left(
n\right) \right\rfloor }{4}}$ and $m\leq 2^{\log \left( n\right) }.$ Let 
\begin{equation*}
Z=\mathbf{\varepsilon }^{\oplus }111X_{n}\#11\left( 10\right) ^{\ast }\$%
\mathbf{1}^{n_{1}}\mathbf{0\cdots 01}^{n_{m}}\$\#1111\left( 10\right) ^{\ast
}\$Y_{n}.
\end{equation*}
It follows, from Proposition \ref{proposicion tecnica boba}, that $%
\#11\left( 10\right) ^{\ast }\$$ does not occur in $X_{n}$ and that $%
\#1111\left( 10\right) ^{\ast }\$$ does not occur in $Y_{n}$. Thus, we have
that the promise problem $LOC_{n}\left( X_{n},Y_{n}\right) $ is well defined
and that $Z$ is an instance of $LOC_{n}\left( X_{n},Y_{n}\right) $.

\item Let 
\begin{equation*}
Z=\mathbf{\varepsilon }^{\oplus }111X_{n}\#11\left( 10\right) ^{\ast }\$%
\mathbf{1}^{n_{1}}\mathbf{0\cdots 01}^{n_{m}}\#1111\left( 10\right) ^{\ast
}\$Y_{n}
\end{equation*}%
be an instance of $LOC_{n}\left( X_{n},Y_{n}\right) $. Suppose $\mathbf{%
\varepsilon }\neq 0^{\ast }$. If $m$ is even, we set $Z^{\ast }=Z.$ If $m$
is odd, we set:%
\begin{equation*}
Z^{\ast }=\mathbf{\varepsilon }^{\oplus }111X_{n}\#11\left( 10\right) ^{\ast
}\$\mathbf{1}^{n_{1}}\mathbf{0\cdots 01}^{n_{m}}\mathbf{01}\#1111\left(
10\right) ^{\ast }\$Y_{n}.
\end{equation*}%
We notice that $Z$ is a positive instance of $LOC_{n}\left(
X_{n},Y_{n}\right) $ if and only if $Z^{\ast }$ is a positive instance of
this promise problem. 

\item Let 
\begin{equation*}
Z=\mathbf{\varepsilon }^{\oplus }111X_{n}\#11\left( 10\right) ^{\ast }\$%
\mathbf{1}^{n_{1}}\mathbf{0\cdots 01}^{n_{m}}\#1111\left( 10\right) ^{\ast
}\$Y_{n}
\end{equation*}%
be an instance of $LOC_{n}\left( X_{n},Y_{n}\right) $. We can assume that $m$
is even. We assume this. Then, $Z$ is an element of $POS_{n}^{\ast }\left( 
\text{RA}\right) \cup NEG_{n}^{\ast }\left( \text{RA}\right) $. 

\item Let $Z$ be like in the previous item (i.e., $m$ is even). It follows
from Lemma \ref{gadgetLemma} that $Z$\ belongs to $POS_{n}^{\ast }\left( 
\text{RA}\right) $\ if and only if there exists an odd $i\leq m$ such that $%
\mathbf{\varepsilon }=\left\langle n_{i}\right\rangle _{\frac{\left\lfloor
\log \left( n\right) \right\rfloor }{4}}$. 

\item Let $Z$ be like in the previous two items. We have that $Z$ is a
positive instance of $LOC_{n}\left( X_{n},Y_{n}\right) $ if and only if $Z$
belongs to $POS_{n}^{\ast }\left( \text{RA}\right) $.
\end{enumerate}

Then, if $\mathcal{M}_{n}$ is a 2NFA that separates $POS_{n}^{\ast }\left( 
\text{RA}\right) $ from $NEG_{n}^{\ast }\left( \text{RA}\right) $ this
automaton solves the promise problem $LOC_{n}\left( X_{n},Y_{n}\right) $.
The following inequalities hold:%
\begin{equation*}
\left\Vert \mathcal{M}_{n}\right\Vert _{c}\geq \left( 2^{\frac{\left\lfloor
\log \left( n\right) \right\rfloor }{4}}\right) ^{\frac{1}{2}}\geq \left( 
\frac{n}{2}\right) ^{\frac{1}{8}}.
\end{equation*}%
The theorem is proved.
\end{proof}

\begin{corollary}
The following assertions hold:

\begin{enumerate}
\item The language RA is crossing-hard.

\item The language RA is hard-to-cross.

\item Let $\Sigma =\left\{ 0,1,\$\right\} $. For all $d<\frac{1}{8}$, all $%
\gamma >0$, and all nondeterministic pebble automaton accepting the language
RA$_{k}$, the following inequality holds asymptotically:%
\begin{equation*}
H\left( X_{\mathcal{A}_{k}}\left( \mathcal{H}_{k}^{\Sigma },n\right) \right)
\geq k\cdot d\cdot \left( 1-\gamma \right) \cdot \log \left( n\right) .
\end{equation*}

\item The sequence $\left\{ \text{RA}_{k}\right\} _{k\geq 0}$ is high in the
nondeterministic pebble hierarchy and the language RA$_{\infty }=L_{G_{0}}$
belongs to $\mathcal{NL}-\log \mathcal{CFL}.$

\item The separations $\mathcal{NL}\neq \log \mathcal{CFL}$ and $\mathcal{%
NL\neq P}$ hold.
\end{enumerate}
\end{corollary}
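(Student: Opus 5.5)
The plan is to chain the results already established, one item at a time.

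For item 1, Theorem \ref{2DFAs} gives $2ncc(\mathrm{RA},n)\geq (n/2)^{1/8}\geq 2^{-1/8}n^{1/8}$ for $n\geq 2^{12}$. For the finitely many smaller $n$, $2ncc(\mathrm{RA},n)\geq 1$, so we can choose $c$ small and $d=\frac18$ so that $2ncc(\mathrm{RA},n)>c\cdot n^{d}$ for every $n\geq 1$. This is exactly crossing-hardness.

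For item 2, apply the corollary preceding the definition of crossing-hard with $e=2^{-1/8}$ and $d=\frac18$. It yields, for every $c<\frac18$, every $\mathcal{A}_1$ separating $\mathrm{RA}_1\cap\mathcal{H}_1^{\Sigma}$ from $co$-$\mathrm{RA}_1\cap\mathcal{H}_1^{\Sigma}$, and every $\gamma>0$, the asymptotic bound $H(Y_{\mathcal{A}_1}(\mathcal{H}_1^{\Sigma},n))\geq c(1-\gamma)\log(n)$. That is the definition of hard-to-cross.

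For item 3, run the induction of Corollary \ref{corolarioInductivo} with this $d<\frac18$. Theorem \ref{NewProof} splits $H(X_{\mathcal{A}_{k+1}})$ into $H(X_{\mathcal{A}_k})+H(Y_{\mathcal{A}_1})$. The change of length parameter from $n$ to $3\lfloor\log(n)\rfloor(n+1)$ is absorbed by the remark on polylogarithmically many factors.

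For item 4, combine item 3 with the upper bound $H(X_{\mathcal{A}}(\Sigma^\ast,n))\leq (m+1)\log(n)$ for $m$-pebble automata. Then $\mathrm{RA}_k\notin\mathcal{NREG}_m$ once $kd(1-\gamma)>m+1$, so the sequence is high. The Lemma of the introduction together with the corollary that $L_{G_0}$ bounds every sequence in $\mathcal{CFL}$ then gives $L_{G_0}\notin\mathcal{NL}$. Note that the item as written says ``$\mathcal{NL}-\log\mathcal{CFL}$''; what the argument delivers is $L_{G_0}\in\log\mathcal{CFL}-\mathcal{NL}$, and I would state it in that corrected form.

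For item 5, use $\mathcal{NL}\subseteq\log\mathcal{CFL}\subseteq\mathcal{P}$ from the first theorem.

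The main obstacle, and the point where I expect the plan to fail as it stands, is the decomposition in Theorem \ref{NewProof}, which the induction in item 3 depends on. Its proof assumes that an arbitrary pebble automaton $\mathcal{A}_{k+1}$ accepting $L_{k+1}$ is structured as a ``local subroutine'' $\mathcal{A}_k$ plus a ``global subroutine.'' It further assumes that the conditional entropy given the local configuration equals the entropy of a separator for $L_1$. Neither holds for a general automaton: pebbles may be shared between the local and global tasks, and positions on different local factors may be reused. Moreover, entropy of a uniform variable over a union of visited configurations does not add in this way. Without a genuine direct-sum theorem there, item 3 does not follow from items 1--2, however carefully items 1--2 are established.

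Two further steps also need real work before I would trust them:

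\begin{enumerate}
\item The pattern-matching lower bound counts only a single left-to-right/right-to-left crossing function $f_{w,\mathcal{M}_n}$. A 2NFA may cross many times, so its behaviour is not captured by one map $Q_c\to 2^{Q_c}$. A correct bound should use the full crossing relation: the set of pairs (outgoing state, returning state) realisable on the right-hand part, considered over arbitrarily many crossings.
\item The reduction in Theorem \ref{2DFAs} uses the mask bits $\boldsymbol{\varepsilon}^{\oplus}$ and the fixed strings $X_n,Y_n$, but it counts positions inconsistently. It fixes the gadget, which uses $2\lfloor\log(n)/4\rfloor=\lfloor\log(n)\rfloor/2$ co-factors, yet the mask it needs runs over all $\lfloor\log(n)\rfloor$ co-factors. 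The corresponding mask bits must be fixed, not free.
\end{enumerate}

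I would attempt the direct-sum step first, since without it items 3--5 cannot follow.
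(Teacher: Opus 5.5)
Your chain is the route the paper intends (it gives no separate proof of this corollary), and your diagnosis of Theorem~\ref{NewProof} is right. That proof postulates that an arbitrary $\mathcal{A}_{k+1}$ factors into a local subroutine for $L_k$ plus a global one. It then treats the marginal $\mathcal{C}_k$ of a uniform variable as uniform, and identifies $H(X\mid\mathcal{C}_k)$, an average of fibre entropies, with the entropy of a uniform variable on a set. Nothing there proves the direct-sum (or even super-additivity) statement item 3 needs, so your proposal, like the paper, supports at most items 1--2.

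There is also a second, independent break that you passed over in item 4. The step ``$L_{G_0}$ is an upper bound for $\{\mathrm{RA}_k\}$'' needs every $\mathrm{RA}_k$ to be context-free. That rests on Theorem~\ref{Context-free}, whose pushdown automaton must certify $w_i\notin T$ before resting on $r_i$, and a nondeterministic PDA cannot do that. In fact already $\mathrm{RA}_2\notin\mathcal{CFL}$, and the same quotients handle every $k\geq 2$:
\begin{itemize}
\item With $u=1\#_1 101\$\in\mathrm{RA}_1$, the string $u\#_2 101\$\#_2 v\#_2\$$ lies in $\mathrm{RA}_2$ iff $v\notin\mathrm{RA}_1$, since the masked concatenation is $101\$$ or $101\$\$$.
\item $1\#_1 r\notin\mathrm{RA}_1$ iff $r\notin\mathrm{RA}$.
\item $\mathcal{CFL}$ is closed under quotients, intersection with regular sets and gsm maps. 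Applied to the words $101^b0101^d\$1^e01\$$, this would make $\{x^by^dz^e: e\neq b,\ e\neq d\}$ context-free. Ogden's lemma refutes that: mark the $z$'s in $x^{p+p!}y^{p+2p!}z^p$.
\end{itemize}
Moreover, every $\mathrm{RA}_k$ lies in $\mathcal{NL}$ by Immerman--Szelepcs\'enyi. Strict hierarchies for two-way multihead automata already give high sequences inside $\mathcal{NL}$, so highness of $\{\mathrm{RA}_k\}$ alone says nothing about $\mathcal{NL}$ versus $\log\mathcal{CFL}$. Even a genuine direct-sum theorem would therefore not deliver items 4--5 for this sequence.

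Your two secondary concerns deserve different verdicts.

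The first is a misdiagnosis. The one-excursion relation $q\in f_{w,\mathcal{M}_n}(p)$ does account for arbitrarily many crossings, because an accepting run splits into right excursions that can be exchanged one at a time (cut and paste). What the paper actually omits is the set of $p$ from which an excursion can accept without returning. It should also take $j\in S(w)\mathbin{\triangle}S(u)$ rather than $j\notin S(w)\cap S(u)$. With these repairs a bound of order $\sqrt{n}$ stands.

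The second is right in substance, but it is repairable bookkeeping. The mask $\boldsymbol{\varepsilon}^{\oplus}111\,1^{\lfloor\log(n)\rfloor/4-3}$ has length $3\lfloor\log(n)\rfloor/4$ instead of $\lfloor\log(n)\rfloor$. The lower bound for $LOC_n$ concerns states at the leftmost $111$, not at the leftmost $\#$. After repair, $2ncc(\mathrm{RA},n)=\Omega(n^{1/8})$ is plausible, so items 1--2 are not where the argument fails.

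One small point in your item 1: for $n=1$ there is no $\#$ at all, so $2ncc(\mathrm{RA},1)=0$ and your ``$\geq 1$ for small $n$'' fails. Crossing-hardness has to be read asymptotically.
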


\textbf{Acknowledgement. }The author thanks Centro de Investigaciones
Antonio S\'{a}nchez de Cozar, vereda los Egidos, San Gil, Colombia. Most of
this work was written while the author was visiting this charming research
center in Santander mountains. Dedicated to Octavio, Marina, Carolina, the
children, and the dogs.

\end{document}